\documentclass[12pt]{iopart}


\usepackage{amssymb}
\usepackage{amsmath} 	
\usepackage{amsfonts,amsthm} 
\usepackage{amsthm} 
\usepackage{mathtools}
\usepackage{graphicx}

\usepackage{amsfonts,amsthm}
\usepackage{bm,bbm}
\usepackage{physics}
\usepackage{mathtools}
\usepackage{xcolor}
\usepackage{hyperref}
\usepackage{tensor}
\usepackage{subcaption} 
\captionsetup{compatibility=false}
\usepackage{soul}
\usepackage{booktabs}
\usepackage{qcircuit}
\usepackage{tikz}

\usepackage{cite}





\raggedbottom

%




\def\fC{\mathbb{C}}
\def\fR{\mathbb{R}}

\def\ra{\rangle}
\def\la{\langle}


\def\id{\mathbb{I}}
\def\H{\mathcal{H}}
\def\idA{\vec{\mathbb{I}}_A{} }
\def\idD{\mathbb{I}_D{} }

\def\ra{\rangle}
\def\la{\langle}

\newtheorem{thm}{Theorem}[section]

\newtheorem{lem}[thm]{Lemma}
\newtheorem{cor}[thm]{Corollary}

\theoremstyle{definition}
\newtheorem{defn}[thm]{Definition}

\newtheorem{rmk}[thm]{Remark}

\definecolor{darkgreen}{rgb}{0,.5,0}

\newcommand{\bigzero}{\mbox{\normalfont\Large\bfseries 0}}
\newcommand{\rvline}{\hspace*{-\arraycolsep}\vline\hspace*{-\arraycolsep}}

\eqnobysec


\newcounter{mnotecount}[section]

\begin{document}

\title[]{\centering{Tristochastic operations \\
and convolutions of quantum states}}

\author{Rafał Bistro\'{n}$^{1,2}$, Wojciech \'Smia\l ek$^{3,4}$ and Karol {\.Z}yczkowski$^{1,3}$}

\address{$^1$ Faculty of Physics, Astronomy and Applied Computer Science, Institute of Theoretical Physics, Jagiellonian University, ul. {\L}ojasiewicza 11, 30--348 Krak\'ow, Poland}
\address{$^2$ Doctoral School of Exact and Natural Sciences, Jagiellonian University}
\address{$^3$ Center for Theoretical Physics, Polish Academy of Sciences, Al. Lotnik\'{o}w 32/46, 02-668 Warszawa, Poland}
\address{$^4$ Faculty of Physics, University of Warsaw, ul. Pasteura 5, 02-093 Warszawa, Poland}
\eads{\mailto{rafal.bistron@uj.edu.pl},  \mailto{w.smialek@student.uw.edu.pl},  \mailto{karol.zyczkowski@uj.edu.pl}}
\vspace{10pt}
\begin{indented}
\item[]May 2023
\end{indented}

\begin{abstract}
The notion of convolution of two probability vectors, corresponding to a 
coincidence experiment can be extended for a family of binary operations determined by (tri)stochastic tensors, to describe Markov chains of a higher order. 
 
The problem of associativity, commutativity and the existence of neutral elements and inverses is analyzed for such operations.
For a more general setup of multi-stochastic tensors, we present the characterization of their probability eigenvectors. Similar results are obtained for the quantum case: we analyze tristochastic channels, which induce binary operations defined in the space of quantum states. 
 
Studying coherifications of tristochastic tensors we propose a quantum analogue of the convolution of probability vectors defined for two arbitrary density matrices of the same size. Possible applications of this notion to construct schemes of error mitigation or building blocks in quantum convolutional neural networks are discussed.
\end{abstract}

\noindent{\it Keywords\/}:
higher order Markov chains, multi-stochastic tensors, tristochastic channels, convolution of quantum states

\maketitle

\section{Introduction}
\label{sec:intro}

Stochastic maps acting on the space of probability vectors are the main tools to describe discrete dynamics in a set of classical probabilistic states. A special, and most widely studied case of such evolution is called a Markov chain and (in the finite scenario) can be characterized by a stochastic matrix $M$, satisfying $M_{ij} \geq 0$ $\sum_i M_{ij} = 1$, for any value of $j$.

An important subcase are bistochastic matrices, also called doubly stochastic, for which the sums of entries in each row and each column is equal to one. 
For this class, many interesting results have been obtained,
 most notably the Birkhoff-von Neumann theorem \cite{Birkhoff_thm}.
Bistochastic matrices, which enjoy interesting spectral properties \cite{bistoch_spectrum}, have been applied 
to various mathematical problems \cite{bistochastic_app, bistrochastic_sperctum}.

Further generalization of these concepts leads us to the idea of cubic stochastic tensors, corresponding to bilinear (or multilinear) stochastic maps in the space of probability vectors. Such construction naturally emerges primarily in the higher-order Markov chains \cite{Limit_points_paper1}.
Any such bilinear map can also be viewed as a product of two probability vectors written $\vec{p} \star_A \vec{q}$ 
 \begin{equation}
\label{classical_convolution_intro}
     (p \star_A q)_i = \sum_{j,k} A_{ijk} p_j q_k~.
 \end{equation}
induced by a stochastic cubic tensor $A$ satisfying $A_{ijk}\geq 0$ and $\sum_i A_{i,j,k} = 1$.

This class of classical channels is very general, hence many important problems for stochastic tensors are unsolvable, or solvable only numerically. The noteworthy example is a problem of finding the generalized eigenvectors of $A$, which resulted in numerous numerical algorithms for finding them \cite{ Limit_points_paper1, Li_14_stationarity, Hu_13_stationarity, Liu_19_stationarity, Yu_20_stationarity}.

We may, however, introduce additional requirements for bistochasticity or tristochasticity of a tensor $A$, requiring the sum of its elements in any of three possible indices to be equal to one. More specific class consists of permutation tensors, which are tristochastic tensors with entries being either $0$ or $1$.
The simplest example of bilinear operation \eqref{classical_convolution_intro} with $A$ being a permutation tensor is the standard convolution between two probability vectors.

Operations induced by these tensors with a maximal degree of stochasticity will be more closely examined in this work.
Naturally arising questions concern the  structure of the set of tristochastic tensors and their generalized eigenvectors.

The main goal of the paper is to analyze quantum analogues of tristochastic tensors and to introduce a convolution of quantum states. 
As it turns out, such an analogy will allow us to translate the results obtained for probability vectors into the framework of density matrices.

As it was noted by Lomont \cite{Lomont}, there is no proper ”operation of convolution”, which for arbitrary two pure states produces a pure state as an outcome.
However, we provide here a construction of quantum convolution of arbitrary density matrices of the same size, which results in a legitimate density matrix. Although some constructions of quantum convolution of density matrices \cite{Aniello_2019} and quantum channels \cite{Sohail_2022} have already been proposed, they usually correspond to a small subset of operations, compared to those that might find useful applications.
Hence in this work we propose, as an analogue of the classical map \eqref{classical_convolution_intro}, 
an original definition of convolution for density matrices as a quantum channel, $\Omega_N \otimes \Omega_N \to \Omega_N$, where $\Omega_N$ denotes the set of density matrices of size $N$. 
Such a definition gives a direct analogue between tristochastic tensors and the proposed quantum tristochastic channels, which allows us to translate several properties of the classical convolution into the quantum domain, such as generalized eigenvectors or neutral elements of the convolution.

Following this analogy, the most natural idea for applying quantum convolution would be to interpret it as a quantum analogue of a coincidence experiment, by the same token as the standard convolution of two probability vectors. 

To provide further motivation for this work, consider an important application of convolution of classical probability vectors: convolutional neural networks 
\cite{Neocognitron}. A quantum version of convolutional neural networks has been recently proposed and proven to be useful in classification tasks, such as determining the phase in spin chains phase transitions, see \cite{QCNN1, QCNN2, QCNN3, Hartmann} and references therein. From this perspective, our work can be understood as an attempt to study the building blocks of quantum convolutional neural networks.
We would like to note here, that the structure of the proposed convolution corresponds to the joint action of convolution and pooling in these networks. 

Another idea \cite{QCNN1} would be to use an "inverse" of quantum convolution to encode some quantum state from $\H$ into greater Hilbert space to make it more robust against correlated noise, and then to decode information to obtain the original state.

The structure of this paper is as follows. In Section \ref{sec:classical}  we analyze the notion of classical tristochastic tensors and classical convolutions. We characterize and describe the generalized eigenvectors \cite{Limit_points_paper1} of  a multi-stochastic tensor, the neutral elements of the convolution and its possible inverses. Analogous results for multi-stochastic quantum channels are established in Section \ref{sec:quantum} and a broader class of quantum convolutions based on coherifications of tristochastic tensors are discussed therein.
In Section \ref{sec:coch} an explicit construction for such a coherification for permutation tensors is presented. Finally, in Section \ref{sec:Qconvolution} we discuss the construction of quantum convolution using quantum circuits and investigate the properties of quantum convolution of arbitrary two states of a single-qubit system.

Appendix \ref{app:on_chanels} contains a note on quantum channels of the form $\mathcal{H} \otimes \mathcal{H} \rightarrow \mathcal{H}$, whereas the Appendix \ref{app:proofs} includes some calculations and technical proofs, for the sake of clarity not provided in the main body of the paper.
Appendix \ref{app:explicit_form} serves as a short recipe, explaining how to construct a quantum convolution.

\newpage
\section{Convolutions of classical probability vectors}\label{sec:classical}

Consider a set $\Delta_N$ of N-dimensional probability vectors:
\begin{equation}
    \Delta_{N} := \left\{ \vec{p}\in \fR^N \ : \ \forall_i \ p_i \geq 0, \ \sum_i p_i = 1 \right\}~,
    \label{prob_simplex}
\end{equation}
which characterize any discrete random variable.
The set of such vectors forms an $N-1$ dimensional simplex $\Delta_N \in \fR^N$, whose vertices are standard unit vectors in $\fR^N$.
Linear transformations of probability vectors are determined by a stochastic matrix $M$,
\begin{equation}
p'_i = \sum_{j = 1}^N M_{ij} p_j~,
\end{equation}
with nonnegative entries, and a fixed sum of entries in each column,
\begin{equation}
    M_{ij} \geq 0, \quad 
    \sum_{j = 1}^N M_{ij} = 1~.
    \label{stoch_matrix}
\end{equation}
These matrices of order $N$ describe discrete dynamics in the set $\Delta_N$ of classical probabilistic states, that depend only on the current state of the system. Such processes are famously known as \textit{Markov chains}.

In the space of stochastic matrices satisfying \eqref{stoch_matrix},
one distinguishes the subset of bistochastic matrices, which additionally fulfil the dual condition for the sum in each row,
\begin{equation}
    B_{ij} \geq 0, \quad \sum_{i = 1}^N B_{ij} = \sum_{j = 1}^N B_{ij} = 1 ~.
\end{equation}
Bistochastic matrices have several interesting mathematical properties \cite{Birkhoff_thm, bistoch_spectrum}, and they are often applied to problems in physics and the theory of communication \cite{bistochastic_app}. 

A natural next step is to generalize these concepts and introduce a class of bilinear operations on probability vectors, determined by  cubic stochastic tensor $A$, 

\begin{equation}
\label{vec_conv}
    r_i = (p\star_A q)_i = \sum_{j,k = 1}^N A_{i jk} p_j q_k~,
\end{equation}
which describes the evolution of a second order Markov chain \cite{Limit_points_paper1}.
Moreover, any such operation describes a certain generalized coincidence experiment. In particular, this class includes the convolution and correlation of probability vectors.

The counterpart to bistochasticity for cubic tensor can be defined as:
\begin{defn}
    Rank three tensor $A$ is called \textit{tristochastic} if all of its entries are nonnegative and satisfy the following three conditions:
    \begin{equation}
    \label{tristoch_prob_eq}
        \sum_{i = 1}^N A_{i j k} = \sum_{j = 1}^N A_{i j k} = \sum_{k = 1}^N A_{i j k} = 1~.
        \end{equation}
\end{defn}
A special subclass of tristochastic tensors is formed by permutation tensors.
\begin{defn}
    \textit{Permutation tensor} $A$ is a tristochastic tensor with all of the entries equal to either 0 or 1.
\end{defn}

An example of a tristochastic permutation tensor for $N = 3$ is provided below.

\begin{equation}
\label{example_permutation_tensor}
T_3 = (\id, P_3, P_3^2)
= \left(\begin{matrix}
1 & 0 & 0\\
0 & 1 & 0\\
0 & 0 & 1\\
\end{matrix}\right.
\left|\begin{matrix}
0 & 1 & 0\\
0 & 0 & 1\\
1 & 0 & 0\\
\end{matrix}\right.
\left|\begin{matrix}
0 & 0 & 1\\
1 & 0 & 0\\
0 & 1 & 0\\
\end{matrix}\right)~,
\end{equation}
where $P_n$ denotes the cyclic matrix of order $N$.
It is easy to see that such a tensor consists of $N^2$ entries equal to $1$ and $N^2(N - 1)$ entries equal to $0$.
Tristochastic tensors lead to a generalization of the convolution of discrete probability vectors.
It is worth investigating, which properties of bistochastic matrices generalize to higher-rank tensors with a maximal degree of stochasticity. To make our arguments  general, we shall use the following notion for the tensors of rank $m$.

\begin{defn}
A tensor $A_{i_1,\cdots,i_m}$ is called a $m-$stochastic tensor, if all entries of $A$ are nonnegative, and for any index $i_k$ one has $\sum_{i_k} A_{i_1, \cdots,i_m} = 1$
\end{defn}

To not confuse the operation defined by $m-$stochastic tensor with multiple actions of tristochastic channel, we denote the former by $A[\vec{p}_1, \vec{p}_2,\cdots]$.

\subsection{Associavity and commutativity of stochastic product}

For a generic stochastic tensor, the induced product will be neither commutative nor associative. For certain classes of stochastic tensors, however, the product may acquire these important properties. Below we present the necessary and sufficient conditions for these properties.

\begin{thm}\label{commute}
Let $A$ be an N-dimensional cubic tensor satisfying conditions
(\ref{tristoch_prob_eq}) defining a stochastic map
$(\vec{p} \star_A \vec{q})_i = \sum_{j,k} A_{ijk} p_j q_k$, where $p,q$ are N-dimensional probability vectors. 
The operation $\star_A$ is commutative if and only if the tensor $A_{ijk}$ is symmetric with respect to the exchange of the last two indices: $\forall_i A_{ijk} = A_{ikj}$. Moreover, $\star_A$ is associative if and only if $\sum_i A_{ail} A_{ijk} = \sum_i A_{aji} A_{ikl}$.
\end{thm}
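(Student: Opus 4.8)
The plan is to reduce both equivalences to coefficient-wise identities by exploiting the fact that the vertices of the simplex $\Delta_N$, the standard basis vectors $\vec{e}_a$, are themselves probability vectors. Since $\star_A$ is bilinear, any identity of the form $\sum_{j,k} B_{ijk}\, p_j q_k = 0$ holding for all $\vec{p},\vec{q}\in\Delta_N$ must in particular hold for $\vec{p}=\vec{e}_b$, $\vec{q}=\vec{e}_c$, which isolates $B_{ibc}=0$; conversely, vanishing coefficients give a vanishing form. Thus the analytic statements ``commutative/associative on the simplex'' are equivalent to purely algebraic conditions on the entries of $A$, and in each case the ``if'' direction is immediate from substitution.

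For commutativity, I would first rewrite $(\vec{q}\star_A\vec{p})_i=\sum_{j,k}A_{ijk}\,q_j p_k$ by swapping the dummy labels $j\leftrightarrow k$, obtaining $\sum_{j,k}A_{ikj}\,p_j q_k$. Subtracting from $(\vec{p}\star_A\vec{q})_i=\sum_{j,k}A_{ijk}\,p_j q_k$ gives
\begin{equation}
(\vec{p}\star_A\vec{q})_i-(\vec{q}\star_A\vec{p})_i=\sum_{j,k}(A_{ijk}-A_{ikj})\,p_j q_k~.
\end{equation}
Evaluating on $\vec{p}=\vec{e}_j$, $\vec{q}=\vec{e}_k$ shows this vanishes for all probability vectors exactly when $A_{ijk}=A_{ikj}$ for every $i$, which is the claimed symmetry in the last two indices.

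For associativity, I would expand the two iterated products as trilinear forms in $(\vec{p},\vec{q},\vec{r})$. Writing $\vec{u}=\vec{p}\star_A\vec{q}$ with $u_a=\sum_{j,k}A_{ajk}\,p_j q_k$ gives
\begin{equation}
\big((\vec{p}\star_A\vec{q})\star_A\vec{r}\big)_i=\sum_{a,l}A_{ial}\,u_a r_l=\sum_{a,j,k,l}A_{ial}A_{ajk}\,p_j q_k r_l~,
\end{equation}
while $\vec{v}=\vec{q}\star_A\vec{r}$ with $v_a=\sum_{k,l}A_{akl}\,q_k r_l$ yields
\begin{equation}
\big(\vec{p}\star_A(\vec{q}\star_A\vec{r})\big)_i=\sum_{j,a}A_{ija}\,p_j v_a=\sum_{a,j,k,l}A_{ija}A_{akl}\,p_j q_k r_l~.
\end{equation}
Evaluating on $\vec{p}=\vec{e}_j$, $\vec{q}=\vec{e}_k$, $\vec{r}=\vec{e}_l$ isolates the condition $\sum_a A_{ial}A_{ajk}=\sum_a A_{ija}A_{akl}$ for all $i,j,k,l$. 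The last step is to match this with the form stated in the theorem: the transposition of index names $i\leftrightarrow a$ (turning the summed index into $i$ and the free index into $a$) converts it into $\sum_i A_{ail}A_{ijk}=\sum_i A_{aji}A_{ikl}$.

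I expect no conceptual obstacle: both parts rest on the single observation that a multilinear map on $\Delta_N$ is determined by its values on the vertices. The only place demanding care is the index bookkeeping in the associativity computation — tracking which indices are contracted and performing the dummy/free relabelling correctly so that the derived identity coincides with the stated one. I would also note that the tristochastic conditions \eqref{tristoch_prob_eq} are not actually needed for these algebraic equivalences; they serve only to guarantee that $\star_A$ maps $\Delta_N\times\Delta_N$ into $\Delta_N$, so that the operation whose commutativity and associativity we analyse is well defined.
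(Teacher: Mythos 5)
Your proposal is correct and follows essentially the same route as the paper: equate the two multilinear forms, relabel dummy indices, and conclude coefficient-wise equality because the identity must hold for arbitrary probability vectors (the paper leaves the associativity computation implicit with "by the same token," while you carry it out and verify the index relabelling). Your explicit use of the simplex vertices $\vec{e}_b,\vec{e}_c$ to isolate coefficients, and the remark that the tristochasticity conditions are not needed for the equivalence itself, are minor elaborations rather than a different method.
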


\begin{proof}
Condition of commutativity of the product $\star_A$ is equivalent to the following,
\begin{equation}
\sum_{j,k=1}^N A_{ijk} p_j q_k = \sum_{j,k=1}^N A_{ijk} q_j p_k = \sum_{j,k=1}^N A_{ikj} p_j q_k~.
\end{equation}
As the condition needs to be satisfied for any input vectors, the equality holds if and only if $A_{ijk} = A_{ikj}$. By the same token one may prove the associativity.
\end{proof}

The simplest example of a tristochastic tensor inducing an associative convolution reads $A_{ijk} = r_{i - j - k}$, where $\vec{r}$ is a probability vector. More involved example is given by a "classical analogue" of the convolution proposed in \cite{Aniello_2019},

\begin{equation}
\label{r_conv_classical}
    \vec{p} \star_{\vec{r}} \vec{q} := \frac{1}{(N -1)!} \sum_{\sigma \in \Sigma(N)} (\vec{p}\cdot P_{\sigma} \vec{r}) P_{\sigma}^{-1} \vec{q}~,
\end{equation}
where $\Sigma(N)$ is a set of all permutations of elements from $\{1,\cdots,N\}$ and $P_{\sigma}$ is a matrix representing a permutation $\sigma$. The associativity is straightforward to prove by direct calculations. To establish the tristochasticity of \eqref{r_conv_classical} it is sufficient to notice that conditions \eqref{tristoch_prob_eq} for tristochasticity  are equivalent to the following statement: For any probability vector $p$ one has

\begin{equation}
\label{alternative_tristochasticity1}
\vec{e} \star_{\vec{r}} \vec{p} = \vec{p} \star_{\vec{r}} \vec{e} = \vec{e}~,
\end{equation}
see Lemma \ref{lem_mstoch1}.
Here $\vec{e} = (\frac{1}{N},\frac{1}{N},\cdots )$ denotes the maximally mixed probability vector. These conditions are  easy to verify,

\begin{equation}
\label{alternative_tristochasticity_proof}
\begin{aligned}
& \vec{e} \star_{\vec{r}} \vec{p} =  \frac{1}{(N-1)!} \sum_{\sigma \in \Sigma(N)} (\vec{e}\cdot P_{\sigma} \vec{r})~ P_{\sigma}^{-1} \vec{q} = \frac{1}{(N-1)!} \sum_{\sigma \in \Sigma(N)} \frac{1}{N} P_{\sigma}^{-1} \vec{q} = \frac{1}{N!} \sum_{\sigma \in \Sigma(N)} P_{\sigma}^{-1} \vec{q} = \vec{e}~, \\
& \vec{p} \star_{\vec{r}} \vec{e} =  \frac{1}{(N-1)!} \sum_{\sigma \in \Sigma(N)} (\vec{p}\cdot P_{\sigma} \vec{r})~ P_{\sigma}^{-1} \vec{e} = \vec{e} ~~\frac{1}{(N-1)!} \sum_{\sigma \in \Sigma(N)} (\vec{p}\cdot P_{\sigma} \vec{r}) = \vec{e}~. \\
\end{aligned}    
\end{equation}

\subsection{Fixed point of convolution and eigenvectors of multi-stochastic tensors}

In this subsection, we resolve the issue of generalized eigenvectors for multi-stochastic tensors. Firstly, we focus on the map $p \to A[p,p,\cdots]$, since all of its fixed points are by definition generalized eigenvectors of $A$. Then using the reducibility of multi-stochastic tensors we present an explicit description of all probability eigenvectors of $A$ by its tristochastic subtensors.

Let us start by generalizing the invariance property of bistochastic matrices.

\begin{lem}\label{lem_mstoch1}
Let $A$ be an arbitrary $m$-stochastic channel. Then for any sequence of probability vectors $\{\vec{p}_2, \cdots ,\vec{p_m}\}$ one of which is $\vec{e}$, the following equality holds:
\begin{equation}
\label{multi-stochastic_e}
A[\vec{p}_2, \cdots, \vec{e}, \cdots, \vec{p}_m ] = \vec{e}
\end{equation}
\end{lem}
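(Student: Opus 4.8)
The plan is to verify the identity componentwise by a direct contraction, isolating the single index along which the maximally mixed vector $\vec{e}$ is inserted and using the stochasticity of $A$ in exactly that direction. First I would fix an output component $i_1$ and write out
\begin{equation}
A[\vec{p}_2, \cdots, \vec{e}, \cdots, \vec{p}_m]_{i_1} = \sum_{i_2, \cdots, i_m} A_{i_1 i_2 \cdots i_m}\, (p_2)_{i_2} \cdots (p_m)_{i_m}~,
\end{equation}
where the distinguished factor, say the $\ell$-th one for some $2 \le \ell \le m$, is $(p_\ell)_{i_\ell} = 1/N$ independently of $i_\ell$, while all other $\vec{p}_k$ are arbitrary probability vectors.

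Next I would pull the constant $1/N$ out of the sum and carry out the summation over the index $i_\ell$ first. By the $m$-stochasticity of $A$, summing over this single index gives $\sum_{i_\ell} A_{i_1 i_2 \cdots i_m} = 1$, which eliminates all remaining dependence on the tensor. What is left is $\frac{1}{N}$ multiplied by the product of the normalizations of the other input vectors, namely $\prod_{k \neq 1, \ell} \big( \sum_{i_k} (p_k)_{i_k} \big)$. Since each $\vec{p}_k$ lies in $\Delta_N$, every such sum equals $1$, so the output component is $1/N$. Because this holds for every $i_1$, the resulting vector is $\vec{e}$.

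I do not anticipate any genuine obstacle: the statement is essentially a consequence of normalization, and the computation is routine. The only point deserving attention is the bookkeeping of which index is summed — the argument invokes the stochasticity condition solely in the direction matched to $\vec{e}$ (an input index, not the output index $i_1$), which is precisely why bare stochasticity of a Markov-type tensor would not be enough and the full multi-stochasticity is required. If several of the inputs happen to equal $\vec{e}$, one simply selects one of them to play this role and treats the remaining copies as ordinary probability vectors, so the conclusion is unchanged.
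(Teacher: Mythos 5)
Your proposal is correct and follows essentially the same route as the paper's proof: isolate the component $i_1$, pull out the constant $1/N$, sum over the index matched to $\vec{e}$ using the stochasticity of $A$ in that direction, and then use the normalization of the remaining probability vectors. No gaps; the remark about selecting one copy of $\vec{e}$ when several inputs are maximally mixed is a sensible clarification that the paper handles implicitly via its ``without loss of generality'' choice of $k$.
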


\begin{proof}
Without loss of generality let $\vec{e}$ be the $k$-th vector in the sequence $\{\vec{p}_1, \cdots ,\vec{p_m}\}$. The left hand side of eq. \eqref{multi-stochastic_e} can be written as,
\begin{equation*}
\begin{aligned}
& A[\vec{p}_2, \cdots, \vec{e}, \cdots, \vec{p}_m ]_{i_1} = \sum_{i_2, \cdots, i_m} A_{i_1 \; \cdots \; i_k, \cdots\; i_m} {p_2}_{\;i_2}, \cdots, \vec{e}_{i_k}, \cdots, {p_m}_{\;i_m} = \\
& = \sum_{i_2, \cdots, i_{k-1}, i_{k+1}, \cdots, i_m} \left(\sum_{i_k} A_{i_1\; \cdots\; i_k\; \cdots\; i_m} \frac{1}{N}\right) {p_2}_{\;i_2}, \cdots, {p_{k-1}}_{\;i_{k-1}}, {p_{k+1}}_{\;i_{k+1}}, \cdots, {p_m}_{\;i_m} =  \\
& = \sum_{i_2, \cdots, i_{k-1}, i_{k+1}, \cdots, i_m} \frac{1}{N} {p_2}_{\;i_2}, \cdots, {p_{k-1}}_{\;i_{k-1}}, {p_{k+1}}_{\;i_{k+1}}, \cdots, {p_m}_{\;i_m} = \frac{1}{N}
\end{aligned}
\end{equation*}
\end{proof}

Using this fact, we describe the fixed point of the map, $\vec{q} \to A[\vec{q},\cdots,\vec{q}]$, for any $\vec{q}$ in the interior of the probability simplex.

\begin{thm}\label{multisto_fixed}
Let $\vec{q}^{(0)}$ be any point in the interior of probability simplex $\Delta_N$ and $\vec{p}^{(0)}$ be a point at its boundary  $\partial\Delta_N$, such that $\vec{q}^{(0)} = \alpha \vec{e} + (1 -\alpha)\vec{p}^{(0)}$ for some $\alpha \in (0,1]$. For any $m$-stochastic map $A$ let us denote the sequences $\{\vec{q}^{(n)}\}$ and $\{\vec{p}^{(n)}\}$ by $\vec{q}^{(n+1)} = A[\vec{q}^{(n)},\cdots, \vec{q}^{(n)}]$ and $\vec{p}^{(n+1)} = A[\vec{p}^{(n)},\cdots, \vec{p}^{(n)}]$. Then if $m > 2$ the sequence $\{\vec{q}^{(n)}\}$ converges to $\vec{e}$.
\end{thm}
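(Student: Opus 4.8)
The plan is to exploit the multilinearity of the map $\vec{q} \mapsto A[\vec{q},\cdots,\vec{q}]$ together with Lemma \ref{lem_mstoch1}, which guarantees that whenever one of the arguments equals $\vec{e}$ the output collapses to $\vec{e}$. The central idea is to track how the affine decomposition $\vec{q}^{(0)} = \alpha\vec{e} + (1-\alpha)\vec{p}^{(0)}$ evolves under the iteration and to show that the weight carried by $\vec{e}$ tends to $1$, while the remaining contribution, though an arbitrary probability vector, is damped to zero.

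First I would expand $\vec{q}^{(1)} = A[\vec{q}^{(0)},\cdots,\vec{q}^{(0)}]$ by multilinearity, where each of the $m-1$ slots contributes either $\alpha\vec{e}$ or $(1-\alpha)\vec{p}^{(0)}$. Every term containing at least one factor $\vec{e}$ evaluates to $\vec{e}$ by Lemma \ref{lem_mstoch1}, while the unique term with no $\vec{e}$ equals $(1-\alpha)^{m-1}\vec{p}^{(1)}$, with $\vec{p}^{(1)} = A[\vec{p}^{(0)},\cdots,\vec{p}^{(0)}]$. Summing the scalar coefficients and using $\bigl(\alpha + (1-\alpha)\bigr)^{m-1} = 1$, the combined weight of the collapsed terms is $1-(1-\alpha)^{m-1}$, so that
\begin{equation*}
\vec{q}^{(1)} = \left[1-(1-\alpha)^{m-1}\right]\vec{e} + (1-\alpha)^{m-1}\vec{p}^{(1)}~.
\end{equation*}

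Next I would promote this to an induction: writing $\vec{q}^{(n)} = \alpha_n\vec{e} + (1-\alpha_n)\vec{p}^{(n)}$ with $\vec{p}^{(n)}$ the corresponding boundary-seeded iterate, the identical expansion yields the scalar recursion $1-\alpha_{n+1} = (1-\alpha_n)^{m-1}$. Setting $\gamma_n \vc 1-\alpha_n \in [0,1)$ gives $\gamma_{n+1} = \gamma_n^{m-1}$, hence $\gamma_n = \gamma_0^{(m-1)^n}$. Here the hypothesis $m>2$ is essential: since $m-1 \geq 2$ and $\gamma_0 < 1$, the exponent $(m-1)^n$ diverges and $\gamma_n \to 0$, whereas for $m=2$ the weight would remain constant. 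Finally, because each $\vec{p}^{(n)}$ lies in the compact simplex $\Delta_N$, one has $\|\vec{q}^{(n)} - \vec{e}\| = \gamma_n\,\|\vec{p}^{(n)} - \vec{e}\| \leq \gamma_n\,\mathrm{diam}(\Delta_N) \to 0$, which proves convergence to $\vec{e}$.

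The main obstacle, and really the only non-routine step, is the bookkeeping of coefficients in the multilinear expansion and the recognition that Lemma \ref{lem_mstoch1} collapses every mixed term to $\vec{e}$, leaving a clean one-dimensional recursion for the weight $\gamma_n$. Once this structure is isolated, convergence follows immediately from the elementary behaviour of the map $\gamma \mapsto \gamma^{m-1}$ on $[0,1)$, and the role of the condition $m>2$ becomes transparent.
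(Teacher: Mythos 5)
Your proof is correct and follows essentially the same route as the paper: the same multilinear expansion in which every mixed term collapses to $\vec{e}$ via Lemma \ref{lem_mstoch1}, and the same resulting recursion $\gamma_{n+1}=\gamma_n^{m-1}$ giving $\gamma_n=\gamma_0^{(m-1)^n}\to 0$ precisely because $m>2$. Your concluding step is marginally more direct — you bound $\|\vec{q}^{(n)}-\vec{e}\|\leq\gamma_n\,\mathrm{diam}(\Delta_N)$ outright, whereas the paper computes $\|\vec{q}^{(n)}\|_2^2\to 1/N$ and then invokes the fact that $\vec{e}$ is the unique point of the simplex with norm $1/\sqrt{N}$ — but this is a cosmetic difference, not a different argument.
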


Lengthy proof of this theorem is relegated to Appendix \ref{app:clasical_conv}
For further work we need one more notion.

\begin{defn} \cite{Limit_points_paper1}
An $m$-stochastic tensor  $A_{i_1, \cdots,i_m}$ is called reducible, if there exist a proper subset $I \subset \{1,\cdots, N\}$, such that
\begin{equation}\label{reducible_def}
    \forall~ i_1 \in I~,~ i_2,\cdots, i_m \notin I ~~A_{i_1\;\cdots\;i_m} = 0 ~.
\end{equation}
An $m$-stochastic tensor is called irreducible if it is not reducible.
\end{defn}

\begin{lem}\label{sub_multi}
Let $A_{i_1\; \cdots\; i_m}$ be an $m$-stochastic reducible tensor with $m \geq 3$. Moreover, let $I$ be a set of indices defined as above with $\#I = k$. Then the following statements hold
\begin{enumerate}
    \item $k \geq N/2$~,
    \item Tensor $A_{i_1\; \cdots\; i_m}$ is reducible with respect to any of its indices, with the same set $I$ of index values, 
    \begin{equation}
    \forall r \in \{1\; \cdots\; m\}~,~ \forall~ i_r \in I~,~ i_1,\cdots, i_m \notin I  ~~, A_{i_1\;\cdots\;i_r\;\cdots,i_m} = 0~,
    \end{equation}
    \item The truncation of the tensor $A_{i_1\;\cdots\;i_m}$:
    \begin{equation}\label{sub_stochastic_1}
        A'_{i_1\;\cdots\; i_m} = A_{i_1\;\cdots\; i_m} ~,~ i_1, \cdots, i_m \notin I
    \end{equation}
    is also an $m$-stochastic tensor. Moreover, for any $m$-stochastic tensor  $A_{i_1\;\cdots\; i_m}$, if there exist a subset $I \subset \{1,\cdots,N\}$ such that a tensor in  \eqref{sub_stochastic_1} is also an $m$-stochastic tensor, then $A_{i_1\;\cdots\;i_m}$ is reducible.
\end{enumerate} 
\end{lem}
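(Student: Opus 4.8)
The plan is to reduce all three statements to a single computation of the total ``mass'' of the tensor on the combinatorial boxes cut out by the partition $\{1,\dots,N\} = I \cup J$, where $J = \{1,\dots,N\}\setminus I$; write $k = \#I$ and $b = N-k = \#J$ (both $\geq 1$, since $I$ is a nonempty proper subset). For a set of positions $T \subseteq \{1,\dots,m\}$, let $\mu(T)$ be the sum of all entries $A_{i_1\cdots i_m}$ for which $i_r \in J$ exactly when $r \in T$ and $i_r \in I$ otherwise. Every entry of $A$ is nonnegative, so each $\mu(T) \geq 0$, and this single positivity constraint will drive all three claims once the $\mu(T)$ have been computed explicitly.

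First I would compute the auxiliary masses $h(S) = \sum\{A_{i_1\cdots i_m} : i_r \in J \text{ for } r \in S\}$, where the positions outside $S$ range freely over $\{1,\dots,N\}$; by construction $h(S) = \sum_{T \supseteq S}\mu(T)$. When $S \neq \{1,\dots,m\}$ there is at least one free position, and summing the $m$-stochastic relation $\sum_{i_r}A = 1$ over the free indices gives $h(S) = b^{|S|}N^{m-|S|-1}$, using only stochasticity. The one exceptional box is $S = \{1,\dots,m\}$ (all indices in $J$): here the given reducibility in the first index forces $\sum_{i_1\in I}A = 0$ whenever $i_2,\dots,i_m \in J$, so $\sum_{i_1\in J}A = 1$ and, summing over the remaining indices in $J$, $h(\{1,\dots,m\}) = b^{m-1}$. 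M\"obius inversion over the subset lattice then yields a mass depending only on $\tau = |T|$, namely
\[
\mu(T) \;=\; \frac{k\,b^{\tau}}{N}\Bigl(k^{p-1} + (-1)^{p}\,b^{p-1}\Bigr),\qquad p := m-\tau .
\]

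With this formula the three claims follow by inspection. Taking $p = 1$ (one index in $I$, the rest in $J$) gives $\mu(T) = 0$; since $\mu(T)$ is a sum of nonnegative entries, every entry in that box vanishes, and because the formula depends only on $\tau$ this applies to the box with any single prescribed index in $I$ — which is precisely the second claim, reducibility with respect to every index. For the first claim I would take a box with exactly $p = 3$ indices in $I$ (possible since $m \geq 3$): then $\mu(T) = \tfrac{k b^{m-3}}{N}(k^{2}-b^{2}) \geq 0$, and since $kb^{m-3}/N > 0$ this forces $k^{2}\geq b^{2}$, i.e. $k \geq b$, equivalently $k \geq N/2$. Finally, for the third claim the forward direction uses the reducibility in every index just established: for fixed $i_2,\dots,i_m \in J$ one has $\sum_{i_1\in I}A = 0$, hence $\sum_{i_1\in J}A = 1$, and likewise in each slot, which is exactly $m$-stochasticity of the truncation; the converse is immediate, since if the truncation is $m$-stochastic then $\sum_{i_1\in J}A = 1 = \sum_{i_1=1}^{N}A$ for $i_2,\dots,i_m\in J$ gives $\sum_{i_1\in I}A = 0$ and hence, by nonnegativity, reducibility.

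The main obstacle I anticipate is the bookkeeping in the M\"obius inversion and, more subtly, the choice of which box to use for the first claim: the most natural candidate, the all-in-$I$ box ($p=m$), yields the factor $k^{m-1}+(-1)^{m}b^{m-1}$, which is automatically nonnegative when $m$ is even and therefore gives no bound. One must instead exploit a box whose number of $I$-indices is odd and at least three, and $p=3$ works uniformly for all $m \geq 3$. A minor point to fix at the outset is that $I$ is assumed nonempty, as otherwise reducibility is vacuous and $k \geq N/2$ fails.
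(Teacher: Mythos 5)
Your proof is correct, and it takes a genuinely different route from the paper's. The paper proves each claim by a separate, targeted double-counting argument: for (i) it evaluates $\sum_{i_1,i_2\in I,\,i_3\notin I}A$ in two ways (getting $k(N-k)$ from reducibility plus stochasticity, and the upper bound $k^2$ from stochasticity alone); for (ii) it compares $\sum_{i_1,\dots,i_m\notin I}\sum_{i_r}A$ computed with and without splitting the $i_r$-sum; and (iii) then follows by subtracting sums, exactly as in your last paragraph. Your approach instead determines, once and for all, the exact mass $\mu(T)$ of every one of the $2^m$ boxes of the partition induced by $I\cup J$, via M\"obius inversion from the marginals $h(S)$ — which need stochasticity alone except for the single full box, where the reducibility hypothesis enters exactly once. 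All three claims then drop out of the closed formula $\mu(T)=\frac{kb^{\tau}}{N}\bigl(k^{p-1}+(-1)^{p}b^{p-1}\bigr)$ together with nonnegativity: $p=1$ gives mass zero and hence claim (ii), $p=3$ gives $k\ge b$ and hence claim (i). I checked the inversion (the correction term $b^{m-1}-b^{m}/N=kb^{m-1}/N$ at the full box is handled correctly, and the formula is consistent, e.g., with the $N=2$, $m=3$ case where it forces $A_{111}=0$). What your route buys is a structural explanation the paper does not give — why only boxes with an odd number of $I$-indices yield constraints, and that the $p=1$ box is exactly critical — at the price of the inversion bookkeeping; the paper's route is shorter and needs no machinery beyond swapping orders of summation. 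Your observation that $I$ must be taken nonempty for claim (i) to hold is also correct and worth making explicit, since the paper's definition of reducibility literally allows $I=\emptyset$.
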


Proof of this lemma is provided in Appendix \ref{app:clasical_conv}. We are now ready to present the main theorem of this section.

\begin{figure}[h]
    \centering
        \includegraphics[height=2.7in]{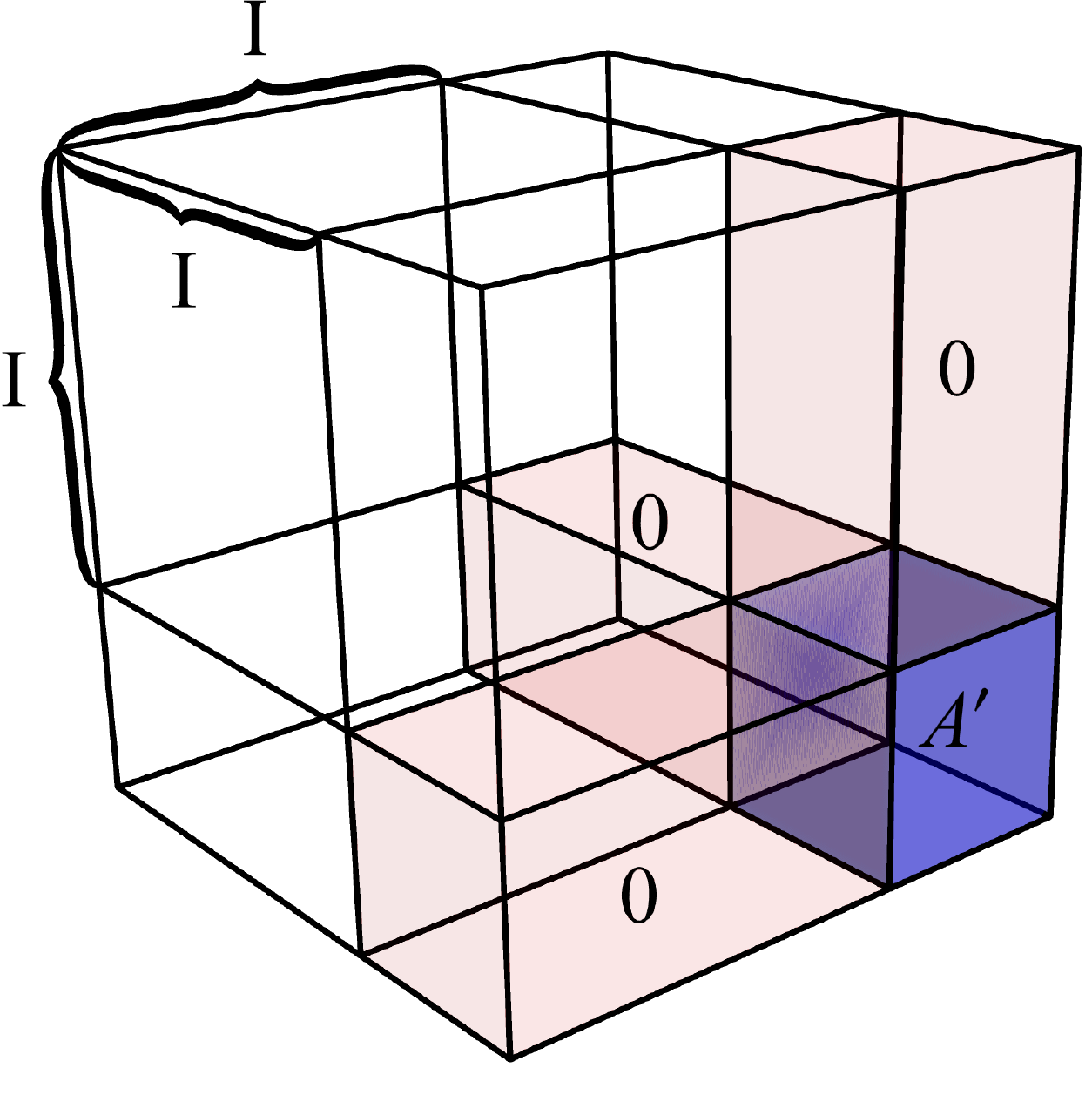}
     \centering
     \caption{\label{fig:Riem}Schematic representation of a reducible tensor $A$ with marked set of indices $I$ and a tristochastic subtensor $A'$. }
\end{figure}

\begin{thm}\label{m_stoch_tensor_finale}
Let $A$ be an $m$-stochastic tensor and $m \geq 3$. Then each subset of indexes values $I \subset \{1,\cdots N\}$ such that
\begin{equation*}
    A_{i_1\;\cdots\;i_m} = 0~~ \forall~ i_1 \in I~,~ i_2,\cdots, i_m \notin I  ~,
\end{equation*}
corresponds to exactly one eigenvector of $A_{i_1,\cdots, i_m}$ inside the probability simplex $\Delta_N$ and vice versa. Moreover, such an eigenvector has the form
\begin{equation}\label{secound_gen_eing}
    p_i = \left\{ \begin{matrix}
0 ~\text{ if }~ i \in I \\
\frac{1}{N-k} ~\text{ if }~ i \notin I
\end{matrix}   \right.
\end{equation}
where $k = \#I$. Each such generalized eigenvector corresponds to the eigenvalue $1$.
\end{thm}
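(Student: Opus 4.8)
The plan is to read ``eigenvector inside the probability simplex'' as a fixed point of the polynomial map $\vec{q}\mapsto A[\vec{q},\dots,\vec{q}]$, so that the associated eigenvalue is automatically $1$, and then to exhibit an explicit bijection between the admissible index sets $I$ and such fixed points. Concretely, I would introduce two assignments, $I \mapsto \vec{p}_I$ (the vector of \eqref{secound_gen_eing}) and $\vec{p}\mapsto I_{\vec p} := \{\, i : (\vec p)_i = 0\,\}$, prove that each lands in the intended target, and check that they are mutually inverse. Throughout, the workhorses are Lemma \ref{lem_mstoch1} (a maximally mixed input fixes $\vec e$), Lemma \ref{sub_multi} (the truncation of a reducible tensor is again $m$-stochastic), and Theorem \ref{multisto_fixed}.

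For the direction from a set to an eigenvector, suppose $I$ satisfies the displayed vanishing condition and write $J=\{1,\dots,N\}\setminus I$, $k=\#I$. By Lemma \ref{sub_multi} the truncated tensor $A'$, obtained by restricting all indices to $J$, is an $m$-stochastic tensor on $N-k$ symbols. I would then evaluate $A[\vec p_I,\dots,\vec p_I]$ componentwise. Since $\vec p_I$ is supported on $J$, only tuples with $i_2,\dots,i_m\in J$ contribute. For $i_1\in I$ the vanishing hypothesis makes every surviving entry $A_{i_1 i_2\cdots i_m}$ equal to zero, so the $i_1$-th component is $0=(\vec p_I)_{i_1}$. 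For $i_1\in J$ the sum is exactly $A'[\vec e',\dots,\vec e']_{i_1}$, where $\vec e'$ is the maximally mixed vector on $J$; Lemma \ref{lem_mstoch1} applied to $A'$ gives $A'[\vec e',\dots,\vec e']=\vec e'$, so the $i_1$-th component equals $\tfrac{1}{N-k}=(\vec p_I)_{i_1}$. Thus $\vec p_I$ is a fixed point.

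For the converse, let $\vec p$ be any fixed point in $\Delta_N$ and set $I=I_{\vec p}$. For $i_1\in I$ the fixed-point equation reads $\sum_{i_2,\dots,i_m} A_{i_1 i_2\cdots i_m}\,p_{i_2}\cdots p_{i_m}=0$; as all summands are nonnegative, each vanishes, and in particular $A_{i_1 i_2\cdots i_m}=0$ whenever $i_2,\dots,i_m\notin I$ (where the $p$'s are strictly positive). This is precisely the admissibility condition on $I$. Restricting $\vec p$ to $J$ then gives, by the same support argument as above, an interior fixed point $\vec p|_J$ of the $m$-stochastic truncation $A'$ on $N-k$ symbols. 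Here I would invoke Theorem \ref{multisto_fixed}: for $m\geq 3$ every interior trajectory of $A'$ converges to the maximally mixed vector, so the only interior fixed point is $\vec e'$. Hence $\vec p|_J=\vec e'$, which is exactly the explicit form \eqref{secound_gen_eing}.

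Finally I would assemble the bijection: $I\mapsto \vec p_I$ has zero-set exactly $I$, so $I_{\vec p_I}=I$, while the converse direction shows $\vec p=\vec p_{I_{\vec p}}$; the two assignments are therefore mutually inverse, giving exactly one simplex eigenvector per admissible $I$ and vice versa, each with eigenvalue $1$. The empty set returns $\vec e$, consistently with Theorem \ref{multisto_fixed}. The only genuinely nontrivial ingredient is the uniqueness of the interior fixed point, namely that an $m$-stochastic tensor with $m\geq 3$ admits no interior fixed point other than the maximally mixed vector; this is where the hypothesis $m\geq 3$ enters and is supplied by Theorem \ref{multisto_fixed} applied to the truncation. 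I expect the main obstacle to be largely bookkeeping: ensuring that restriction to $J$ genuinely carries fixed points of $A$ to fixed points of $A'$, which hinges on the vanishing of $A$ on ``mixed'' index tuples guaranteed by Lemma \ref{sub_multi}.
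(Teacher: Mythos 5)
Your proposal is correct and follows essentially the same route as the paper: both directions rest on the same three ingredients (Lemma \ref{lem_mstoch1} for the maximally mixed vector, Lemma \ref{sub_multi} for the $m$-stochasticity of the truncation, and Theorem \ref{multisto_fixed} for uniqueness of the interior fixed point), with the nonnegativity argument extracting the vanishing condition from the zero components. Your explicit two-map bijection and the direct componentwise check that $\vec p_I$ is a fixed point (which needs only the defining vanishing condition at the first index, not point (ii) of Lemma \ref{sub_multi}) are, if anything, slightly tidier than the paper's write-up.
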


\begin{proof}
Firstly we prove that each eigenvector $\vec{p}$ corresponds to only one set of indexes $I$ described in the Theorem and then show that each set of indexes $I$ corresponds to a single eigenvector $\vec{p}$. Along the way, we prove a second statement of the Theorem. 

Let $\vec{p}$ be an eigenvector of $A$corresponding to the eigenvalue $\lambda$. If all $p_i$ are greater than zero, then by Theorem \ref{multisto_fixed} one has $\vec{p} = \vec{e}$ and $\lambda = 1$, since $\vec{p}$ belongs to the interior of the probability simplex. Hence the theorem is satisfied with $I = \emptyset$.
Otherwise, there  exist a nonempty subset $I \subset \{1,\cdots,N\}$  such that $p_{i} = 0$ for all $i \in I$. Then for each $i \in I$ one has:
\begin{equation}
\begin{aligned}
0 & = \lambda p_{i}=  A[\vec{p},\cdots,\vec{p}]_{i}  = \sum_{i_2,\cdots,i_m} A_{i\;i_2\;\cdots\; i_m} p_{i_2}\cdots p_{i_m} = \sum_{i_2,\cdots, i_m\notin I } A_{i\; i_2,\cdots\; i_m} p_{i_2}\cdots p_{i_m}~.
\end{aligned}
\end{equation}
For each combination of indices $i_2,\cdots, i_m \notin I$ the product $p_{i_2}\cdots p_{i_N}$ is positive. Thus  $A_{i_1\;\cdots\; i_m} = 0$ for each $i_1 \in I$ and each $i_2,\cdots, i_N \notin I$, which proves the claim in one direction.

Consider a subspace defined by indices not belonging to the set $I$. Then by Lemma \ref{sub_multi}, the tensor $A_{i_1\;\cdots\;i_m}$ can be truncated to an $m$-stochastic tensor,
\begin{equation}\label{A_truncated}
A'_{i_1\;\cdots\; i_m} = A_{i_1\;\cdots\; i_m} ~,~ i_1, \cdots, i_m \notin I~,
\end{equation}
and truncated $\vec{p}$ is its eigenvector with all positive entries. Using  Theorem \ref{multisto_fixed} to  tensor \eqref{A_truncated}, the vector $\vec{p}$ must be maximally mixed on the discussed subspace. Hence it has the structure \eqref{secound_gen_eing} and forms a generalized eigenvector of $A$ with an eigenvalue equal to unity.

To prove the implication in the opposite direction we use Lemma \ref{sub_multi}. Because a tensor $A'_{i_1\;\cdots\;i_m} =  A_{i_1\;\cdots\; i_m} ~,~ i_1,\cdots, i_m \notin I$ is $m$-stochastic, by Lemma \ref{lem_mstoch1} it also has a generalized eigenvector $\vec{p} := \vec{e}$ on the $N-k$ dimensional subspace described by the values of indices not in $I$. Hence $\vec{p}$ has a form \eqref{secound_gen_eing}.
Moreover, since $A_{i_1\;\cdots\;i_m}$ is reducible, the second point of Lemma \ref{sub_multi} holds, so $\vec{p}$ is also an eigenvector of $A_{i_1\;\cdots\;i_m}$.
\end{proof}

\subsection{Identity and inverses}

Let us review the convolution of two probability vectors as a binary operation and focus on the identity and inverse of convolution.

\begin{defn}\label{id_def}
Let $A$ be a tristochastic tensor and $\idA$ a probability vector such that for any probability vector $\vec{p}$ one has
\begin{equation*}
    \idA \star_A\vec{p} = \vec{p}\star_A\idA  = \vec{p}~.
\end{equation*}
Then $\idA$ is called an \textit{identity} of the product $\star_A$ generated by the tensor $A$.
\end{defn}

\begin{thm}\label{classical_identity_thm}
Let $A$ be a tristochastic tensor and $\idA$ an identity of $\star_A$ then: 
\begin{enumerate}
    \item $\idA$ is a nontrivial eigenvector of $A$, hence $A$ is reducible,
    \item $\idA$ is of the form $(0,\cdots,0,1,0,\cdots,0)$, i.e. it has only one nonzero value on $k$-th place,
    \item For any $i,j \in \{1,\cdots,N\}$, and $k$ as in the previous point $A_{ijk} = A_{ikj} = \delta_{i,j}$,
    \item Moreover, if all of the above points are true for some vector $\vec{q}$, then $\vec{q}$ is an identity of $A$.
\end{enumerate}
\end{thm}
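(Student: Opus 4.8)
The plan is to convert the two functional requirements defining an identity into finitely many pointwise constraints on the entries of $A$, and then to extract the structure of $\idA$ from nonnegativity together with the three stochasticity conditions. It is cleanest to prove the points out of order: first the single-entry form (ii), which does the real work, then (iii), (i), and finally the converse (iv).

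First I would write $\idA = \vec u$ with components $u_j$ and expand the defining relations $\idA \star_A \vec p = \vec p$ and $\vec p \star_A \idA = \vec p$ as
\begin{equation}
\sum_{j,l} A_{ijl}\, u_j\, p_l = p_i, \qquad \sum_{j,l} A_{ijl}\, p_j\, u_l = p_i, \qquad \forall\, \vec p \in \Delta_N .
\end{equation}
Since the vertices of $\Delta_N$ are precisely the standard unit vectors, substituting for $\vec p$ each of them in turn reads off the individual coefficients and collapses the two requirements to the finite systems
\begin{equation}
\sum_j A_{ijl}\, u_j = \delta_{il}, \qquad \sum_l A_{ijl}\, u_l = \delta_{ij}, \qquad \forall\, i,j,l ,
\label{eq:identity_systems}
\end{equation}
which are the workhorse of the whole argument.

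The crux is point (ii). Let $S = \{\, j : u_j > 0 \,\}$. For $i \neq l$ the left relation in \eqref{eq:identity_systems} gives $\sum_j A_{ijl} u_j = 0$, and since every term is nonnegative this forces $A_{ijl} = 0$ whenever $j \in S$ and $i \neq l$. Substituting this into the stochasticity condition $\sum_i A_{ijl} = 1$ leaves only the diagonal term, so $A_{iji} = 1$ for all $i$ and all $j \in S$. The remaining stochasticity condition $\sum_j A_{iji} = 1$ then cannot tolerate more than one index $j$ with $A_{iji} = 1$, whence $|S| = 1$; writing $S = \{k\}$ and using $\sum_j u_j = 1$ gives $u_k = 1$, which is the claimed form $(0,\dots,0,1,0,\dots,0)$. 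I expect this to be the one genuinely delicate step: it succeeds only because all three stochasticity conditions are invoked together with the sign constraints, and in the right order.

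With $\idA$ identified as the $k$-th vertex, the rest is bookkeeping. Specializing \eqref{eq:identity_systems} to $u_j = \delta_{jk}$ yields $A_{ikl} = \delta_{il}$ and $A_{ijk} = \delta_{ij}$; relabeling the free index this is exactly (iii), $A_{ijk} = A_{ikj} = \delta_{ij}$. For (i), a direct substitution gives $(\idA \star_A \idA)_i = A_{ikk} = \delta_{ik} = u_i$, so $\idA$ is a fixed point of $\vec q \mapsto A[\vec q, \vec q]$ with eigenvalue $1$; being a vertex rather than $\vec e$ it is nontrivial, and taking $I = \{1,\dots,N\} \setminus \{k\}$ the relation $A_{ikk} = \delta_{ik}$ is precisely the reducibility condition \eqref{reducible_def}, so $A$ is reducible (equivalently, one may invoke Theorem \ref{m_stoch_tensor_finale}, which attaches this vertex to that set $I$). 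Finally, for the converse (iv) I would assume $\vec q$ has the stated vertex form together with $A_{ijk} = A_{ikj} = \delta_{ij}$ and compute directly $(\vec q \star_A \vec p)_i = \sum_l A_{ikl} p_l = p_i$ and $(\vec p \star_A \vec q)_i = \sum_j A_{ijk} p_j = p_i$, so that $\vec q$ is an identity.
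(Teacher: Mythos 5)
Your proof is correct, but it reaches the key conclusion (ii) by a genuinely different and more self-contained route than the paper. The paper first establishes (i), then invokes the fixed-point machinery (Theorem \ref{multisto_fixed}, Lemma \ref{sub_multi}, Theorem \ref{m_stoch_tensor_finale}) to conclude that $\idA$ must be maximally mixed on its support, and finally uses $\idA \star_A \vec{q} = \vec{q} = \idA$ for every $\vec{q}$ supported there to force the support to be a single index. You instead reduce the identity axioms to the pointwise constraints $\sum_j A_{ijl}u_j = \delta_{il}$ and $\sum_l A_{ijl}u_l = \delta_{ij}$, and extract $|S|=1$ purely from nonnegativity together with two of the three stochasticity sums ($\sum_i A_{ijl}=1$ forcing $A_{ljl}=1$ for $j$ in the support, then $\sum_j A_{iji}=1$ bounding the support size). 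This avoids any appeal to the eigenvector classification, which is a real simplification; the only thing the paper's route buys is that it exhibits the identity as a special case of the general structure \eqref{secound_gen_eing} of probability eigenvectors. Your derivations of (iii), (i) and (iv) from the specialized constraints are the same computations as the paper's, just performed in a different order, and your observation that $A_{ikk}=\delta_{ik}$ directly witnesses the reducibility condition \eqref{reducible_def} with $I=\{1,\dots,N\}\setminus\{k\}$ makes point (i) slightly more explicit than the paper's one-line justification.
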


\begin{proof}
By the definition of identity $\idA \star_A \idA = \idA$, which implies the statement $(i)$.

To prove the point $(ii)$ let $I$ be a set of indices such that $(\idA)_i = 0$. Since 
$\idA$ is a maximally mixed state on subspace corresponding to $\{1,\cdots,N\}\setminus I$, then for any probability vector
$\vec{q}$ with nonzero values only for indices
form $\{1,\cdots,N\}\setminus I$,
\begin{equation*}
    \idA \star_A\vec{q}  = \idA~.
\end{equation*}
By the definition of identity  $\idA \star_A \vec{q} = \vec{q}$, therefore $q = \idA$, so $\{1,\cdots,N\}\setminus I$ is a one element set.

The point $(iii)$ follows form the calculations performed below for any probability vector $\vec{p}$,
\begin{equation*}
\begin{aligned}
& p_i = \sum_{j,k} A_{ijk} p_j \idA_k = \sum_j A_{ijk} p_j~, \\
& p_i = \sum_{j,k} A_{ikj} \idA_k p_j  = \sum_j A_{ikj} p_j ~.\\
\end{aligned}
\end{equation*}

To prove the last statement $(iv)$ one calculates,
\begin{equation*}
\begin{aligned}
& (\vec{q} \star_A\vec{p})_i = \sum_{j,k'} A_{ikj} q_{k'} p_j = \sum_{j} A_{ikj} p_j = \sum_j \delta_{i,j} p_j = p_j~, \\
& (\vec{p} \star_A \vec{q})_i = \sum_{j,k'} A_{ijk'} p_j q_{k'} = \sum_{j} A_{ijk} p_j = \sum_j \delta_{i,j} p_j = p_j~, \\
\end{aligned}
\end{equation*}
where the second equality follows from $\vec{q}$ satisfying point $(ii)$, and the third equality follows from $\vec{q}$ satisfying point $(iii)$.
\end{proof}

\begin{thm}\label{classical_inv_thm}
Let $A$ be a tristochastic tensor and $\idA$  be an identity of $\star_A$. Then for a probability vector $\vec{p}$ there exist an inverse probability vector $\vec{q}$,
\begin{equation}
\vec{p} \star_A \vec{q} = \vec{q} \star_A \vec{p}  = \idA~,
\end{equation}
if and only if $\vec{p}$ is of the form $p_i = \delta_{i,m}$ for some $m \in\{1,\cdots,N\}$ and there exist a $n \in\{1,\cdots,N\}$ such that $A_{kmn} = A_{knm} = 1$. 
\end{thm}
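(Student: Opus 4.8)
The plan is to prove both implications by direct computation, leaning on two elementary observations: that the target identity is a vertex $\idA = e_k$ of the simplex, so by nonnegativity the convolution sum must vanish \emph{term by term} off the $k$-th coordinate; and that the three normalization conditions \eqref{tristoch_prob_eq} can be used to force the relevant supports to be singletons. Throughout I write $e_m$ for the standard basis vector and take $k$ to be the index at which $\idA$ is supported, as established in Theorem \ref{classical_identity_thm}(ii).

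For the easier ``if'' direction, I would assume $\vec{p} = e_m$ and choose $n$ with $A_{kmn} = A_{knm} = 1$, then verify that $\vec{q} = e_n$ is the inverse. A one-line computation gives $(\vec{p}\star_A\vec{q})_i = A_{imn}$; combining $A_{kmn} = 1$ with nonnegativity and the first-index normalization $\sum_i A_{imn} = 1$ forces $A_{imn} = \delta_{i,k}$, so that $\vec{p}\star_A\vec{q} = e_k = \idA$. The symmetric computation using $A_{knm} = 1$ yields $\vec{q}\star_A\vec{p} = \idA$.

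For the ``only if'' direction, suppose a probability vector $\vec{q}$ satisfies $\vec{p}\star_A\vec{q} = \vec{q}\star_A\vec{p} = e_k$. From the first equality, for every $i \neq k$ the nonnegative sum $\sum_{j,l} A_{ijl}\, p_j q_l$ must be zero, which forces $A_{ijl} = 0$ whenever $p_j > 0$, $q_l > 0$ and $i \neq k$; the normalization $\sum_i A_{ijl} = 1$ then upgrades this to $A_{kjl} = 1$ on the product of the two supports. I would next fix any $l \in \mathrm{supp}(\vec{q})$ (nonempty since $\vec{q}$ is a probability vector) and apply $\sum_j A_{kjl} = 1$: since each $j \in \mathrm{supp}(\vec{p})$ contributes a full $1$, the support of $\vec{p}$ must be a single index $m$, so $\vec{p} = e_m$. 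Fixing $j = m$ and invoking $\sum_l A_{kml} = 1$ likewise collapses $\mathrm{supp}(\vec{q})$ to a single index $n$, so $\vec{q} = e_n$ and $A_{kmn} = 1$. Running the identical argument on $\vec{q}\star_A\vec{p} = e_k$ produces $A_{knm} = 1$, completing the characterization.

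The calculation is short once these steps are lined up, so I expect no serious obstacle; the only point requiring care is the bookkeeping of which of the three normalizations in \eqref{tristoch_prob_eq} is used at each stage, since converting ``all terms vanish'' into $A_{kjl} = 1$ relies on the first-index sum, whereas the two support-collapsing arguments rely on summing over the second and the third indices respectively.
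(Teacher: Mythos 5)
Your proof is correct and follows essentially the same route as the paper's: term-by-term vanishing of the nonnegative sum off the $k$-th coordinate, upgrading to $A_{kjl}=1$ on the product of supports via the first-index normalization, and then collapsing each support to a singleton using the second- and third-index sums. The only difference is that you spell out the ``if'' direction and the second support-collapse explicitly, which the paper dismisses as trivial or analogous.
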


\begin{proof}
Assume there exists $\vec{q}$, which is an inverse of $\vec{p}$.
Since the identity is of the form $\idA_i = \delta_{k\;i}$, then for each $i \neq k$,
\begin{equation*}
0 = \idA_i = \sum_{jl} A_{ijl} p_j q_l~.
\end{equation*}
Because each term in the above sum is nonnegative, it implies that if $p_j > 0$ and $q_l> 0$, then $A_{ijl} = 0$ for any $i \neq k$.
On the other hand $\sum_i A_{ijl} = 1$, hence for $j$ and $l$ such that $p_j > 0$ and $q_l> 0$, one has  $A_{kjl} = 1$. Let us pick any $l$ such that $q_l>0$, then:
\begin{equation*}
    1 = \sum_{j} A_{kjl} \geq \sum_{j ~:~ p_j > 0} A_{kjl}  = \sum_{j ~:~ p_j > 0} 1~.
\end{equation*}
Hence $\vec{p}$ can have at most one nonzero element, $p_i = \delta_{i,m}$. Analogically, we can show that $q_l = \delta_{l,n}$.
Finally, because $\idA = \vec{p} \star_A \vec{q} = \vec{q} \star_A \vec{p} $, one has,
\begin{equation*}
\begin{aligned}
1 = \idA_k = \sum_{j,l}A_{kjl}p_jq_l = A_{kmn} \text{~ and ~} 1 = \idA_k = \sum_{j,l}A_{klj}q_lp_j = A_{knm}~.
\end{aligned}
\end{equation*}
The implication in the opposite direction is trivial.
\end{proof}

From Theorem \ref{classical_identity_thm} one immediately sees that for generic tristochastic tensor, the identity of convolution usually does not exist. However, in each dimension, it is quite easy to construct tristochastic tensors, for which the convolution always possesses an identity, for example
\begin{equation*}
    A_{kij} = (P_N)_{ij}^{k-1}~,
\end{equation*}
where $P_N$ is a permutation matrix corresponding to a cyclic permutation $\sigma(i) = i+1$ mod $N$. In the dimension $3$ the permutation tensor of interest is provided in \eqref{example_permutation_tensor}.
Moreover, Theorem \ref{classical_identity_thm} implies another property of the identity of convolution.

\begin{cor}
For each tristochastic tensor $A$ the convolution $\star_A$ possesses at most a single identity.
\end{cor}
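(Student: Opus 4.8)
The plan is to prove uniqueness of the identity by leveraging the rigid structural characterization already established in Theorem \ref{classical_identity_thm}. The key observation is that parts $(ii)$ and $(iii)$ of that theorem pin down an identity almost completely: any identity $\idA$ must be a standard basis vector $\idA_i = \delta_{i,k}$ for some index $k$, and this $k$ must satisfy the compatibility condition $A_{ijk} = A_{ikj} = \delta_{i,j}$ for all $i,j$. So I would reduce the claim to showing that the tensor $A$ cannot admit two distinct such indices.

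First I would suppose, for contradiction, that $\idA$ and $\idA'$ are two distinct identities, so that $\idA_i = \delta_{i,k}$ and $\idA'_i = \delta_{i,k'}$ with $k \neq k'$, each satisfying the condition in point $(iii)$. Then I would simply compute the product $\idA \star_A \idA'$ in two ways. On one hand, since $\idA$ is an identity, $\idA \star_A \idA' = \idA'$. On the other hand, since $\idA'$ is an identity, the same product equals $\idA$. Hence $\idA = \idA'$, contradicting $k \neq k'$. This is the cleanest route and essentially turns the corollary into a one-line consequence of the defining property of an identity, rather than requiring the detailed structure at all.

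The detailed structure from Theorem \ref{classical_identity_thm} is a useful backup if one prefers a purely tensor-entry argument: applying point $(iii)$ to both $k$ and $k'$ gives $A_{ik'k} = \delta_{i,k'}$ (reading $k$ as the identity index) and simultaneously $A_{ik'k} = \delta_{i,k}$ (reading $k'$ as the identity index), where I would evaluate the entry $A_{i k' k}$ against the two Kronecker patterns. Setting $i = k$ forces $\delta_{k,k'} = 1$, i.e. $k = k'$, the desired contradiction. I expect the main (though minor) obstacle to be purely bookkeeping: making sure the index slots in the condition $A_{ijk} = A_{ikj} = \delta_{i,j}$ are matched correctly when substituting the two candidate identity indices, since the symmetry in the last two indices must be used to align the expressions. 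Beyond that, the argument is immediate, which is consistent with the corollary being stated without proof in the excerpt; the substance has already been done in Theorem \ref{classical_identity_thm}.
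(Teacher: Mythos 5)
Your primary argument is correct, and it is genuinely different from — and more elementary than — the paper's proof. You use the universal algebraic fact that a two-sided identity of any binary operation is unique: if $\idA$ and $\idA'$ are both identities, then $\idA' = \idA \star_A \idA' = \idA$, where the first equality uses that $\idA$ is a left identity and the second that $\idA'$ is a right identity. This needs neither associativity, nor tristochasticity, nor any of the structure established in Theorem \ref{classical_identity_thm}; it is a one-line consequence of Definition \ref{id_def} alone. The paper instead argues by contradiction through item $(iii)$ of Theorem \ref{classical_identity_thm}: two distinct identities would give distinct indices $k_1 \neq k_2$ with $A_{iik_1} = A_{iik_2} = \delta_{ii} = 1$, whence $\sum_k A_{iik} \geq 2$, contradicting the stochasticity condition $\sum_k A_{iik} = 1$. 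Your backup tensor-entry argument (evaluating $A_{ik'k}$ two ways via point $(iii)$ and setting $i = k$) is also valid and is closer in spirit to the paper's route, though the paper derives the contradiction from the row-sum constraint rather than from a single entry. What your primary approach buys is generality and brevity; what the paper's approach buys is a further illustration of how tristochasticity rigidly constrains the tensor entries, consistent with the structural theme of that section. Both are sound.
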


\begin{proof}
To prove this statement by contradiction assume that the product
$\star_A$ possesses two identities. Then by Theorem  \ref{classical_identity_thm}, item  $(iii)$ for each pair of indices $i,j$ and some $k_1$, $k_2$ we have
\begin{equation*}
A_{ijk_1} = A_{ijk_2} = \delta_{ij}~.
\end{equation*}
Hence evaluating the expression $\sum_{k} A_{i\;i\;k}$, for any value of $i$ we get:
\begin{equation*}
\sum_k A_{iik} \geq A_{iik_1} + A_{iik_2} = 1 + 1  = 2,
\end{equation*}
which contradicts the tristochasticity of $A$.
\end{proof}

\section{Quantum multi-stochastic operations}\label{sec:quantum}

Now we change our focus to a quantum system. Our main constituent is a set $\Omega_N$ of density matrices of size $N$:
\begin{equation*}
\Omega_N = \{ \rho:\;\; \rho \geq 0,\; \Tr[\rho] = 1 \}  
\end{equation*}

The transformations of interest between density matrices are completely positive trace preserving maps, $\Psi: \Omega_N \to \Omega_N$ called quantum channels. Let $|\varphi^+\ra \in \H_N^{(1)}\otimes \H_N^{(2)}$ denote the maximally entangled state $|\varphi^+\ra = \sum_{i} \frac{1}{\sqrt{N}}|i\ra\otimes|i\ra$. The Choi-Jamiołkowski isomorphism \cite{Choi,Jamiolkowski} implies that any quantum channel $\Psi$ can be represented by a Jamiołkowski state $\rho_{\Psi}$ on an extended system as
\begin{equation*}
\rho_{\Psi} = (\Psi \otimes \id) |\varphi^+\ra\la\varphi^+|~.
\end{equation*}
The rescaled state $D = N \rho_{\Psi}$ is called a dynamical matrix of $\Psi$ or a Choi matrix and allows to express the action of a channel $\Psi$ as
\begin{equation}
\label{simple_channel}
\Psi(\rho) = \Tr_2[D (\id \otimes \rho^\top)] ~,
\end{equation}
where $D \geq 0$ and $\Tr_2$ is a partial trace over the second subsystem. Moreover to assure the trace preserving condition the dynamical matrix $D$ satisfies
\begin{equation*}
    \Tr_1[D] = \id~.
\end{equation*}

Quantum operations satisfying the unitality condition $\Psi_B(\id) = \id$ are called bistochastic and form an important class of the channels. The dynamical matrix of any bistochastic channel $\Phi_B$ satisfies two conditions for both partial traces,
\begin{equation}
\Tr_{1}[D] = \id_N, ~~~~~~~~~~ \Tr_{2}[D] = \id_N~.
\end{equation}

Making use of the channel description \eqref{simple_channel}, we propose a quantum convolution inspired by the convolution \eqref{vec_conv} of classical probability vectors.

\begin{defn}\label{quant_conv_def}
A \textit{quantum convolution} is a channel $\Phi: \Omega_N \otimes \Omega_N \to \Omega_N$ defined by 
\begin{equation}
\label{quantum_convolution}
\rho \star_D \sigma := \Phi_D[\rho,\sigma] := \Tr_{23} [D_{123}(\id \otimes \rho^\top \otimes \sigma^\top)]~,
\end{equation}
where a tri-partite dynamical matrix $D_{123}$ is an arbitrary semipositive defined matrix of order $N^3$, which satisfies a single  partial trace condition, $\Tr_1[D] = \id_{N^2}$.
\end{defn}

For our work, it is essential to specify a class of convolutions determined by a tristochastic matrix $D_{123}$, which satisfies three conditions analogous to \eqref{tristoch_prob_eq}, 
\begin{equation}
\Tr_1[D_{123}] = \id_{N^2}  ~~~~~~~~~~  \Tr_2[D_{123}] = \id_{N^2} ~~~~~~~~~~ \Tr_3[D_{123}] = \id_{N^2}~.
\end{equation}

Moreover, one can check that these conditions are equivalent to the statement that channels defined via 
\begin{equation}
\label{quant_tri}
\Tr_{13}[D (\rho^\top \otimes \id \otimes \sigma^\top)] \text{~~ and ~~} \Tr_{12}[D (\rho^\top \otimes \sigma^\top \otimes \id)]  
\end{equation}
are also well defined trace preserving quantum channels. For brevity the indices $1,2,3$, emphasizing that $D_{123}$ is a tripartite state, will be suppressed in the future part of the work, so we shall write $D_{123} = D$.

In analogy to the classical case, let us also define a general $m$-stochastic channel, which we use to make our proofs more general: 

\begin{defn}\label{quant_mstoch_def}
Channel $\Phi_D: \Omega_N^{\otimes (m-1)} \to \Omega_N$ defined by a tensor $D_{j_1, \cdots, j_m}^{i_1, \cdots, i_m}$ with $m$ indices by
\begin{equation}
\Phi_D[\rho_2 \otimes \cdots \otimes \rho_{m-1}] = \Tr_{1, \cdots,m-1}[D (\mathbb{I} \otimes \rho_1^\top \otimes \cdots \otimes \rho_{m-1}^\top)]~,
\end{equation}
is called an $m$-stochastic channel, if for any sequence of density matrices $\{\rho_1, \cdots, \rho_{m-1}\}$ and any index $k \in \{1, \cdots, m\}$ the map:
\begin{equation}\label{eq_multi_D2}
\Tr_{1,\cdots,k-1,k+1,\cdots,m-1 }\;[D(\rho_1^\top \otimes \cdots \otimes \underset{k\text{-th element}}{\mathbb{I} } \otimes \cdots \otimes \rho_{m-1}^\top) ]~,
\end{equation}
also forms a valid quantum channel.
\end{defn}

The question, of whether a quantum convolution is commutative or associative depends on the choice of the dynamical matrix $D$  and is analogous to the classical problem. It is worth mentioning that the convolution proposed by Aniello \cite{Aniello_2019}, which for the Lie group $SU(N)$ has a form
\begin{equation}
    \rho \star_{\vartheta} \sigma = \int_{SU(N)} d\mu(U) \Tr[\rho U \vartheta U^\dagger] U \sigma U^\dagger,
\end{equation}
where $\rho$, $\sigma$ and $\vartheta$ are density matrices,
is a tristochastic associative convolution. The proof of its tristochasticty is analogous to the proof in the classical case \eqref{alternative_tristochasticity_proof}.

The last important definition needed in the quantum framework is reducibility.

\begin{defn}\label{channel_reducible_1}
An $m$-stochastic channel  $\Phi_D$ is called reducible if there exist a proper subspace $V\in \H$, such that
\begin{equation}\label{D_reducible_def}
    \forall~ |\phi_1\ra \in V~,~ |\psi_2\ra,\cdots, |\phi_m\ra \in V^\perp ~~\Tr[D(|\phi_1\ra\la\phi_1| \otimes |\phi_2\ra\la\phi_2| \otimes\cdots\otimes |\phi_m\ra\la\phi_m|)  ] = 0 ~.
\end{equation}
An $m$-stochastic channel is called irreducible if it is not reducible.
\end{defn}

With those definitions, we are prepared to present the first main result of this Section, which is a direct analogue of Theorem \ref{m_stoch_tensor_finale}.

\begin{thm}\label{general_eigenvector_quantum}
Let $\Phi_D$ be an $m$-stochastic channel acting on the $N$-dimensional states and $m \geq 3$. Then each subspace $V\subset \H$ such that
\begin{equation*}
\forall~ |\phi_1\ra \in V~,~ |\psi_2\ra,\cdots, |\phi_m\ra \in V^\perp ~~Tr[D(|\phi_1\ra\la\phi_1| \otimes |\phi_2\ra\la\phi_2| \otimes\cdots\otimes |\phi_m\ra\la\phi_m|)  ] = 0 ~.
\end{equation*}
correspond to exactly one density matrix being a generalized eigenvector of $\Phi_D$ and vice versa. Moreover, each such eigenvector is the maximally mixed state on the subspace $V^{\perp}$ and the corresponding eigenvalue is $1$.
\end{thm}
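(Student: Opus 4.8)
The plan is to follow the architecture of the classical proof of Theorem \ref{m_stoch_tensor_finale} line by line, after first installing the three quantum counterparts of the classical supporting results: a quantum version of Lemma \ref{lem_mstoch1} (feeding the maximally mixed state $\id/N$ into any one slot of an $m$-stochastic channel returns $\id/N$, which is immediate from the partial-trace conditions $\Tr_k[D]=\id_{N^{m-1}}$ built into Definition \ref{quant_mstoch_def}), a quantum version of the fixed-point Theorem \ref{multisto_fixed}, and a quantum version of the reducibility Lemma \ref{sub_multi}. I would note at the outset that the map $\rho\mapsto\Phi_D[\rho,\cdots,\rho]$ is trace preserving, so any density-matrix eigenvector automatically has eigenvalue $1$; this is why only the eigenvalue $1$ appears in the statement.

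The crucial ingredient is the quantum fixed-point result, which I would prove by the same convexity idea used classically. Write a full-rank state as $\rho=\alpha\tfrac{\id}{N}+(1-\alpha)\sigma$ with $\sigma$ on the boundary of $\Omega_N$ and $\alpha\in(0,1]$, and expand $\Phi_D[\rho,\cdots,\rho]$ by multilinearity over the $m-1$ input slots. Every term containing at least one factor $\tfrac{\id}{N}$ collapses to $\tfrac{\id}{N}$ by the quantum analogue of Lemma \ref{lem_mstoch1}, and the coefficients sum to $1$, so
\begin{equation*}
\Phi_D[\rho,\cdots,\rho]=\bigl(1-(1-\alpha)^{m-1}\bigr)\tfrac{\id}{N}+(1-\alpha)^{m-1}\Phi_D[\sigma,\cdots,\sigma].
\end{equation*}
Since $\Phi_D[\sigma,\cdots,\sigma]\in\Omega_N$ is itself a convex combination of $\tfrac{\id}{N}$ and a boundary state, the weight $\alpha_n$ of the maximally mixed component in the $n$-th iterate satisfies $\alpha_{n+1}\ge 1-(1-\alpha_n)^{m-1}$. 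For $m\ge 3$ the map $x\mapsto 1-(1-x)^{m-1}$ strictly exceeds $x$ on $(0,1)$ and fixes $1$, so $\alpha_n\to 1$ and the iterates converge to $\id/N$. In particular a full-rank eigenvector must equal $\id/N$, which handles the case $V=\{0\}$.

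With these tools the bijection follows the classical pattern. For the forward direction, given an eigenvector $\rho$ that is not full rank, set $V=\ker\rho$; then $0=\langle\phi_1|\rho|\phi_1\rangle=\Tr[D(|\phi_1\rangle\langle\phi_1|\otimes\rho^{\top}\otimes\cdots\otimes\rho^{\top})]$ for every $|\phi_1\rangle\in V$, and expanding each $\rho^{\top}$ in its eigenbasis (supported on $V^{\perp}$ up to complex conjugation) writes this as a sum of nonnegative terms $\Tr[D(|\phi_1\rangle\langle\phi_1|\otimes|\phi_2\rangle\langle\phi_2|\otimes\cdots)]$ with strictly positive weights; since $D\ge 0$ each term vanishes, which is exactly the reducibility condition of Definition \ref{channel_reducible_1} for $V$. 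For the reverse direction, given a subspace $V$ satisfying the reducibility condition, the quantum analogue of Lemma \ref{sub_multi} lets me truncate $\Phi_D$ to a genuine $m$-stochastic channel on $V^{\perp}$ and guarantees reducibility with respect to every slot; applying the fixed-point result to the truncated channel shows that its unique interior fixed point is the maximally mixed state on $V^{\perp}$, and slot-wise reducibility lifts this state to a fixed point of the full $\Phi_D$. Finally $V\mapsto \tfrac{1}{\dim V^{\perp}}P_{V^{\perp}}$ and $\rho\mapsto\ker\rho$ are mutually inverse, giving the claimed one-to-one correspondence.

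The main obstacle I anticipate is twofold. First, making the quantum fixed-point argument fully rigorous requires a clean definition of the maximally mixed weight $\alpha_n$ (for instance via the smallest eigenvalue of the iterate, or the largest $\alpha$ with $\rho^{(n)}-\alpha\tfrac{\id}{N}\ge 0$) together with a monotone-convergence argument ensuring the limit is genuinely $\id/N$; this mirrors why the classical analogue was relegated to an appendix. Second, one must carefully track the transpose appearing in the channel definition \eqref{quant_tri}: the support of $\rho^{\top}$ is the complex conjugate of $\mathrm{supp}\,\rho$, so identifying $\ker\rho$ with the subspace $V$ of Definition \ref{channel_reducible_1} requires fixing a consistent conjugation convention once and for all at the start of the proof.
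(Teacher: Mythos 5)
Your proposal is correct and follows essentially the same route as the paper's Appendix B.2: the same three supporting results (the quantum unitality lemma, the quantum fixed-point theorem via the decomposition $\rho=\alpha\,\id/N+(1-\alpha)\sigma$, and the quantum reducibility lemma), and the same forward/backward structure with $V=\ker\rho$ and truncation of $D$ to $V^{\perp}$. The only minor deviation is inside the fixed-point argument, where you drive convergence by a monotone lower bound $\alpha_{n+1}\ge 1-(1-\alpha_n)^{m-1}$ on the maximally mixed weight instead of the paper's exact exponent $(1-\alpha)^{(m-1)^n}$ combined with a Hilbert--Schmidt norm computation; both are valid, and your trace-preservation remark pinning the eigenvalue to $1$ is a clean shortcut the paper does not use.
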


Proof of this theorem, analogous to the proof of the theorem \ref{m_stoch_tensor_finale}, is presented in Appendix \ref{app:quant_fix_point} together with all intermediate steps.

One can also form the quantum analogues of Theorems  \ref{classical_identity_thm} and \ref{classical_inv_thm} established for the classical convolution.

\begin{defn}\label{q_id_def}
Let $\Phi_D$ be a tristochastic channel established in the Definition \ref{quant_mstoch_def} and $\idD$ a density matrix such that for any density matrix $\rho$:
\begin{equation*}
    \idD \star_D \rho = \rho \star_D \idD  = \rho~,
\end{equation*}
then the density matrix $\idD$ is called an identity of $\Phi_D$.
\end{defn}

\begin{thm}\label{quant_identity_thm}
Let $\Phi_D$ be a tristochastic channel and $\idD$ an identity of $\Phi_D$. Then the following statements hold 
\begin{enumerate}
    \item $\idD$ is a nontrivial eigenvector of $\Phi_D$, hence $\Phi_D$ is reducible,
    \item $\idD$ is a pure state, spanning one dimensional subspace $V_{\id}$,
    \item Let $P_{\id}$ be a projection on the subspace defined above, then
    \begin{equation*}
        \Tr_{3}[(\id_N \otimes \id_N \otimes P_{\id}) D (\id_N \otimes \id_N \otimes P_\id)] =  \Tr_{2}[(\id_N \otimes P_\id \otimes \id_N) D (\id_N \otimes P_{\id} \otimes \id_N)] = D_{\id}~,
    \end{equation*}
    where $[D_{\id}]_{\; j\; l}^{i\; k \;} = \delta_{ik}\delta_{jl}$ is a dynamical matrix of the identity channel,
    \item If all of the above points are true for some density matrix $\rho$, then $\rho$ is an identity of $\Phi_D$,
    \item Moreover, for each tristochastic channel $\Phi_D$, there exists at most a single identity state.
\end{enumerate}
\end{thm}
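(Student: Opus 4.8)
The plan is to mirror the proof of the classical Theorem~\ref{classical_identity_thm}, replacing the scalar index manipulations by operator identities and invoking Theorem~\ref{general_eigenvector_quantum} in place of Theorem~\ref{m_stoch_tensor_finale}. For point $(i)$ I would start from the defining property $\idD \star_D \idD = \idD$, which exhibits $\idD$ as a fixed point of the map $\rho \mapsto \Phi_D[\rho,\rho]$ and hence as a generalized eigenvector with eigenvalue $1$. To see that it is nontrivial, I would rule out $\idD = \id_N/N$: by the quantum analogue of Lemma~\ref{lem_mstoch1} the maximally mixed state placed in one slot forces the output to be maximally mixed, so $(\id_N/N)\star_D\rho = \id_N/N \neq \rho$ for a generic $\rho$, contradicting the identity property. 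Thus $\idD$ is a nontrivial eigenvector and, by Theorem~\ref{general_eigenvector_quantum}, it is the maximally mixed state on a proper subspace $V^\perp$, which is exactly the statement that $\Phi_D$ is reducible. Point $(v)$ I would dispatch immediately and independently of the rest: if $\rho_1,\rho_2$ are both two-sided identities, then $\rho_1 = \rho_1 \star_D \rho_2 = \rho_2$, so at most one identity can exist.

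For point $(ii)$ the goal is to show that the reducing subspace $V^\perp$ carrying $\idD$ is one-dimensional. Using the quantum analogues of Lemmas~\ref{sub_multi} and~\ref{lem_mstoch1}, the restriction of $\Phi_D$ to $V^\perp$ is again a tristochastic channel for which the maximally mixed state is invariant whenever it occupies one argument; hence for every state $\rho$ supported on $V^\perp$ one has $\idD \star_D \rho = \idD$. The identity property forces $\idD \star_D \rho = \rho$, so $\rho = \idD$ for all such $\rho$, which is possible only if $\dim V^\perp = 1$. Therefore $\idD = P_\id$ is pure and spans the one-dimensional subspace $V_\id := V^\perp$.

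Points $(iii)$ and $(iv)$ are the operator counterparts of the calculational steps in the classical proof. Writing $\rho \star_D \idD = \Tr_{23}[D(\id \otimes \rho^\top \otimes P_\id^\top)]$ and factoring off the slot fixed to $\idD$, the requirement $\rho \star_D \idD = \rho$ for all $\rho$ says precisely that the conditional dynamical matrix $\Tr_3[D(\id \otimes \id \otimes P_\id^\top)]$ is the Choi matrix of the identity channel, i.e.\ equals $D_\id$; using $P_\id^2 = P_\id$ together with cyclicity of the partial trace over the third subsystem then rewrites this in the symmetric form displayed in $(iii)$, and the condition $\idD \star_D \rho = \rho$ gives the analogous statement with the projector in the second slot. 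For the converse $(iv)$ I would run this computation backwards: if a pure state $\rho$ satisfies the two projector conditions, then substituting them into the definition \eqref{quantum_convolution} collapses $\Phi_D[\,\cdot\,,\rho]$ and $\Phi_D[\rho,\,\cdot\,]$ to $\Tr_2[D_\id(\id \otimes (\cdot)^\top)]$, which is the identity channel, so $\rho$ is a two-sided identity.

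The main obstacle is point $(ii)$ together with the bookkeeping in $(iii)$. Unlike the classical case, where the subtensor on the complement of $I$ is visibly stochastic, here one must establish that the compression of $D$ to $V^\perp$ inherits tristochasticity and leaves the maximally mixed state invariant, which rests on the appendix versions of Lemmas~\ref{lem_mstoch1} and~\ref{sub_multi}. A second delicate point is the coherence encoded by $D_\id$: since $D_\id$ is a rank-one maximally entangled projector while tristochasticity caps $\Tr_3[D(\id \otimes \id \otimes \,\cdot\,)]$ below $\id_{N^2}$, the projector identities in $(iii)$ are extremely restrictive, which is the operator-level reason why, in sharp contrast to permutation tensors classically, a tristochastic channel only rarely admits an identity. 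Finally, keeping the transpose conventions of \eqref{quantum_convolution} consistent throughout the conditional-channel manipulations is where most of the care is needed.
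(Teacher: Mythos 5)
Your proposal is correct and follows essentially the route the paper intends: the paper gives no separate proof of Theorem~\ref{quant_identity_thm}, stating only that it is obtained by rephrasing the proof of Theorem~\ref{classical_identity_thm} for density matrices, which is exactly what you do (with Theorem~\ref{general_eigenvector_quantum} and the appendix Lemmas replacing Theorem~\ref{m_stoch_tensor_finale} and Lemmas~\ref{lem_mstoch1},~\ref{sub_multi}). The only deviation is your proof of uniqueness in $(v)$ via $\rho_1=\rho_1\star_D\rho_2=\rho_2$, which is cleaner than the paper's classical argument by contradiction with tristochasticity and is equally valid.
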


\begin{thm}\label{quant_inv_thm}
Let $\Phi_D$ be a tristochastic channel and $\idD$ an identity of $\Phi_D$. Then for density matrix $\rho$ there exist an inverse density matrix $\sigma$,
\begin{equation}
\rho \star_D \sigma = \sigma \star_D \rho  = \idD~,
\end{equation}
if and only if $\rho$ is a pure state spanning one dimensional subspace $V_{\rho}$ and there exist a one dimensional subspace $V_{\sigma}$ such that
\begin{equation*}
\Tr[(P_{\id} \otimes P_{\rho} \otimes P_{\sigma}) D (P_{\id} \otimes P_{\rho} \otimes P_{\sigma})] = \Tr[(P_{\id} \otimes P_{\sigma} \otimes P_{\rho}) D (P_{\id} \otimes P_{\sigma} \otimes P_{\rho})] = 1~,
\end{equation*}
where $P_\id$ is a projection onto a subspace spanned by an identity of $\Phi_D$, $P_\rho$ is a projection on a subspace $V_{\rho}$ and $P_\sigma$ is a projection on a subspace $V_{\sigma}$.
The inverse of $\rho$, $\sigma$ is a state from $V_{\sigma}$.  
\end{thm}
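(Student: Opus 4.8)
The plan is to mirror the proof of the classical Theorem~\ref{classical_inv_thm}, replacing the entrywise nonnegativity arguments by statements about supports of positive semidefinite operators. Throughout I write $P_{\id}$ for the rank-one projector onto $V_{\id}$ (by Theorem~\ref{quant_identity_thm}, item (ii), the identity $\idD$ is pure, so $\idD=P_{\id}$), and $Q_\rho$, $Q_\sigma$ for the projectors onto the supports of $\rho^\top$ and $\sigma^\top$, which are the operators that actually enter the convolution \eqref{quantum_convolution}. Purity of $\rho$ is equivalent to $\operatorname{rank} Q_\rho=1$, so working with the transposed supports is harmless: it only relabels $V_\rho,V_\sigma$ by their complex conjugates, which can be absorbed into the choice of the one-dimensional subspace $V_\sigma$ asserted to exist.

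First I would treat the forward implication, assuming $\sigma$ is an inverse, so in particular $\rho\star_D\sigma=\idD=P_{\id}$. Since $V_{\id}$ is one dimensional and $P_{\id}$ is the unique state supported on it, this equality is equivalent to $\Tr[P_{\id}^\perp(\rho\star_D\sigma)]=0$, and unfolding Definition~\ref{quant_conv_def} this reads
\[
\Tr[(P_{\id}^\perp\otimes\rho^\top\otimes\sigma^\top)\,D]=0 .
\]
As $D\geq 0$ and $P_{\id}^\perp\otimes\rho^\top\otimes\sigma^\top\geq 0$, the vanishing of this trace forces the product to vanish, $D(P_{\id}^\perp\otimes Q_\rho\otimes Q_\sigma)=0$; this is the analogue of the classical conclusion $A_{ijl}=0$ for $i\neq k$ on the support. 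Feeding this into the stochasticity condition $\Tr_1[D]=\id_{N^2}$ and using one-dimensionality of $V_{\id}$, I would then establish
\[
(\id\otimes Q_\rho\otimes Q_\sigma)\,D\,(\id\otimes Q_\rho\otimes Q_\sigma)=P_{\id}\otimes Q_\rho\otimes Q_\sigma ,
\]
the quantum counterpart of $A_{kjl}=1$ on the support: the left-hand side is supported on $V_{\id}$ in the first factor, hence factorises as $P_{\id}\otimes G$, and taking $\Tr_1$ with $\Tr_1[D]=\id_{N^2}$ identifies $G=Q_\rho\otimes Q_\sigma$.

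The crucial step --- the analogue of the classical cardinality estimate $1=\sum_j A_{kjl}\geq|\mathrm{supp}\,p|$ --- is to deduce that $\rho$ and $\sigma$ are pure, and I expect this to be the main obstacle, since it is where all three stochasticity conditions must interact. Applying $\Tr_2$ to the last identity and using the operator inequality $\Tr_2[(\id\otimes Q_\rho\otimes\id)X(\id\otimes Q_\rho\otimes\id)]\leq\Tr_2[X]$, valid for any $X\geq 0$ and projector $Q_\rho$, with $X=(\id\otimes\id\otimes Q_\sigma)D(\id\otimes\id\otimes Q_\sigma)$ and $\Tr_2[D]=\id_{N^2}$, yields
\[
(\operatorname{rank}\rho)\,(P_{\id}\otimes Q_\sigma)\leq\id_N\otimes Q_\sigma .
\]
Evaluating this on $|\psi\ra\otimes|s\ra$, where $|\psi\ra$ spans $V_{\id}$ and $|s\ra\in\mathrm{range}\,Q_\sigma$, gives $\operatorname{rank}\rho\leq 1$, so $\rho$ is pure; the symmetric argument with $\Tr_3[D]=\id_{N^2}$ shows $\sigma$ is pure. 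With both states pure, $Q_\rho,Q_\sigma$ are rank one and taking $\Tr$ of the displayed identity gives the first trace-one condition, while rerunning the whole argument starting from $\sigma\star_D\rho=\idD$ (which interchanges the second and third factors) gives the second.

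For the converse I would start from a pure $\rho$ and a one-dimensional $V_\sigma$ satisfying the two trace-one conditions, set $\sigma$ to be the state on $V_\sigma$, and verify $\rho\star_D\sigma=\sigma\star_D\rho=\idD$. Combining $\Tr[(P_{\id}\otimes Q_\rho\otimes Q_\sigma)D]=1$ with the automatic identity $\Tr[(\id\otimes Q_\rho\otimes Q_\sigma)D]=\Tr[(Q_\rho\otimes Q_\sigma)\Tr_1[D]]=1$ gives $\Tr[(P_{\id}^\perp\otimes Q_\rho\otimes Q_\sigma)D]=0$, hence $D(P_{\id}^\perp\otimes Q_\rho\otimes Q_\sigma)=0$ by positivity. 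Then $\Tr[P_{\id}^\perp(\rho\star_D\sigma)]=\Tr[(P_{\id}^\perp\otimes\rho^\top\otimes\sigma^\top)D]=0$, so the state $\rho\star_D\sigma$ is supported on the one-dimensional space $V_{\id}$ and therefore equals $P_{\id}=\idD$; the second trace-one condition yields $\sigma\star_D\rho=\idD$ in the same way. This shows $\sigma$ is an inverse of $\rho$ and completes the equivalence.
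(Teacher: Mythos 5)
Your proposal is correct and follows exactly the route the paper intends: the paper gives no separate proof of Theorem~\ref{quant_inv_thm}, stating only that it is obtained by rephrasing the classical proof of Theorem~\ref{classical_inv_thm} in terms of density matrices, and your argument is precisely that translation, with the entrywise positivity replaced by support/orthogonality arguments for $D\geq 0$ and the classical counting step replaced by the partial-trace rank bound. The only point worth flagging is the transpose convention (the convolution acts on $\rho^\top,\sigma^\top$, so $V_\rho,V_\sigma$ in the theorem are really the conjugated supports), which you correctly identify as a harmless relabelling.
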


Theorems \ref{quant_identity_thm} and \ref{quant_inv_thm} can be proven in the analogy to the proofs of their classical counterparts.
One just needs to rephrase all steps in terms of density matrices instead of probability vectors. 

\subsection{Coherification of $m$-stochastic tensors}

Given two diagonal density matrices $\rho$ and $\sigma$ one can calculate their convolution in two ways: as a quantum convolution or as a classical convolution of their diagonal elements:

\begin{equation}
\begin{aligned}
\label{different_diag_conv}
    &\rho \star_D \sigma = \Tr[D(\id \otimes\rho^\top \otimes \sigma^\top)] \\
    &\left(\text{diag}(\rho) \star_A \text{diag}(\sigma)\right)_k = \sum_{ij} A_{k,i,j} \text{diag}(\rho)_i \text{diag}(\sigma)_j
\end{aligned}
\end{equation}

The results of these two operations agree for any diagonal $\rho$ and $\sigma$ if and only if $D_{\; i \; j \; k}^{i \; j \; k} = A_{ijk}$. By this token, we may define a \textit{coherification} \cite{Korzekwa_coherifying} of an $m$- stochastic tensor.

\begin{defn}\label{def_coch}
A coherification of  the $m$-stochastic tensor $A$ is a channel $\Phi_D:\Omega_N^{m-1} \to \Omega_N$, such that the diagonal of its dynamical matrix $D$ agrees with the elements of $A$,
\begin{equation}
\forall_{i_1, \cdots, i_m }~~ D_{\; i_1 \; i_2  \cdots \; i_m}^{i_1 \; i_2 \; \cdots i_m} = A_{i_1 i_2 \cdots i_m}~.
\end{equation}
\end{defn}

Note that we do not demand here the $m$-stochasticity of the channel $\Phi_D$, relaxing this property from now on. This is because the $m$-stochasticity of quantum channels imposes rather strong constraints.
For instance, in the case $N = 2$, the only tristochastic coherification of the permutation tensor,

\begin{equation}
\label{simplest_pertm_A}
T_2  = (\id, P_2) = \left(\begin{matrix}
1 & 0 \\
0 & 1 \\
\end{matrix}\right.
\left|\begin{matrix}
0 & 1 \\
1 & 0 \\
\end{matrix}\right)~,
\end{equation}
is trivial in the sense that the dynamical matrix of the quantum channel $\Phi_D$ remains diagonal:

\begin{equation}
\label{simplest_pertm_D_diag}
D_{T,diag} = \begin{pmatrix}
  \begin{matrix}
  1 & 0 & 0 & 0 \\
  0 & 0 & 0 & 0 \\
  0 & 0 & 0 & 0 \\
  0 & 0 & 0 & 1 \\
\end{matrix}
  & \rvline & \bigzero \\
\hline
  \bigzero & \rvline &
  \begin{matrix}
  0 & 0 & 0 & 0 \\
  0 & 1 & 0 & 0 \\
  0 & 0 & 1 & 0 \\
  0 & 0 & 0 & 0 \\
  \end{matrix}
\end{pmatrix}~.
\end{equation} 
Therefore such a "classical" channel cannot give any "quantum advantage" over the standard classical convolution. Thus from now on we focus on coherifications of a classical convolution, represented by non diagonal dynamical matrix $D$, without demanding its tristochasticity on the quantum level.

An explicit formula for a convolution obtained by a coherification of the matrix \eqref{simplest_pertm_D_diag} is provided in Appendix \ref{app:explicit_form}.

Even though coherification $D$ of an $m$-stochastic tensor $A$ is in general not an $m$-stochastic channel, one may extract some information about generalized eigenvectors $\rho$ of $\Phi_D$ using only $m$-stochasticity of $A$, at least in the case of permutation tensors.

\begin{lem}\label{perumtation_coherification}
For any coherification $\Phi_D$ of an $m$-stochastic permutation tensor $T$, for each input states $\rho_1, \rho_2, \cdots$ the diagonal elements of $\rho = D[\rho_1, \rho_2, \cdots]$ depends only on diagonal elements of $\rho_1, \rho_2, \cdots$ by 
\begin{equation*}
\Phi_D[\rho_1, \cdots,\rho_{n-1}]_{\;i}^{i\;} = T_{i k_2 \cdots k_n} \;\; \rho_{1,\;k_2}^{\;\;k_2} \cdots \rho_{n-1,\;k_n}^{\;\;k_n} 
\end{equation*}
\end{lem}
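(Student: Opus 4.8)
The plan is to write the diagonal matrix element of the output $\rho=\Phi_D[\rho_1,\ldots,\rho_{m-1}]$ explicitly in terms of the dynamical matrix $D$, and then show that only those contributions in which each input state is evaluated on its diagonal survive, with the surviving coefficients being exactly the entries of $T$. First I would expand, in index notation, the channel action of Definition \ref{quant_mstoch_def}, tracing out systems $2,\ldots,m$ against the transposed inputs, to obtain
\[
\Phi_D[\rho_1,\ldots,\rho_{m-1}]^{\,i}_{\,i}
=\sum_{\substack{i_2,\ldots,i_m\\ j_2,\ldots,j_m}}
D^{\,i\,i_2\cdots i_m}_{\,i\,j_2\cdots j_m}\,
(\rho_1)^{\,i_2}_{\,j_2}\cdots(\rho_{m-1})^{\,i_m}_{\,j_m}~.
\]

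The first key step is positivity. Since $\Phi_D$ is a channel, $D\ge 0$, and for a positive semidefinite matrix any vanishing diagonal entry forces the whole corresponding row and column to vanish (every $2\times2$ principal minor is nonnegative). The coherification condition fixes $D^{\,i_1\cdots i_m}_{\,i_1\cdots i_m}=T_{i_1\cdots i_m}\in\{0,1\}$, so a matrix element $D^{\,i\,i_2\cdots i_m}_{\,i\,j_2\cdots j_m}$ can be nonzero only when $T_{i\,i_2\cdots i_m}=1$ and $T_{i\,j_2\cdots j_m}=1$; this already confines the sum to index strings supported by $T$.

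The second, decisive step handles the genuinely off-diagonal contributions, those with $(i_2,\ldots,i_m)\neq(j_2,\ldots,j_m)$. Here I would invoke trace preservation, $\Tr_1[D]=\id_{N^{m-1}}$, which reads
\[
\sum_{i_1} D^{\,i_1 i_2\cdots i_m}_{\,i_1 j_2\cdots j_m}=\prod_{a=2}^{m}\delta^{\,i_a}_{\,j_a}~,
\]
so for an off-diagonal string the right-hand side is $0$. Now the permutation structure enters: because $T$ has $0/1$ entries and unit marginal in its first index, for each fixed $(i_2,\ldots,i_m)$ there is a \emph{unique} value of the first index with $T$ nonzero, and since $T_{i\,i_2\cdots i_m}=1$ that value is our $i$. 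Combined with the positivity step, every summand with $i_1\neq i$ vanishes, so the trace-preservation sum collapses to the single term $i_1=i$, forcing $D^{\,i\,i_2\cdots i_m}_{\,i\,j_2\cdots j_m}=0$. Thus all off-diagonal input contributions drop out.

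Finally, only the terms with $(i_2,\ldots,i_m)=(j_2,\ldots,j_m)$ remain, whose coefficients are the diagonal entries $D^{\,i\,i_2\cdots i_m}_{\,i\,i_2\cdots i_m}=T_{i\,i_2\cdots i_m}$, multiplied by the diagonal factors $(\rho_a)^{\,i_{a+1}}_{\,i_{a+1}}$, which yields precisely the asserted formula. The main obstacle is exactly this off-diagonal step: positivity alone does not eliminate the cross terms, since both relevant diagonal entries equal $1$, so the argument genuinely needs the interplay between the $0/1$ permutation structure (giving uniqueness of the first index) and the trace-preservation constraint in order to collapse the sum and kill $D^{\,i\,i_2\cdots i_m}_{\,i\,j_2\cdots j_m}$.
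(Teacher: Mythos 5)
Your proposal is correct and follows essentially the same route as the paper's proof: both reduce the claim to showing $D^{\,i\,i_2\cdots i_m}_{\,i\,j_2\cdots j_m}=0$ for mismatched index strings, use the $2\times 2$ principal minors of $D\ge 0$ to bound these entries by products of permutation-tensor entries, and then combine trace preservation $\sum_{i_1}D^{\,i_1 i_2\cdots}_{\,i_1 j_2\cdots}=0$ with the fact that a permutation tensor admits only one nonzero first-index entry per string to collapse the sum to a single term that must vanish. The only difference is presentational: the paper argues by contradiction (a nonzero entry would require a second nonzero diagonal partner, impossible for a permutation tensor), whereas you argue directly that the sum degenerates to one term.
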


\begin{proof}
To prove this statement it is sufficient to show that off diagonal terms of $\rho_1, \rho_2, \cdots$ do not affect the diagonal terms of $D[\rho_1, \rho_2,\cdots]$, which is equivalent to:
\begin{equation*}
D_{\; i \; k_2 \cdots \; k_n}^{i\;l_2 \cdots\;l_n} = 0 \text{ if } k_2 \neq l_2 \text{ or } k_3 \neq l_3 \text{ or } \cdots  \text{ or } k_n \neq l_n, 
\end{equation*}
where $D$ is the dynamical matrix of the channel $\Phi_D$ written in the tensor form.
Using the positivity of $D$, calculating the determinant of a minor,
\begin{equation*}
\begin{pmatrix}
 D_{\; i \; k_2 \cdots \; k_n}^{i\;k_2 \cdots\;k_n} &  D_{\; i \; k_2 \cdots \; k_n}^{i\;l_2 \cdots\;l_n}\\[10 pt]
 D_{\; i \; l_2 \cdots \; l_n}^{i\;k_2 \cdots\;k_n} &  D_{\; i \; l_2 \cdots \; l_n}^{i\;l_2 \cdots\;l_n}
\end{pmatrix}~,
\end{equation*}
one obtains an inequality
\begin{equation}
\label{coher_cond1}
    |D_{\; i \; k_2 \cdots \; k_n}^{i\;l_2 \cdots\;l_n}|^2 \leq  D_{\; i \; k_2 \cdots \; k_n}^{i\;k_2 \cdots\;k_n} D_{\; i \; l_2 \cdots \; l_n}^{i\;l_2 \cdots\;l_n} = T_{i k_2 \cdots k_n} T_{i l_2 \cdots l_n}~.
\end{equation}
Moreover, we must also invoke the trace preserving property of $D$, which implies, that for any $k_2, \cdots,k_n,l_2,\cdots,l_n$, if at least one pair of $k_r$, $l_r$ disagree,
\begin{equation}
\label{coher_cond2}
     0 = \sum_{i'} D_{\; i' \; k_2 \cdots \; k_n}^{i'\;l_2 \cdots\;l_n}~.
\end{equation}
In order to have a nonzero value of certain $D_{\; i \; k_2 \cdots \; k_n}^{i\;l_2 \cdots\;l_n}$, there must exist at least one more index $j$ such that $D_{\; j \; k_2 \cdots \; k_n}^{j\;l_2 \cdots\;l_n} \neq 0$. Which by \eqref{coher_cond1} imply that all $4$ elements of permutation tensor $ T_{i k_2 \cdots k_n}, T_{i l_2 \cdots l_n},  T_{j k_2 \cdots k_n}, T_{j l_2 \cdots l_n}$ must be nonzero. This condition cannot be satisfied because a permutation tensor cannot have  both entries $T_{i k_2 \cdots k_n}$ and $T_{j k_2 \cdots k_n}$ nonzero.
\end{proof}

\begin{thm}
Let $T$ be an $m$-stochastic permutation tensor of dimension $N$. Then for any coherification $\Phi_D$, the fixed points (generalized eigenvectors to eigenvalue $1$) of $\Phi_D$ inside $\Omega_N$ have a form
\begin{equation}
\rho_{\;i}^i = \frac{1}{N-k} \text{ for } i \notin I ~\text{and }~ \rho_{\;j}^i = 0  \text{ for } i \in I \text{ or } j \in I, 
\end{equation}
where the $I$ is a set of indices with respect to which the permutation tensor $A$ is reducible and $k = \# I$.
\end{thm}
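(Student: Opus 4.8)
The plan is to reduce the problem to the already-established classical result, Theorem \ref{m_stoch_tensor_finale}, by exploiting the fact that for a coherification of a permutation tensor the diagonal dynamics decouple from the coherences, and then to pin down the off-diagonal entries using positivity of $\rho$ alone. In other words, the diagonal of a fixed point is handled by the classical theory and the rest follows from semidefiniteness.

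First I would take any fixed point $\rho\in\Omega_N$, so that $\Phi_D[\rho,\ldots,\rho]=\rho$ with $m-1$ copies of $\rho$. Writing $\vec{d}=\diag(\rho)$ for the vector of diagonal entries, which is a genuine probability vector since $\rho\geq 0$ and $\Tr\rho=1$, I would invoke Lemma \ref{perumtation_coherification}: the diagonal of $\Phi_D[\rho,\ldots,\rho]$ depends only on the diagonals of the inputs and equals exactly the classical action $T[\vec{d},\ldots,\vec{d}]$ of the permutation tensor. Comparing the diagonals on both sides of the fixed-point equation then gives $T[\vec{d},\ldots,\vec{d}]=\vec{d}$, so that $\vec{d}$ is a probability eigenvector of $T$ with eigenvalue $1$. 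At this point Theorem \ref{m_stoch_tensor_finale} applies directly: since $T$ is $m$-stochastic with $m\geq 3$, every such eigenvector has the form \eqref{secound_gen_eing}, i.e. $d_i=1/(N-k)$ for $i\notin I$ and $d_i=0$ for $i\in I$, where $I$ is precisely a reducibility set of $T$ and $k=\#I$. This fixes the diagonal of $\rho$.

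It then remains to control the off-diagonal entries, and here the only tool needed is positivity: for a positive semidefinite matrix, a vanishing diagonal entry forces the entire corresponding row and column to vanish, by the $2\times 2$ principal-minor inequality $|\rho^{i}_{j}|^2\leq \rho^i_i\,\rho^j_j$. Applying this to every $i\in I$, where $\rho^i_i=d_i=0$, yields $\rho^i_j=\rho^j_i=0$ for all $j$, that is $\rho^i_j=0$ whenever $i\in I$ or $j\in I$. Combined with the diagonal already determined, this is exactly the asserted form.

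I expect no serious obstacle, since the two nontrivial ingredients, the diagonal decoupling and the classification of classical probability eigenvectors, are supplied by Lemma \ref{perumtation_coherification} and Theorem \ref{m_stoch_tensor_finale}. The only point requiring care is that Theorem \ref{m_stoch_tensor_finale} must be read as covering \emph{all} probability eigenvectors, including those lying on the boundary of $\Delta_N$, which it does through its ``and vice versa'' characterization, so that a boundary diagonal $\vec{d}$ is still forced into the form \eqref{secound_gen_eing}. Note finally that the statement constrains only the diagonal and the entries touching $I$; the coherences among the indices of the complement of $I$ remain unconstrained, consistent with the theorem describing the \emph{form} of a fixed point rather than claiming its uniqueness.
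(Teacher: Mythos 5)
Your proposal is correct and follows essentially the same route as the paper: Lemma \ref{perumtation_coherification} reduces the diagonal of the fixed point to a probability eigenvector of the classical tensor $T$, Theorem \ref{m_stoch_tensor_finale} classifies that eigenvector, and the $2\times 2$ principal-minor positivity argument kills the off-diagonal entries touching $I$. Your added remarks (that the classical theorem covers boundary eigenvectors, and that the coherences within the complement of $I$ are left unconstrained) are accurate clarifications of points the paper leaves implicit.
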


\begin{proof}
By Lemma \ref{perumtation_coherification} the diagonal elements of the fixed point of $\Phi_D$, $\rho_{\;i}^i$, must be equal to the coefficients of some generalized eigenvectors of $T$: $p_i$. Moreover, Theorem \ref{m_stoch_tensor_finale} implies that $p_i = 0$ if $i \in I$ and $p_i = \frac{1}{N-k}$ if $i \notin I$, hence the values of diagonal terms of $\rho$ follows. The off-diagonal terms $\rho_{\;i}^j$ for $i \in I$ or $j \in I$ are equal $0$ due to positivity of $\rho$, since at least one of the diagonal terms $\rho_{\; i}^i$ or $\rho_{\;j}^j$ in the minor:
\begin{equation*}
\begin{pmatrix}
\rho_{\;i}^i & \rho_{\;i}^j\\
\rho_{\;j}^i & \rho_{\;j}^j
\end{pmatrix}
\end{equation*}
is equal to zero.
\end{proof}

\section{Optimal coherification of tristochastic permutation tensors }\label{sec:coch}

The aim of this section is to identify convolution between quantum states determined by optimal coherification of tristochastic tensors. The main quantity which we choose to maximize is the $2$-norm coherification $C_2$ \cite{Korzekwa_coherifying}, which quantifies the average contribution of the non-diagonal entries of the dynamical matrix $D$,

\begin{equation}
\label{C2_def_eq}
\begin{aligned}
& C_2(\Phi_T) = \sum_{kmln}|(\rho_\Phi)^{k\;n\;}_{\;l\;m}|^2 - \sum_{kn}|(\rho_\Phi)^{k\;n\;}_{\;k\;n}|^2 = \frac{1}{N^4}\sum_{kmln}|(D_T)^{k\;n\;}_{\;l\;m}|^2 - \frac{1}{N^4}\sum_{kmln}|(D_{T,diag})^{k\;n\;}_{\;l\;m}|^2 \\
& = \frac{1}{N^4} \sum_{\mu} \lambda_{D,\mu}^2 - \frac{1}{N^4}\sum_{\nu} \lambda_{D_{T,diag},\nu}^2~.
\end{aligned}
\end{equation}
Here $\Phi_T$ is a given coherification of a tristochastic tensor $T$, $\lambda_{D_{\text{diag}},\mu}$ are eigenvalues of dynamical matrix $D_T$ and $\lambda_{D_{T,\text{diag}},\mu}$ are eigenvalues of the dynamical matrix of diagonal coherification - see example \eqref{simplest_pertm_D_diag}. Expression \eqref{C2_def_eq} without extracting diagonal terms is also called the purity of a channel $\gamma(\Phi_T)$. The first pair of indices $k,l$ run over the set $\{1,\cdots,N\}$, whereas the second one $n,m$ over the set $\{1,\cdots,N^2\}$.

The difference between the entropies of the rescaled Choi dynamical matrix and of its diagonal leads to another, entopic measure of coherence \cite{Korzekwa_coherifying} $C_e$:
\begin{equation}
C_e(\Phi_D) = S(D_T/N^2) - S(D_{T,diag}/N^2),  
\end{equation}
where $S$ is von Neumann entropy and the prefactor $N^2$ comes from the normalization of the Jamiołkowski state, $\rho_{\Phi} = D/N^2$. 

\subsection{Coherification of the tristochastic permutation tensor of dimension $2$}

We start by discussing the coherification of the permutation tensor \eqref{simplest_pertm_A}, rewriting the channel $\Phi_T$ in the Kraus representation,

\begin{equation}
    \Phi_T(\rho) = \sum_{i=1}^{k} K_i^\dagger \rho K_i~.
\end{equation}
Since $\Phi_T$ is a map $\Omega_2 \otimes \Omega_2 \to \Omega_2$, each Kraus operator $K_i$ is represented by a rectangular matrix of size $2 \times 4$.
The dynamical matrix $D_T$ can be expressed by entries of Kraus operators:
\begin{equation*}
    (D_T)_{\; b \; d}^{a \;c \;} = \sum_i^k (K_i)_{\; c}^a  \overline{(K_i)_{\;d}^b}~.
\end{equation*}
Where $k$ is the number of Krauss operators.
For further details consult Appendix \ref{Kraus_app}.
Comparing the diagonal elements we immediately get $ \left( D_T \right)^{a \; c}_{\; a \; c} = \sum_i |(K_i)^a_{\; c}|^2 $, hence  the Kraus operators must be of the form,
\begin{equation}
    K_i = \begin{pmatrix}
        a_i & 0 & 0 & d_i \\
        0 & f_i & g_i & 0
        \end{pmatrix}~,~~i = 1, \cdots, k
\end{equation}
with $||a||^2 = ||b ||^2 = ||c||^2 = ||d||^2 = 1$. 
Further simplification comes form the trace preserving property, $\Tr_1 D_T = \id$, equivalent to  $\sum_{i=1}^{k} K_i^\dagger K_i = \id$. This implies
\begin{equation}
    \left(
\begin{array}{cccc}
 || a|| ^2 & 0 & 0 & \la a| d \ra \\
 0 & || f||^2 &\la f| g \ra& 0 \\
 0 &\la g| f \ra & || g|| ^2 & 0 \\
 \la d| a \ra & 0 & 0 & || d|| ^2 \\
\end{array}
\right) = 
\left(
    \begin{array}{cccc}
     1 & 0 & 0 & 0 \\
     0 & 1 & 0 & 0 \\
     0 & 0 & 1 & 0 \\
     0 & 0 & 0 & 1 \\
    \end{array}
    \right)~.
    \label{matseq}
\end{equation}
Thus four scalar products are equal to zero and the remaining free terms are $z_1 = \la f|a \ra$, $z_2 = \la g|a\ra$, $z_3 = \la f|d\ra$, $z_4 = \la g|d\ra$ and their conjugates.
The resulting dynamical matrix $D_T$  for $T$ from \eqref{simplest_pertm_A} reads,
\begin{equation}
\label{bigchoi}
D_T = \left(
\begin{array}{cccccccc}
 1 & 0 & 0 & 0 & 0 & z_1 & z_2 & 0 \\
 0 & 0 & 0 & 0 & 0 & 0 & 0 & 0 \\
 0 & 0 & 0 & 0 & 0 & 0 & 0 & 0 \\
 0 & 0 & 0 & 1  & 0 & z_3 & z_4 & 0 \\
 0 & 0 & 0 & 0 & 0 & 0 & 0 & 0 \\
 \bar{z_1} & 0 & 0 & \bar{z_3} & 0 & 1 & 0 & 0 \\
 \bar{z_2} & 0 & 0 & \bar{z_4} & 0 & 0 &  1 & 0 \\
 0 & 0 & 0 & 0 & 0 & 0 & 0 & 0 \\
\end{array}
\right)~.
\end{equation}
In this case, the expression for purity of the corresponding channel $\Phi_T$ reduces to
\begin{equation}
\label{gamma_phi_eq}
    \gamma(\Phi_T) = \frac{1}{16} \left[ 4 + 2 \left( |z_1|^2 + |z_2|^2 + |z_3|^2 + |z_4|^2 \right) \right]~.
\end{equation}
Notice that $|f\ra$ and $|g \ra$ are two orthonormal vectors, therefore one can span an orthonormal basis including those two vectors $|f\ra = |e_1\ra$ and $|g\ra = |e_2\ra$. The number of basis vectors depends on the number $k$ of Kraus operators $K_i$. Hence the term with the inner products in \eqref{gamma_phi_eq} can be rewritten as
\begin{equation*}
|z_1|^2 + |z_2|^2 + |z_3|^2 + |z_4|^2  =  |a_1|^2 + |a_2|^2 + |d_1|^2 + |d_2|^2 = 1 - \sum_{i=3}^k |a_i|^2 + 1 - \sum_{i=3}^k |d_i|^2 \leq 2~.
\end{equation*}
Here $a_i$, $d_i$ denote the components of vectors $|a\ra$, $|b\ra$ in the  basis mentioned above.
Thus we found an upper bound on the channel purity $\gamma(\Phi_T)$, which can be achieved using $k = 2$ Kraus operators.
The maximal value of 2-norm coherence is,
\begin{equation}
\label{bound}
    \mathcal{C}_2(\Phi_T) = \gamma(\Phi_T) - \frac{1}{16} \operatorname{Tr}(T T^\dagger) = \gamma(\Phi_T) - \frac{1}{4} \leq \frac{1}{16} \left( 4 + 2 \cdot 2\right) - \frac{1}{4} = \frac{1}{4}~.
\end{equation}

In Appendix \ref{app:explicit_form} we present the final expression for the optimal coherification $\Phi_T$ of a tensor $T$.
Moreover, in  Appendix \ref{app:qutib_gen_coch} we propose our candidate for the optimal coherification of any $N = 2$ dimensional tristochastic tensor with respect to the $2$-norm coherification \eqref{C2_def_eq}.

\subsection{Coherification of permutation tensors}

Knowing how to construct an optimal coherification of the permutation tensor of dimension $2$, we can generalize this method for an arbitrary permutation tensor of dimension $N$.

Let us start with the condition $\sum_i |(K_i)_{j}^{\;kl}|^2 = T_{jkl}$, which implies that each Kraus operator has at most $N^2$ nonzero entries. Let us denote the value of $n^{\text{th}}$ nonzero entry in $j^{\text{th}}$ row of $i^{\text{th}}$ Kraus operator as $B_{i n}^j$, with scalar product between them understood as $\la B_n^j | B_{n'}^{j'} \ra = \sum_i \bar{B}_{i,n}^j B_{i,n'}^{j'}$. For example, for the permutation tensor \eqref{example_permutation_tensor} of order three, the Kraus operators are parametrized as
\begin{equation}
\label{eq41}
K_i = \left(\begin{matrix}
B_{i1}^{1} & 0 & 0 & 0 & B_{i2}^{1} & 0 & 0 & 0 & B_{i3}^{1}\\
0 & B_{i1}^{2} & 0 & 0 & 0 & B_{i2}^{2} & B_{i3}^{2} & 0 & 0\\
0 & 0 & B_{i1}^{3} & B_{i2}^{3} & 0 & 0 & 0 & B_{\;i3}^{3} & 0\\
\end{matrix}\right)~.
\end{equation}

Since $T$ is a permutation tensor there is only a single nonzero element in each column in $K_i$. Hence checking the trace preserving condition, $K_i^\dagger K_i = \id$, we obtain $||\;|B_n^j\ra||^2 = 1$ on the diagonal, and scalar products $\la B_{n_1}^j| B_{n_2}^j \ra = 0$, outside the diagonal.
Therefore, for each $j$ the set $\{|B_n^j \ra\}_{n = 1}^N$ is an orthonormal set with $N$ elements and the minimal number of Kraus operators is $N$.

Calculation of $2$-norm coherence  $C_2(\Phi)$ boils down to a sum of squared norms of projections of vectors onto elements of a certain basis: 

\begin{equation}
\begin{aligned}
& C_2(\Phi_D) = \frac{1}{N^4}  \sum_{klab}|(D_T)^{k\;a\;}_{\;l\;b}|^2  - \frac{1}{N^4} \sum_{ka}|(D_T)^{k\;a\;}_{\;k\;a}|^2 =  \\
& ~~= \frac{1}{N^4} \sum_{k,n,l,m} |\la B_n^k|B_m^l  \ra|^2 - \frac{1}{N^4} \sum_{k,n} |\la B_n^k|B_n^k  \ra|^2 = \frac{1}{N^4} \sum_{k,n,l,m} |\la B_n^k|B_m^l  \ra|^2 - \frac{1}{N^2}.
\end{aligned}
\end{equation}
As in the case of qubits, the result is maximal, if all vectors $|a_{(k,n)}\ra$ belong to the same $N$ dimensional space, so none of their components in any basis would be lost. In such a case the resulting coherence of the channel $\Phi_T$ reads,

\begin{equation}
\begin{aligned}
& C_2(\Phi_T) = \frac{1}{N^4} \sum_{k, l,m} \sum_{n}  |\la B_n^k|B_m^l  \ra|^2  - \frac{1}{N^2}= \frac{1}{N^4} \sum_{k,l,m} 1 - \frac{1}{N^2} =  \frac{N-1}{N^2} ~.
\end{aligned}
\end{equation}

To calculate the maximal entropic coherence, first note that the eigenvalues of $\rho_{\Phi}$ are equal to
\begin{equation*}
\frac{1}{N^2}|| K_i||^2 =\frac{1}{N^2} \sum_{j} \sum_{n}|\la i |B_n^j \ra|^2 =\frac{1}{N^2} \sum_{j}  || \;|i\ra ||^2  =\frac{1}{N^2} \sum_{j} 1 = \frac{1}{N}~,
\end{equation*}
Where in the second step we used the fact that $\{|B_n^j \ra\}_{n = 1}^N$ are orthogonal bases. Therefore the optimal entropic coherence is equal to

\begin{equation*}
C_2(\Phi) = S(\text{diag}(\rho_{\Phi})) -  S(\rho_{\Phi}) = - \ln(1/N^2)  + \ln(1/N)  = \ln(N) ~.
\end{equation*}

\section{Convolution of quantum states}\label{sec:Qconvolution}

Having understood the structure of coherifications for permutation tensors, we can move on to their implementation.
The Theorem presented below serves as a helpful tool in achieving this goal.

\begin{thm}
\label{lem_channel_generation}
Let $B^{k}$ denote unitary matrices, whose rows correspond to basis vectors in $\fC^N$.
Then for each coherification $\Phi_T$ of a permutation tensor $T$ there exist a unitary matrix of the form $U = B P$, where $P$ is a permutation matrix of dimension $N^2$ and $B$ is a block diagonal matrix with diagonal given by a simple sum of $N$ blocks, $B = B^1\oplus \cdots \oplus B^N$, such that:
\begin{equation}
\label{eq51}
\rho_1 \star_T \rho_2  := \Phi_T[\rho_1, \rho_2]= \Tr_2[U(\rho_1 \otimes \rho_2)U^\dagger] ~,
\end{equation}
\end{thm}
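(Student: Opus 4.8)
The plan is to exhibit the channel as a Stinespring dilation in which the second, traced-out tensor factor plays the role of the environment, and then to recognise the promised factorisation $U = BP$ from the rigid sparsity pattern that the permutation tensor forces on the Kraus operators. I would begin from the $N$-operator Kraus representation established in the previous subsection: each $K_i$ is an $N \times N^2$ matrix whose only nonzero entries sit in row $j$ at the columns $(c,d)$ with $T_{jcd}=1$, and whose values are the components $B_{in}^{\,j} = \langle i | B_n^j\rangle$ of the orthonormal families $\{|B_n^j\rangle\}_{n=1}^N$ (displayed for $N=3$ in \eqref{eq41}), so that each matrix $B^j = (B_{in}^{\,j})_{i,n}$ is unitary.

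The first step is to stack these Kraus operators into a single square matrix by promoting the Kraus label $i$ to an environment index. Indexing the rows of $U$ by pairs $(a,i)$, with $a\in\{1,\dots,N\}$ the output index and $i\in\{1,\dots,N\}$ the environment, and the columns by the input product index $(c,d)$, I set $U_{(a,i),(c,d)} = (K_i)_{a,(c,d)}$. A short computation then gives $\Tr_2[U(\rho_1\otimes\rho_2)U^\dagger] = \sum_i K_i(\rho_1\otimes\rho_2)K_i^\dagger = \Phi_T[\rho_1,\rho_2]$, which is exactly \eqref{eq51} once $U$ is known to be unitary. Unitarity is then read off from the trace-preservation identity $\sum_i K_i^\dagger K_i = \id_{N^2}$: it states precisely that the $N^2$ columns of the square matrix $U$ are orthonormal, and a square matrix with orthonormal columns is unitary.

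The core of the argument is the identification $U=BP$, and this is where the permutation structure of $T$ enters decisively. Using the stochasticity condition $\sum_a T_{acd}=1$ of \eqref{tristoch_prob_eq} together with $T_{acd}\in\{0,1\}$, every input pair $(c,d)$ is routed to a unique output $a=a(c,d)$ with $T_{a(c,d),c,d}=1$, and a counting argument (each fixed $a$ receives exactly $N$ pairs, since $\sum_{cd}T_{acd}=N$) shows that the assignment $(c,d)\mapsto(a(c,d),n(c,d))$, where $n(c,d)$ records the rank of $(c,d)$ among the nonzeros of row $a(c,d)$, is a bijection of $\{1,\dots,N\}^2$. This bijection is the permutation matrix $P$. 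Because the column $(c,d)$ of $U$ is supported entirely in the row-block labelled $a(c,d)$, reordering the columns of $U$ by $P$ produces a block-diagonal matrix $B=UP^{-1}$ whose $a$-th block has entries $B_{in}^{\,a}$; each such block is precisely the unitary $B^a$, so $B=B^1\oplus\cdots\oplus B^N$ and $U=BP$ as claimed.

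I expect the main difficulty to be organisational rather than conceptual: one must keep three index families straight (the output $a$, the environment $i$, and the within-block position $n$), verify carefully that the routing map $(c,d)\mapsto(a,n)$ is genuinely a bijection so that $P$ is an honest $N^2$-dimensional permutation, and reconcile the row-versus-column (transpose) convention for the basis vectors inside each $B^a$ with the placement of daggers in the dilation formula. These are exactly the places where the tristochasticity identities $\sum_a T_{acd}=1$ and $\sum_{cd}T_{acd}=N$ are indispensable, and where one must confirm that the assembled $U$ is simultaneously square, unitary, and of the required block-times-permutation form.
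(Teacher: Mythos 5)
Your proposal is correct and follows essentially the same route as the paper: both stack the $N$ Kraus operators into $U=\sum_i K_i\otimes|i\rangle$, verify $\Tr_2[U(\rho_1\otimes\rho_2)U^\dagger]=\sum_i K_i(\rho_1\otimes\rho_2)K_i^\dagger$, and obtain the block structure by permuting columns according to which output row each input pair $(c,d)$ is routed to by the permutation tensor. Your version merely spells out two points the paper leaves implicit, namely the unitarity of the square matrix $U$ from $\sum_i K_i^\dagger K_i=\id$ and the bijectivity of the map $(c,d)\mapsto(a(c,d),n(c,d))$ defining $P$.
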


\begin{proof}
The coherification of the permutation tensor gives exactly $N$ Kraus operators of the form \eqref{eq41}, hence we can define unitary operator $U = \sum_i K_i \otimes |i\ra$, so that:

\begin{equation}
\label{eq52}
\Tr_2[U(\rho_1 \otimes \rho_2)U^\dagger] = \Tr_2[(\sum_i K_i \otimes |i\ra)(\rho_1 \otimes \rho_2)(\sum_j K_j^\dagger \otimes \la j|)] = \sum_i K_i (\rho_1 \otimes \rho_2) K_i^\dagger = \Phi_T[\rho_1,\rho_2] ~.
\end{equation}

In the case of the permutation tensor $T_3$ of order three defined in eq. \eqref{example_permutation_tensor} the matrix $U$ of order nine has the form:

\begin{equation}
\label{eq53}
U = \left(\begin{matrix}
B_{11}^{1} & 0 & 0 & 0 & B_{12}^{1} & 0 & 0 & 0 & B_{13}^{1}\\
B_{21}^{1} & 0 & 0 & 0 & B_{22}^{1} & 0 & 0 & 0 & B_{23}^{1}\\
B_{31}^{1} & 0 & 0 & 0 & B_{32}^{1} & 0 & 0 & 0 & B_{33}^{1}\\
0 & B_{11}^{2} & 0 & 0 & 0 & B_{12}^{2} & B_{13}^{2} & 0 & 0\\
0 & B_{21}^{2} & 0 & 0 & 0 & B_{22}^{2} & B_{23}^{2} & 0 & 0\\
0 & B_{31}^{2} & 0 & 0 & 0 & B_{32}^{2} & B_{33}^{2} & 0 & 0\\
0 & 0 & B_{11}^{3} & B_{12}^{3} & 0 & 0 & 0 & B_{13}^{3} & 0\\
0 & 0 & B_{21}^{3} & B_{22}^{3} & 0 & 0 & 0 & B_{23}^{3} & 0\\
0 & 0 & B_{31}^{3} & B_{32}^{3} & 0 & 0 & 0 & B_{33}^{3} & 0\\
\end{matrix}\right)~,
\end{equation}
where $B_{in}^k$ are simultaneously the coefficients from eq. \eqref{eq41} and the matrix elements of $B^k$.
In general, for a permutation tensor $T$, the unitary $U$ is of the form:
\begin{equation}
U_{ki,lj} = A_{klj} B_{il}^{k}~.
\end{equation}

The last step is the observation that there always exists a permutation of columns, such that the $U P^\top = B$ achieves the desired block structure.
    
\end{proof}

\begin{rmk}
In general there, are many equivalent sets of Kraus operators defining the same channel, hence to find only the relevant information describing any channel one should rather consider its dynamical matrix $D$.
The values of the dynamical matrix contain only scalar products $\la B_{n}^k, | B_{m}^l\ra$. Thus one can rotate all vectors $|B_{n}^k\ra$ without affecting a channel action. Therefore in Theorem \ref{lem_channel_generation} we may fix the first diagonal block $B^1 =\mathbb{I}_N$.
\end{rmk}

\subsection{Channel generation and quantum circuit for qubits}

As an illustrative example let us examine the coherification of the simplest tristochastic tensor \eqref{simplest_pertm_A}.

Note that the dynamical matrix of the channel \eqref{bigchoi_2} depends only on the scalar product between vectors $|a\ra$, $|d\ra$, and $|f\ra$, $|g\ra$. Therefore we might rotate the parameter vectors $|a\ra$, $|d\ra$ by any unitary matrix $u$ of order two, and $|f\ra$, $|g\ra$ by $u^\dagger$ without affecting the action of the channel. Using this freedom we set $|a\ra = |1\ra$ and $|b\ra = |2\ra$. Hence the Kraus operators for any channel in this set are defined by only three phases:\\ $\alpha \in [0,2 \pi]$, $\theta \in [0,\pi]$ and $\phi \in [0,2\pi]$:
\begin{equation}
\label{two_q_channel_krauss}
    \begin{gathered}
    K_1 = \begin{pmatrix}
        1 & 0 & 0 & 0 \\
        0 & e^{i \alpha} \cos{\frac{\theta}{2}} & e^{i \alpha} \sin{\frac{\theta}{2}} & 0
        \end{pmatrix}~,
    \quad
    K_2 = \begin{pmatrix}
        0 & 0 & 0 & 1 \\
        0 & e^{i \alpha}e^{i \phi} \sin{\frac{\theta}{2}} & -e^{i \alpha}e^{i \phi} \cos{\frac{\theta}{2}} & 0
        \end{pmatrix}~~
    \end{gathered}
\end{equation}

Repeating the steps from eq. \eqref{lem_channel_generation} we get the unitary matrix $U_4$,

\begin{equation}
\label{two_q_channel}
    \begin{aligned}
        U_4 = K_1 \otimes |1\ra + K_2 \otimes |2\ra =
        \left(
            \begin{array}{cccc}
            1 & 0 & 0 & 0 \\
            0 & 0 & 0 & 1 \\
            0 & e^{i \alpha} \cos{\frac{\theta}{2}} & e^{i \alpha} \sin{\frac{\theta}{2}} & 0 \\
            0 & e^{i \alpha}e^{i \phi} \sin{\frac{\theta}{2}} & -e^{i \alpha}e^{i \phi} \cos{\frac{\theta}{2}} & 0 \\
        \end{array}
        \right)~.
    \end{aligned}
\end{equation} 

To design the corresponding circuit we  decompose an operation $U_4$ into gates from the universal set \cite{Nielsen_Chuang}
\begin{equation}
    \Qcircuit @C=1.0em @R=.7em {
    & \qw & \multigate{1}{U_4(\alpha,\theta,\phi)} & \qw & \raisebox{-2.8 em}{=} &  & \qw &
    \gate{H} & \ctrl{1} & \gate{H} & \ctrl{1} & \multigate{1}{\Lambda(\alpha,\theta,\phi)} & \qw \\
    & \qw & \ghost{U_4 (\alpha, \theta, \phi)} & \qw & & & \qw & \gate{H} & \targ    & \gate{H} & \targ    & \ghost{\Lambda(\alpha,\theta,\phi)}        & \qw \\
    }~,
\end{equation}
where the gate $\Lambda(\alpha,\theta,\phi)$ can be further decomposed  by  $Z(\alpha) =  |0\ra\la0| + e^{i\alpha} |1\ra\la 1| $  and $X(\alpha) = |+\ra\la +| + e^{i\alpha} |-\ra\la -| $~,

\begin{equation}
    \Qcircuit @C=1.0em @R=.7em {
    & \qw & \multigate{1}{\Lambda(\alpha,\theta,\phi)} & \qw & \raisebox{-2.8 em}{=} & & \qw & \gate{Z\left(\alpha - \frac{\theta}{2}\right)} & \qw & \ctrl{1} & \qw & \ctrl{1} & \qw & \ctrl{1} & \qw \\
    & \qw & \ghost{\Lambda(\alpha,\theta,\phi)}        & \qw & & & \qw & \qw & \qw & \gate{Z\left(\frac {\pi}{2}\right)} & \qw & \gate{X\left(\theta\right)} & \qw & \gate{Z\left(\phi+\frac {\pi}{2}\right)} & \qw\\
    }~.
\end{equation}

The gate $U_4$ can be  decomposed in an alternative, more transparent way,

\begin{equation}
    \Qcircuit @C=1.0em @R=.7em {
     & \qw & \multigate{1}{U_4 (\alpha, \theta, \phi)} & \qw & \raisebox{-2.8 em}{=} & & \qw & \gate{e^{-i \phi /2}} & \multigate{1}{Q.CONV (\theta)} & \gate{e^{i \phi /2}} & \gate{e^{i \alpha}} & \qw \\
    &  \qw & \ghost{U_4 (\alpha, \theta, \phi)} & & & & \qw &  \gate{e^{-i \phi /2}} & \ghost{Q.CONV (\theta)} & \qw & \qw &\qw \\
    }~,
\end{equation}
where $Q.CONV (\theta) = U_4(0, \theta, 0)$ is a two-qubit gate, dependent only on the parameter $\theta$.
Therefore the parameter $\alpha$ is an additional global phase of the resulting state and often can be set to $0$. 
The role of the parameter $\phi$ is similar and it corresponds to applying a $- \phi /2$ phase gate on both states before the convolution and $+ \phi /2 $ phase gate after it.
Thus only the angle $\theta$ genuinely influences the way how the convolution acts.

\begin{figure}[h]
    \centering
        \includegraphics[height=4in]{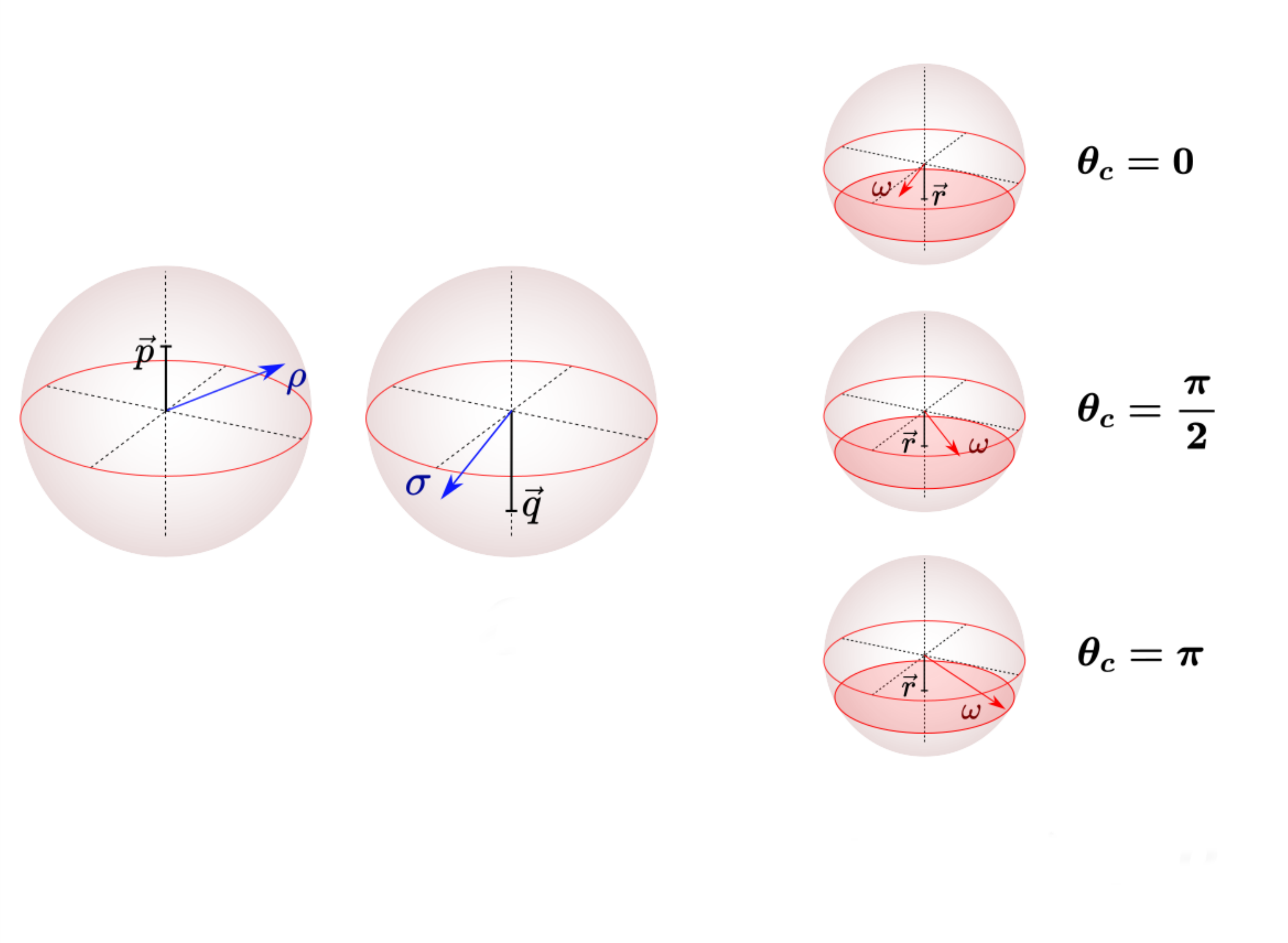}
     \centering
     \caption{\label{fig:action} The action of two qubit convolution \eqref{two_q_channel},  $\omega = \rho \star_T \sigma$,  with parameters $\alpha$, $\phi$ set to $0$, for pure states $\rho$, $\sigma$ visualized on the Bloch ball by blue vectors.
     Three versions of the convolution operator parametrized by the angle $\theta = 0, \frac{\pi}{2}$ and $\pi$ are shown in red, and the decoherent input $\vec{p}$, $\vec{q}$ and output $\vec{r}$ states are denoted by black vertical segments. The red plane marks all possible outputs of $U_{4}(\alpha,\theta,\phi)$, for given input states. }
\end{figure}

\subsection{Further study of qubit convolution}

The choice of the optimal convolution \eqref{two_q_channel} depends on the way, how these operators will be used. An example, proposed in Ref. \cite{QCNN1}  concerns coding against the correlated noise. 

Notice that if we encode one-bit values $0 \to 00$ (or $11$) and $1 \to 10$ (or $01$), and allow for action of the noise that may flip simultaneously both bits, the permutation tensor \eqref{simplest_pertm_A} recovers the original value.
A similar procedure can be performed on the quantum level. Firstly we enlarge the Hilbert space of the system
\begin{equation*}
\rho \to \rho \otimes |0\ra\la 0 |~.
\end{equation*}
Since matrix $U_4$ corresponds to decoding and error correction we use $U_4^\dagger$ as an encoding, so that with an absence of noise their joint action compensates. 

Consider a noise of the form $R(\vec{r})\otimes R(\vec{r}) $, where $R(r)$ is a rotation of a qubit along a vector $r$, with a phase proportional to $|r|$:
\begin{equation*}
R(\vec{r}) = \exp(i \frac{\pi}{2} \left(r_1 \sigma_1 + r_2 \sigma_2 + r_3 \sigma_3\right) ) = \cos\left(\frac{\pi|r|}{2}\right)\id + i \sin\left(\frac{\pi|r|}{2}\right) \left(\hat{r} \vec{\sigma}\right).
\end{equation*}
Operation of encoding, transformation by the noisy channel and decoding are given by the following evolution of a given state $\rho$, 
\begin{equation}
\label{eq61}
\rho \to \Phi(\rho) := \Tr_2\left[\left(U (R(r)\otimes R(r)) U^\dagger\right)(\rho \otimes |0\ra\la0|)\left(U (R(r)^\dagger\otimes R(r)^\dagger) U^\dagger\right) \right].
\end{equation}

Considering the action of \eqref{eq61} we  calculate the fidelity between the input $\rho$ and output $\Phi(\rho)$, which for the pure state $\rho = |\psi\ra\la\psi|$ reads
\begin{equation*}
 F(|\psi\ra\la\psi|, \Phi(|\psi\ra\la\psi|)) = \la \psi|\Phi(|\psi\ra\la\psi|)|\psi\ra~.
\end{equation*}
Such a value describes how much noise diminishes our capabilities to recognise the outcome state.
For the basis states $|0\ra$ and $|1\ra$ on gets the following values
\begin{equation*}
\begin{aligned}
& F(|0\ra\la 0|, \Phi(|0\ra\la0|))  = w^2 \sin ^4\left(\frac{\pi  r}{2}\right)+\frac{1}{4}\left(w \cos (\pi  r)+\hat{r}_3^2+1\right)^2~,\\
& F(|1\ra\la 1|, \Phi(|1\ra\la1|)) = 1- \frac{1}{2} w (\sin (\theta )+1) \left(\hat{r}_3^2 (\cos (\pi  r)-1)^2+\sin ^2(\pi  r)\right),
\end{aligned}
\end{equation*}
with $w = 1 -  \hat{r}_3^2$. The results depend only on a single parameter describing the matrix $U_4$, the angle $\theta$, appearing only in the second expression. The optimal value $\theta_{opt} = -\frac{\pi}{2}$ corresponds to:
\begin{equation*}
\begin{aligned}
& F(|1\ra\la 1|, \Phi(|1\ra\la1|))|_{\theta = -\frac{\pi}{2}}   = 1 ~.\\
\end{aligned}
\end{equation*}

Another, more abstract, way to determine the influence of parameters $\alpha$, $\theta$, $\phi$ in the qubit convolution is to examine entangling power \cite{ep_Zanardi} and gate typicality \cite{JMZL_entanglement_measures} of any two-qubit unitary gate $U_4$.

Entangling power $e_p \in [0,1]$ of a channel $U$ is a quantity describing, how much the outcome $U (|\psi\ra \otimes |\phi\ra)$ is entangled on average for random input pure states $|\psi\ra$ and $|\phi\ra$. After the partial trace in \eqref{eq51} it gives us insight how much the result of the convolution becomes mixed. The entangling power of $U_4$ achieves the maximal value of qubit channels equal$e_(U_4) = 2/3$, independent of the parameters $\theta, \phi, \alpha$. 
Another measure, the gate typicality $g_t \in [0,1]$, specifies how much input states have been "interchanged" during the action of the unitary channel and for $U_4$ obtain values:

\begin{equation*}
    g_t(U_2) = \frac{1}{6}\left(3 - \cos \theta  \right).
\end{equation*}
Hence for $\theta = 0$ more information from the first state is lost during the partial trace, and for $\theta = \pi$ more information from the second state is lost during the partial trace. The case $\theta = \pm \pi/2$ corresponds to the "symmetric treatment" of both input states which can be desired.
Detailed discussion of entangling power and gate typicality is provided in Appendix \ref{app:ep_gt}.
Our numerical simulations also suggest that setting $\theta = \pm \frac{\pi}{2}$ guarantees the slowest rate of entropy increase in multiple convolution schemes.

Note that for $\theta = \pm \frac{\pi}{2}$ the basis $\{|a\ra, |d\ra \}$ and $\{|f\ra, |g\ra \}$ in \eqref{bigchoi} are mutually unbiased \cite{Wootters}. 
Thus, we presume that the most preferable coherification of an arbitrary permutation tensor usually has all the bases $\{|a_{(k,l)}\ra\}_{l = 1}^N$ mutually unbiased. Such a scheme is doable for any dimension $N$, which is a prime or a power of a prime \cite{Wootters}.

\section{Concluding remarks}

In this work, we presented a family of products $r = p \star_A q$, parametrized by a tristochastic tensor $A$,  defined on the set of classical probability vectors. Furthermore, we analysed the discrete dynamics induced by $m$-stochastic tensors. Investigations performed from the perspective of generalized Markov processes led to the characterization of the generalized eigenvectors. An alternative perspective of binary operations allowed us to study the connectivity, commutativity, and the existence of a neutral element and inverse elements for these operations.

The above notions and results were translated into the quantum setup. We analyzed tristochastic and multi-stochastic quantum operations and studied their properties.
In the next step, we enlarged a class of discussed channels defining \textit{coherification} \cite{Korzekwa_coherifying} of $m$-stochastic tensors to take full advantage of the quantum properties. We provided an explicit way to construct a coherification matrix $D \geq 0$ of tristochastic permutation tensors with maximal norm two coherence, which yields a constructive recipe to convolute arbitrary two quantum density matrices of the same dimension $\omega = \rho \times_D \sigma$ such that the coherence in preserved as much as possible.
Finally, we analyzed this class of convolutions for qubits, discussing their properties and possible applications.

Our results raise several questions worthy further study. First and foremost, the action of quantum $m$-stochastic tensors and coherification of $m$-stochastic channels should be examined for the case of entangled states as the input of the convolution operator, which is the more natural assumption in the case of convolutions in quantum convolution neural networks \cite{Hartmann}. Another useful topic is the collective behaviour of "interconnected" convolutions operating across multiple subsystems. This leads us also to the issue of parametrization and implementation of convolution, which was discussed here only in the simplest case of the convolution of two single qubit states.

\textit{Acknowledgements:} 

It is a pleasure to thank Kamil Korzewa, Zbigniew Puchała, Martin Seltmann, Fereshte Shahbeigi and Norbert Steczyński for fruitful discussions.
Financial support by NCN under the Quantera project no. 2021/03/Y/ST2/00193 and by Foundation for Polish Science under the Team-Net
project no. POIR.04.04.00-00-17C1/18-00 is gratefully acknowledged.

\newpage
\newpage
\appendix

\section{On binary quantum channels}\label{app:on_chanels}

In this Appendix, we recall the basic properties of binary quantum channels
\begin{equation*}
    \Phi_D: \Omega_N \otimes \Omega_N \to \Omega_N~,
\end{equation*}
that map two density matrices of order $N$ to another density matrix of the same size.
We temporarily denote these three Hilbert spaces as $\mathcal{A}$, $\mathcal{B}$ and $\mathcal{C}$ and the corresponding sets of density matrices as $\Omega_A$, $\Omega_B$, $\Omega_B$.
It suffices to give a definition of such a channel for separable states and extend it by linearity
\begin{equation}
    \Phi(\alpha \cdot \rho_0\otimes\sigma_0 +  \beta \cdot \rho_1\otimes\sigma_1) = \alpha \cdot \Phi(\rho_0\otimes\sigma_0) + \beta \cdot \Phi(\rho_1\otimes\sigma_1)~.
\end{equation}
Such maps can be represented in various ways. The following ones occur to be the most convenient for our work.

\subsection{Dynamical matrix representation via Choi-Jamiołkowski isomorphism}

In general, any completely positive trace preserving channel $\Phi$ between Hilbert spaces $\mathcal{X}$ and $\mathcal{Y}$ can be expressed using Choi-Jamiołkowski isomorphism \cite{Choi, Jamiolkowski} as
\begin{equation}
    \Phi (\rho) = \operatorname{Tr}_{X}(D (\mathbb{I}_Y \otimes \rho^\top ))~,
\end{equation}
with $D \geq 0$ and $\Tr_A[D] = \id$.
In this work, we study the generalized case of two inputs, $\mathcal{X} \rightarrow \mathcal{A} \otimes \mathcal{B}$; $\mathcal{Y} \rightarrow \mathcal{C}$. Then such an expression takes the form,

\begin{equation}
\label{binary_channel}
\rho \star_D \sigma := \Phi_D (\rho\otimes\sigma) = \operatorname{Tr}_{AB}(D (\mathbb{I}_C \otimes \rho^\top \otimes \sigma^\top )) ~.
\end{equation}

Any valid dynamical matrix, $D = D_{ABC}$, acting on the space $A\otimes B \otimes C$, induces a binary operation on quantum states $\rho \star_D \sigma  = \Phi_D (\rho\otimes\sigma)$. 
The necessary and sufficient conditions for trace preserving and complete positivity of such a channel in the Choi representation are
\begin{enumerate}
    \item $\Tr_{C}[D_{ABC}] = \mathbb{I}_{AB}$~,
    \item $D_{ABC}$ is a positive operator, $D_{ABC} \geq 0$ .
\end{enumerate}
The requirement of tristochasticity is just a repetition of the first condition for partial traces over the subsystems $A$ and $B$.

\subsection{Kraus representation}\label{Kraus_app}

Kraus operators for binary operation are linear maps between spaces $\mathcal{A} \otimes \mathcal{B}$ and $\mathcal{C}$, which means that they can be represented as $N \times N^2$ rectangular matrices. The action of a channel $\Phi$ in the Kraus representation takes the form.
\begin{equation*}
\Phi(\rho) = \sum_i K_i \rho K_i^\dagger~.
\end{equation*}
The condition of positivity is satisfied automatically, and the condition for trace preservation requires that $\sum_i K_i^\dagger K_i = \mathbb{I}$. A minimal number of Kraus operators necessary to represent the channel $r_\Phi$ is called the \textit{rank} of the channel.

To connect the Kraus representation with the dynamical matrix let us apply the Choi-Jamiołkowski. Let the indices ${\mu, \nu , c,d}$ run from 1 to $N^2$ and correspond to $\mathcal{A}\otimes \mathcal{B}$ space, while
${a,b}$ run from 1 to $N$ and correspond to $\mathcal{C}$ space. Moreover, let $E_{(\mu, \nu)}$ be a matrix with $1$ in the matrix entry $\mu, \nu$ and $0$ in all the others, then
\begin{equation}
\begin{aligned}
&\left( D \right)^{a \; c}_{\; b \; d} = \sum_{\mu, \nu } (\Phi_D(E_{(\mu, \nu)}))^{a}_{\; b} \cdot (E_{(\mu, \nu)})^c_{\; d} = \sum_{\mu, \nu } ( \sum_i K_i E_{(\mu, \nu)} K_i^\dagger  )^{a}_{\; b} \cdot (E_{\mu, \nu})^c_{\; d} = \\
&~~~~~~~~~=  \sum_{\mu, \nu, i} (K_i^\dagger)^a_{\; \mu} (K_i)^\nu_{\; b} \cdot \delta_{\mu c} \delta_{\nu d} = \sum_i (K_i)^a_{\; c} \overline{(K_i)^b_{\; d}} ~.\\
\end{aligned}
\end{equation}

\subsection{Unitary evolution in an enlarged space}

Any channel $\Phi: \ \mathcal{X} \rightarrow \mathcal{Y}$,
can be associated with an isometric transformation $V \in L(\mathcal{X},\mathcal{Y}\otimes \mathcal{Z})$, for certain auxiliary space $\mathcal{Z}$ by so called \textit{Stinespring representation},
\begin{equation}
\label{eq_V_uni}
    \Phi(\rho) = \operatorname{Tr}_{\mathcal{Z}}(V \rho V^\dagger)~.
\end{equation}
The minimum dimension of the auxiliary space $\mathcal{Z}$ is equal to the rank $r_\Phi$ of a channel $\Phi$.
For the channel with equal Hilbert spaces, $\mathcal{X}=\mathcal{Y}$, it is straightforward to rewrite this expression using
some unitary transformation, $U \in L(\mathcal{X}\otimes \mathcal{Z}, \mathcal{Y} \otimes \mathcal{Z})$, defined by the relation $U(|i\ra \otimes |0\ra) := V |i\ra$. One obtains then
\begin{equation}
\label{U_extension}
    \Phi_D(\rho) = \operatorname{Tr}_{Z} \left(U (\rho_X \otimes |0\ra_{\mathcal{Z}} \la0|) U^\dagger\right)~.
\end{equation}

In the case of binary channels \eqref{binary_channel} the description of their action can take various forms. One can treat both arguments in the symmetric way and the auxiliary space plays the role of the output. After unitary evolution and the partial trace over the input spaces $\mathcal{A}$ and $\mathcal{B}$ one obtains the form similarly to \eqref{U_extension}.
\begin{equation}
    \Phi_U (\rho, \sigma) = \rho \star_U \sigma := \operatorname{Tr}_{AB}(U(\rho_A \otimes \sigma_B \otimes \ket{0}_C \bra{0})U^{\dagger})
    \label{defu}
\end{equation}
where $U \in U\left(\mathcal{A}\otimes \mathcal{B}\otimes \mathcal{C} \right)$. This approach is valid for any channel $\Phi_D$ with any rank $r_{\Phi}$.

Alternatively, in a case in which the binary channel has the rank $r_\Phi =N$, another, more compact unitary representation becomes natural.
The isometry $V$ from \eqref{eq_V_uni} is then a unitary operator,
\begin{equation*}
U\in \  L(\mathcal{A}\otimes B, \mathcal{C}\otimes \mathcal{Z}) \equiv L(\H_{N^2}, \H_{N^2}),
\end{equation*}
and the channel can be represented as:
\begin{equation}
\label{steinu}
    \Phi_U (\rho, \sigma) = \rho \star_U \sigma := \operatorname{Tr}_{B}(U(\rho_A \otimes \sigma_B)U^\dagger)~~~.
\end{equation}

Note that the asymmetry concerning both arguments related to the partial trace over subsystem $\mathcal{B}$ is only apparent, as a dual form, corresponding to partial trace over subsystem $\mathcal{A}$ can also be written.

\section{Proofs of technical results}\label{app:proofs}

\subsection{Proofs and calculations form Section 2}\label{app:clasical_conv}

In this subsection, we present the proofs of the theorems stated in section 2, starting with Theorem \eqref{multisto_fixed}, which for convenience is restated here.

\begin{thm}
Let $\vec{q}^{(0)}$ be any point in the interior of probability simplex $\Delta_N$ and $\vec{p}^{(0)}$ be a point at the boundary of the probability simplex $\partial\Delta_N$ such that $\vec{q}^{(0)} = \alpha \vec{e} + (1 -\alpha)\vec{p}^{(0)}$ for some $\alpha \in [0,1)$, Next let us denote  sequences $\{\vec{q}^{(n)}\}$, $\{\vec{p}^{(n)}\}$ by $\vec{q}^{(n+1)} = A[\vec{q}^{(n)},\cdots, \vec{q}^{(n)}]$, $\vec{p}^{(n+1)} = A[\vec{p}^{(n)},\cdots, \vec{p}^{(n)}]$ for any $m$-stochastic map $A$. If $m > 2$ then the sequence $\{\vec{q}^{(n)}\}$ converges to $\vec{e}$.
\end{thm}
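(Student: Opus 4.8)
The plan is to show that the entire orbit $\{\vec{q}^{(n)}\}$ remains on the segment joining $\vec{e}$ to the boundary orbit $\{\vec{p}^{(n)}\}$, with the weight on the boundary point collapsing to zero doubly-exponentially fast. Concretely, I would prove by induction the identity
\begin{equation}
\vec{q}^{(n)} = (1-\beta_n)\,\vec{e} + \beta_n\,\vec{p}^{(n)}, \qquad \beta_n = (1-\alpha)^{(m-1)^n}~.
\end{equation}
The base case $n=0$ is exactly the hypothesis $\vec{q}^{(0)} = \alpha\vec{e} + (1-\alpha)\vec{p}^{(0)}$, with $\beta_0 = 1-\alpha$. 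Note that $\beta_0 < 1$, since $\vec{q}^{(0)}$ lies in the interior while $\vec{p}^{(0)}$ lies on the boundary, forcing a genuine $\vec{e}$-component $\alpha > 0$.

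For the inductive step the central move is to substitute $\vec{q}^{(n)} = (1-\beta_n)\vec{e} + \beta_n\vec{p}^{(n)}$ into $A[\vec{q}^{(n)},\cdots,\vec{q}^{(n)}]$ and expand using the multilinearity of $A$ in each of its $m-1$ arguments. This yields $2^{m-1}$ terms indexed by the subset $S$ of argument slots in which the factor $\beta_n\vec{p}^{(n)}$ is selected rather than $(1-\beta_n)\vec{e}$. Here Lemma \ref{lem_mstoch1} does all the work: every term for which $S$ is a proper subset of the slots contains at least one argument equal to $\vec{e}$, and hence evaluates to $\vec{e}$; only the single term with $S$ the full set of slots, in which all arguments equal $\vec{p}^{(n)}$, survives as $A[\vec{p}^{(n)},\cdots,\vec{p}^{(n)}] = \vec{p}^{(n+1)}$. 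Collecting the scalar coefficients and applying the binomial theorem, the weights attached to all the $\vec{e}$-terms sum to $1-\beta_n^{\,m-1}$, while the surviving term carries weight $\beta_n^{\,m-1}$. Thus $\vec{q}^{(n+1)} = (1-\beta_n^{m-1})\,\vec{e} + \beta_n^{m-1}\,\vec{p}^{(n+1)}$, which is precisely the claimed identity with the recursion $\beta_{n+1} = \beta_n^{\,m-1}$, so that $\beta_n = (1-\alpha)^{(m-1)^n}$.

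To finish, I would observe that the hypothesis $m>2$ gives $m-1 \geq 2$, so the exponent $(m-1)^n$ tends to infinity; combined with $\beta_0 < 1$, this forces $\beta_n \to 0$. Since both $\vec{p}^{(n)}$ and $\vec{e}$ lie in the simplex, one has $\|\vec{q}^{(n)} - \vec{e}\|_1 = \beta_n\,\|\vec{p}^{(n)} - \vec{e}\|_1 \leq 2\beta_n \to 0$, which proves $\vec{q}^{(n)} \to \vec{e}$.

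The only delicate points are bookkeeping ones: verifying via the binomial theorem that the $\vec{e}$-terms reassemble with total weight exactly $1-\beta_n^{m-1}$, and being explicit that it is precisely the assumption $m>2$ that produces the doubly-exponential decay. This is also where the statement would fail for $m=2$: then $\beta_{n+1}=\beta_n$ is constant and the map reduces to iteration of a single stochastic matrix, whose orbit need not approach $\vec{e}$. Beyond the multilinear expansion I expect no genuine obstacle.
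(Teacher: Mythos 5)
Your proposal is correct and follows essentially the same route as the paper: the identical decomposition $\vec{q}^{(n)} = (1-\beta_n)\vec{e} + \beta_n\vec{p}^{(n)}$ with $\beta_n = (1-\alpha)^{(m-1)^n}$, obtained by the same multilinear expansion, Lemma \ref{lem_mstoch1}, and binomial bookkeeping. The only (cosmetic) difference is the final step, where you conclude directly from $\|\vec{q}^{(n)}-\vec{e}\|_1 \leq 2\beta_n \to 0$, whereas the paper takes a slight detour through the $2$-norm of $\vec{q}^{(n)}$ and the fact that $\vec{e}$ is the unique probability vector of norm $1/\sqrt{N}$; your ending is, if anything, cleaner.
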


\begin{proof}
Let us derive the general formula for $\vec{q}^{(n)}$ using $\vec{p}^{(n)}$. Starting from $\vec{q}^{(1)}$ one ges, 
\begin{equation}
\label{q_1}
\begin{aligned}
& \vec{q}^{(1)} = A[\vec{q}^{(0)}, \cdots, \vec{q}^{(0)} ] = A[\alpha \vec{e} + (1 - \alpha) \vec{p}^{(0)},\cdots,\alpha \vec{e} + (1 - \alpha) \vec{p}^{(0)} ] = \\
& ~~~~~ = \sum_{k = 0}^{m-1} \alpha^k (1 - \alpha)^{m-1-k}   \overbrace{ A[\underbrace{\vec{e}, \cdots, \vec{e}}_{k \text{ times}},\underbrace{\vec{p}^{(0)},\cdots,\vec{p}^{(0)}}_{m-1-k  \text{ times} } ]+ \cdots }^{\binom{m-1}{k} \text{ terms with $\vec{e}$ appearing $k$ times}} = \\
& ~~~~~ = \sum_{k = 1}^{m-1} \alpha^k (1 - \alpha)^{m-1-k} \binom{m-1}{k}  \vec{e} + (1 - \alpha)^{m-1} \vec{p}^{(1)} = \\
& ~~~~~ = \left(1 - (1 - \alpha)^{m-1} \right)\vec{e} + (1 - \alpha)^{m-1} \vec{p}^{(1)}~,
\end{aligned}
\end{equation}
where the $4$-th equality follows from Lemma \ref{lem_mstoch1}.
Repeating the above calculations one finds that,
\begin{equation}
\label{q_n}
\vec{q}^{(n)} = \left(1 - (1 - \alpha)^{(m-1)^n} \right)\vec{e} + (1 - \alpha)^{(m-1)^n} \vec{p}^{(n)}~.
\end{equation}

Hence the norm of $\vec{q}^{(n)}$ is equal to:
\begin{equation}
\begin{aligned}
&||\vec{q}^{(n)} ||_2^2 = \\
& = \left(1 - (1 - \alpha)^{(m-1)^n} \right)^2 ||\vec{e}||_2^2 + 2 \left(1 - (1 - \alpha)^{(m-1)^n} \right)(1 - \alpha)^{(m-1)^n} \la\vec{e},\vec{p}^{(n)}  \ra + (1 - \alpha)^{2 (m-1)^n } ||\vec{p}^{(n)}||_2^2 = \\
& =  \left(1 - (1 - \alpha)^{2 (m-1)^n } \right) ||\vec{e}||_2^2 + (1 - \alpha)^{2 (m-1)^n } ||\vec{p}^{(n)}||_2^2 ~.
\end{aligned}
\end{equation}

Since squared norm $||\vec{p}^{(n)}||_2^2$ is bounded by $1$ and $(1 - \alpha)^{2 (m-1)^n }$ converges to 0, we arrive at
\begin{equation}
\lim_{n \to \infty} ||\vec{q}^{(n)} ||_2^2 = ||\vec{e}||_2^2 = \frac{1}{N}~.
\end{equation}
Because each $\vec{q}^{(n)}$ lie inside the probability simplex and $\vec{e}$ is the only vector inside probability simplex with norm equal $1/\sqrt{N}$, we get:
\begin{equation*}
\lim_{n \to \infty} \vec{q}^{(n)} = \vec{e} ~.
\end{equation*}
\end{proof}

Furthermore, we present below a proof of Lemma \ref{sub_multi}, rewritten for convenience.

\begin{lem}
Let $A_{i_1\; \cdots\; i_m}$ be an $m$-stochastic reducible tensor with $m \geq 3$ and let $I$ be a set of indexes defined as above with $\#I = k$. Then the following holds:
\begin{enumerate}
    \item $k \geq N/2$~,
    \item Tensor $A_{i_1\; \cdots\; i_m}$ is reducible with respect to any of its indexes, with the same set $I$ of indexes values,
    \begin{equation}
    \forall r \in \{1\; \cdots\; m\}~,~ \forall~ i_r \in I~,~ i_1,\cdots, i_m \notin I  ~~, A_{i_1\;\cdots\;i_r\;\cdots,i_m} = 0~,
    \end{equation}
    \item The truncation of the tensor $A_{i_1\;\cdots\;i_m}$:
    \begin{equation}\label{sub_stochastic_1_app}
        A'_{i_1\;\cdots\; i_m} = A_{i_1\;\cdots\; i_m} ~,~ i_1, \cdots, i_m \notin I~,
    \end{equation}
    is also an $m$-stochastic tensor. Moreover, for any $m$-stochastic tensor  $A_{i_1\;\cdots\; i_m}$, if there exist a subset $I \subset \{1,\cdots,N\}$ such that a tensor in  \eqref{sub_stochastic_1_app} is also an $m$-stochastic tensor, then $A_{i_1\;\cdots\;i_m}$ is reducible.
\end{enumerate} 
\end{lem}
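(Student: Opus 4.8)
Throughout I will write $J = \{1,\dots,N\}\setminus I$, so that $\#J = N-k$, and I will use repeatedly that every entry of $A$ is nonnegative while summing over any single index (the others held fixed) gives $1$. I plan to treat the three claims in the order (ii), (iii), (i), because (i) rests on a separate tool and is the genuine obstacle; note also that $I$ and $J$ are both nonempty, since reducibility presupposes a proper nonempty $I$ with $J\neq\emptyset$.

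For the reducibility in every index (claim (ii)) the plan is a double-counting argument starting from the given reducibility in index $1$. Fix a target index $s\in\{2,\dots,m\}$. First fix $i_1\in I$ and $i_t\in J$ for all $t\neq 1,s$, and sum $\sum_{i_s}A=1$; the terms with $i_s\in J$ have $i_1\in I$ and all remaining indices in $J$, hence vanish by hypothesis, so $\sum_{i_s\in I}A=1$. Summing this over the $k(N-k)^{m-2}$ admissible choices of $(i_1,\{i_t\})$ shows that the mass of $A$ on $\{i_1\in I,\ i_s\in I,\ i_t\in J\}$ equals $k(N-k)^{m-2}$. Next, summing the plain relation $\sum_{i_1}A=1$ over $i_s\in I$, $i_t\in J$ again gives the total $k(N-k)^{m-2}$, now split into the $i_1\in I$ contribution (just computed) plus the $i_1\in J$ contribution; subtracting forces $\sum_{i_1\in J,\ i_s\in I,\ i_t\in J}A=0$, and nonnegativity yields $A=0$ whenever $i_s\in I$ and every other index lies in $J$, i.e. reducibility in index $s$. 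As $s$ is arbitrary, (ii) follows.

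Claim (iii) is then short. For the forward direction fix any index $r$ and values $i_t\in J$ for $t\neq r$; the full relation $\sum_{i_r\in\{1,\dots,N\}}A=1$ splits as $\sum_{i_r\in J}A+\sum_{i_r\in I}A$, and the second sum vanishes by the reducibility in index $r$ just established, so $\sum_{i_r\in J}A=1$ and the truncation $A'$ is $m$-stochastic. The converse needs neither (i) nor (ii): if the truncation to some $J=\{1,\dots,N\}\setminus I$ is $m$-stochastic then $\sum_{i_1\in J}A=1$ for all $i_2,\dots,i_m\in J$, which together with $\sum_{i_1\in\{1,\dots,N\}}A=1$ forces $\sum_{i_1\in I}A=0$, hence $A=0$ on $\{i_1\in I,\ i_2,\dots,i_m\in J\}$ by nonnegativity, which is exactly reducibility.

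The real obstacle is the size bound (i), $k\geq N/2$, which must exploit $m\geq 3$ (for $m=2$ a permutation matrix is reducible with small $I$). My plan is to extract from $A$ a genuine bistochastic matrix carrying a large forced zero block. Fix one $i_1^{*}\in I$ and set $\hat M_{ab}=(N-k)^{-(m-3)}\sum_{i_4,\dots,i_m\in J}A_{i_1^{*}\,a\,b\,i_4\cdots i_m}$ for $a,b\in\{1,\dots,N\}$ (for $m=3$ the marginal sum is empty and $\hat M_{ab}=A_{i_1^{*}ab}$). Summing over $a$ and using $\sum_{i_2}A=1$, and likewise over $b$ with $\sum_{i_3}A=1$, shows every row and column of $\hat M$ sums to $1$, so $\hat M$ is bistochastic; and for $a,b\in J$ each summand has $i_1^{*}\in I$ with all other indices in $J$, so $\hat M$ vanishes on the entire $(N-k)\times(N-k)$ block $J\times J$. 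A counting step now closes the argument: the $N-k$ columns indexed by $J$ carry total mass $N-k$, all of which must lie in the rows indexed by $I$ (the $J\times J$ block being zero), while the $k$ rows indexed by $I$ carry total mass only $k$; hence $N-k\leq k$, i.e. $k\geq N/2$. This is precisely the Frobenius–König obstruction forbidding a $p\times q$ zero submatrix with $p+q>N$ in a doubly stochastic matrix, and the only points needing care are the normalisation of $\hat M$ and the insistence that the auxiliary indices be summed over $J$ rather than all of $\{1,\dots,N\}$, so that the zero block is preserved.
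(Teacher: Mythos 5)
Your proof is correct and follows essentially the same strategy as the paper's: double counting of partial sums combined with nonnegativity for all three claims, with (iii) deduced from (ii) exactly as in the text. The only cosmetic difference is in part (i), where you fix a single $i_1^{*}\in I$ and package the count as a Frobenius--K\"onig zero-block obstruction for the bistochastic slice $\hat M$, whereas the paper sums over all $i_1,i_2\in I$ and compares $\sum_{i_1,i_2\in I,\,i_3\notin I}A=k(N-k)$ with the upper bound $k^2$; both routes produce the same inequality $N-k\le k$.
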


\begin{proof}
To prove the first statement let us fix the value of indices $i_4, \cdots, i_m \notin I$, and consider the following sum:
\begin{equation}
\sum_{i_1, i_2 \in I,~ i_3\notin I} A_{i_1, \cdots i_m} = \sum_{i_1 \in I, i_3 \notin I} \sum_{i_2} A_{i_1, \cdots i_m} = \sum_{i_1 \in I, i_3 \notin I} 1 = k (N-k)~.
\end{equation}
This expression can be bounded from above by
\begin{equation}
\sum_{i_1, i_2 \in I,~ i_3\notin I} A_{i_1\; \cdots\; i_m} \leq
\sum_{i_1, i_2 \in I} \sum_{i_3} A_{i_1\; \cdots\; i_m} = \sum_{i_1, i_2 \in I} 1 = k^2~.
\end{equation}
Hence we obtain an inequality $k^2 \geq k(N-k)$, implying that $k \geq N/2$.

To prove the second statement notice that
\begin{equation}
\sum_{i_1,\cdots,i_m \notin I} A_{i_1\;\cdots\;i_m} = \sum_{i_2, \cdots, i_m \notin I} \sum_{i_1} A_{i_1\;\cdots\;i_m} = \sum_{i_2, \cdots, i_m \notin I} 1 = (N-k)^{m-1}~.
\end{equation}
Then for each index $i_r$ two equalities hold,

\begin{equation}
\sum_{i_1,\cdots,i_m \notin I}\sum_{i_r} A_{i_1\;\cdots\;i_m} = \sum_{i_1, \cdots, i_m \notin I} 1 = (N-k)^{m-1}~,
\end{equation}
\begin{equation}
\begin{aligned}
& \sum_{i_1,\cdots,i_m \notin I}\sum_{i_r} A_{i_1\;\cdots\;i_m} = \sum_{i_1, \cdots,i_r,\cdots, i_m \notin I} A_{i_1\;\cdots\;i_m}  + \sum_{i_1,\cdots,i_m \notin I}\sum_{i_r \in I} A_{i_1\;\cdots\;i_m} =\\
& \hspace{3.6 cm}= (N-k)^{m-1} + \sum_{i_1,\cdots,i_m \notin I}\sum_{i_r \in I} A_{i_1\;\cdots\;i_m}~.
\end{aligned}
\end{equation}
Hence $\sum_{i_1,\cdots,i_m \notin I}\sum_{i_r \in I} A_{i_1\;\cdots\;i_m} = 0$, which by nonnegativity of the entries $A_{i_1\;\cdots\;i_m}$ implies the second statement.

To show the last statement we need to demonstrate that for each index $i_r$ the sum $\sum_{i_r \notin I} A_{i_1\;\cdots\;i_m} = 1$, if all other indexes are also not in $I$. But by the second statement, if all other indexes are not in $I$, we get
\begin{equation}
\sum_{i_r \notin I} A_{i_1\;\cdots\;i_m} = \sum_{i_r} A_{i_1\;\cdots\;i_m} - \sum_{i_r \in I} A_{i_1\;\cdots\;i_m}  = \sum_{i_r} A_{i_1\;\cdots\;i_m} - 0 = 1~,
\end{equation}
which ends this part of the third statement. To prove the second part of the last statement notice that for each value of indexes $i_2, \cdots i_m \notin I$ the following relations hold,
\begin{equation}
\begin{aligned}
\sum_{i_1} A_{i_1\;\cdots\; i_m} = 1 \text{ and } \sum_{i_1 \notin I} A_{i_1\;\cdots\; i_m} = 1~.
\end{aligned}
\end{equation}
Hence for all $i_1 \in I$, $i_2, \cdots i_m \notin I$ the following entries vanich $A_{i_1\; \cdots\;i_m} = 0$, which ends the proof.
\end{proof}

\subsection{Generalized eingenvectors of quantum multi-stochastic channels}\label{app:quant_fix_point}

In this Appendix we prove Theorem \ref{general_eigenvector_quantum}, using quantum counterparts of techniques applied earlier in the classical case of
$m$-stochastic tensors. To simplify the notation let us denote the maximally mixed state of order $N$ as $\rho_* = \id/N$.
Let us start with the following lemma:

\begin{lem}\label{lem_mstochC}
Let $\Phi_D$ be quantum $m$-stochastic channel. Then for any sequence of density matrices $\{\rho_1, \cdots, \rho_{m-1}\}$, one of which is a maximally mixed state $\rho_*$, the following equality holds:
\begin{equation}\label{eq_lem3.2_0}
\Phi_D[\rho_1\otimes \cdots \otimes \rho_*\otimes\cdots\otimes \rho_{m-1} ] = \rho_* 
\end{equation}
\end{lem}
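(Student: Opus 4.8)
The plan is to mirror the proof of the classical Lemma \ref{lem_mstoch1} line by line, replacing probability vectors with density matrices, summation over an index with a partial trace, and the column-sum condition $\sum_{i_k}A_{i_1\cdots i_m}=1$ with its quantum counterpart. The first observation I would record is that the defining requirement of an $m$-stochastic channel --- that inserting $\mathbb{I}$ at the $k$-th slot yields a trace-preserving map for every $k$ --- is equivalent, via the Choi representation, to the family of partial-trace conditions $\Tr_k[D]=\mathbb{I}_{N^{m-1}}$ for all $k\in\{1,\cdots,m\}$; this is exactly the generalization of the three tristochasticity conditions $\Tr_1[D]=\Tr_2[D]=\Tr_3[D]=\id_{N^2}$ stated earlier. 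I would also note that the maximally mixed state is invariant under transposition, $\rho_*^\top=\rho_*=\mathbb{I}/N$.

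Next, without loss of generality I would let $\rho_*$ occupy the $k$-th input slot and write out the left-hand side of \eqref{eq_lem3.2_0} explicitly,
\[
\Phi_D[\rho_1\otimes\cdots\otimes\rho_*\otimes\cdots\otimes\rho_{m-1}]
=\Tr_{\mathrm{inputs}}\!\left[D\Big(\mathbb{I}\otimes\rho_1^\top\otimes\cdots\otimes\tfrac{\mathbb{I}}{N}\otimes\cdots\otimes\rho_{m-1}^\top\Big)\right].
\]
The key step is to perform the partial trace over the $k$-th subsystem first. Since the operator inserted there is the scalar $\mathbb{I}/N$ and all the remaining inserted operators are supported on disjoint subsystems, the factor $1/N$ comes out and the partial trace acts only on $D$, producing $\Tr_k[D]=\mathbb{I}_{N^{m-1}}$ by $m$-stochasticity. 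What remains is $\tfrac{1}{N}\mathbb{I}$ on the output subsystem tensored with the surviving $\rho_\ell^\top$; tracing those out contributes a factor $\Tr[\rho_\ell^\top]=\Tr[\rho_\ell]=1$ for each remaining input, so the expression collapses to $\tfrac{1}{N}\mathbb{I}=\rho_*$, which is the claim.

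I do not expect a genuine obstacle here --- the argument is a short computation --- but the step requiring the most care is the clean extraction of the partial trace over subsystem $k$: one uses the standard identity $\Tr_k[D\,(M\otimes\mathbb{I}_k)]=\Tr_k[D]\,M$ valid for any operator $M$ supported on the complementary subsystems, which lets the surviving input operators slide past $\Tr_k$ so that the $m$-stochasticity identity $\Tr_k[D]=\mathbb{I}$ can be invoked exactly as the classical proof invokes $\sum_{i_k}A_{i_1\cdots i_m}=1$. The only real content beyond bookkeeping is therefore the equivalence between ``each slot gives a valid channel'' and the partial-trace conditions on $D$, which I would justify through the trace-preservation half of the Choi criterion recalled in Appendix \ref{app:on_chanels}.
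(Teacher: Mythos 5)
Your proof is correct, but it follows a different route from the paper's. The paper proves Lemma \ref{lem_mstochC} by induction on $m$: for $m=2$ the channel is unital and fixes $\rho_*$, and for $m>2$ one "partially applies" the channel by defining an $(m-1)$-stochastic channel $\Phi_{D,\rho_{m-1}}$ with the last input absorbed into a reduced dynamical matrix, then invokes the inductive hypothesis. You instead give the direct quantum transliteration of the classical computation in Lemma \ref{lem_mstoch1}: pull the scalar $1/N$ out of the $k$-th slot, use the identity $\Tr_k[D(M\otimes\mathbb{I}_k)]=\Tr_k[D]\,M$ to reduce to the partial-trace condition $\Tr_k[D]=\mathbb{I}_{N^{m-1}}$, and collapse the remaining traces to $1$. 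Both arguments are sound; yours is more elementary and avoids induction entirely, at the cost of having to make explicit the equivalence between the paper's operational definition of $m$-stochasticity (every slot yields a trace-preserving map) and the partial-trace conditions on $D$ --- an equivalence the paper asserts only for the tristochastic case, but which extends to general $m$ exactly as you argue, since tensor products of density matrices span the full operator space. The paper's inductive route has the side benefit of introducing the reduced channels $\Phi_{D,\rho}$, which are conceptually reused in the subsequent reducibility arguments, but it is not logically necessary for this lemma.
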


\begin{proof}
We prove the lemma by induction.
Firstly notice that if $m = 2$, $D$ is just a unital channel and hence $D[\rho_*] =\rho_*$. Otherwise, if $m > 2$, one can define a new $m-1$ stochastic channel as
\begin{equation}\label{eq_lem3.2_1}
\Phi_{D,\rho_{m-1}}[\rho_1 \otimes \cdots\otimes \rho_{m-2}] = \Tr_{1, \cdots m-2}[ D_{\rho_{m-1}}(\rho_1^\top \otimes \cdots \otimes \rho_{m-2}^\top) ]~.
\end{equation}
Where the tensor $D_{\rho_{m-1}}$ has a form,
\begin{equation*}
\Phi_{D,\rho_{m-1}} = \Tr_{m-1}[D ( \mathbb{I}^{\otimes(m-1)} \otimes \rho_{m-1}^\top )]~.
\end{equation*}
If the last state in the tensor product in \eqref{eq_lem3.2_1} is equal to $\rho_*$, we can define an $m-1$ stochastic channel $\Phi_{D,\rho_{m-1}}$ using the first state in an analogous way. By induction, one can assume that the statement \eqref{eq_lem3.2_0} is true for the $m-1$ stochastic channels. Thus we have, 
\begin{equation*}
\Phi_D[\rho_1\otimes \cdots \otimes \rho_*\otimes \cdots \otimes\rho_{m-1} ] = \Phi_{D,\rho_{m-1}}[\rho_1 \otimes \cdots \otimes \rho_*\otimes \cdots\otimes \rho_{m-2}] = \rho_*~,
\end{equation*}
which ends the proof.
\end{proof}

The next theorem is a direct quantum generalization of Theorem \ref{multisto_fixed}.

\begin{thm}\label{D_multisto_fixed}
Let $\sigma^{(0)}$ be a density matrix in the interior of $\Omega_N$ and let $\rho^{(0)}$ be a density matrix at the boundary of $\Omega_N$ such that $\sigma^{(0)} = \alpha \rho_* + (1 -\alpha)\rho^{(0)}$ for some $\alpha \in (0,1]$, Let us define a sequences $\{\sigma^{(n)}\}$, $\{\rho^{(n)}\}$ by $\sigma^{(n+1)} = \Phi_D[\sigma^{(n)} \otimes \cdots \otimes \sigma^{(n)}]$, $\rho^{(n+1)} = \Phi_D[\rho^{(n)} \otimes \cdots\otimes \rho^{(n)}]$ for any $m$-stochastic channel $\Phi_D$. Then if $m > 2$ the sequence $\{\sigma^{(n)}\}$ converges to $\rho_*$.
\end{thm}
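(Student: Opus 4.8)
The plan is to mirror the classical argument of Theorem \ref{multisto_fixed} line by line, replacing probability vectors by density matrices, the flat vector $\vec{e}$ by $\rho_* = \id/N$, and the Euclidean norm by the Hilbert--Schmidt norm $\|X\|_2^2 = \Tr(X^\dagger X)$. First I would use that $\Phi_D$ is multilinear in its $m-1$ arguments, which is immediate from Definition \ref{quant_mstoch_def} since its action is a partial trace of $D$ against a tensor product of transposed inputs. Expanding $\sigma^{(1)} = \Phi_D[\sigma^{(0)}\otimes\cdots\otimes\sigma^{(0)}]$ with $\sigma^{(0)} = \alpha\rho_* + (1-\alpha)\rho^{(0)}$ then produces $2^{m-1}$ terms, which I group according to the number $k$ of tensor factors equal to $\rho_*$.

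The second step applies Lemma \ref{lem_mstochC}, which collapses every term containing at least one factor $\rho_*$ to $\rho_*$ itself. Summing the binomial weights $\binom{m-1}{k}\alpha^k(1-\alpha)^{m-1-k}$ over $k \ge 1$ gives $1 - (1-\alpha)^{m-1}$, while the unique term with no $\rho_*$ contributes $(1-\alpha)^{m-1}\rho^{(1)}$. This yields $\sigma^{(1)} = (1-(1-\alpha)^{m-1})\rho_* + (1-\alpha)^{m-1}\rho^{(1)}$, and the same expansion iterated (so that the weight of the $\rho^{(n)}$ term satisfies $\beta_{n+1} = \beta_n^{m-1}$) gives by induction, with $\beta_n := (1-\alpha)^{(m-1)^n}$,
\begin{equation*}
\sigma^{(n)} = (1-\beta_n)\,\rho_* + \beta_n\,\rho^{(n)}~.
\end{equation*}

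The core estimate is then a Hilbert--Schmidt norm computation. Using $\Tr(\rho_*^2) = 1/N$ together with the key identity $\Tr(\rho_*\,\rho^{(n)}) = \tfrac{1}{N}\Tr(\rho^{(n)}) = 1/N$, the cross term combines with the $\rho_*$ term via $(1-\beta_n)^2 + 2(1-\beta_n)\beta_n = 1-\beta_n^2$, so that $\|\sigma^{(n)}\|_2^2 = \tfrac{1}{N}(1-\beta_n^2) + \beta_n^2\,\|\rho^{(n)}\|_2^2$. Each $\rho^{(n)}$ is a density matrix, hence its purity is bounded, $\|\rho^{(n)}\|_2^2 \le 1$; and for $m>2$ and $\alpha\in(0,1]$ one has $1-\alpha\in[0,1)$ and $(m-1)^n\to\infty$, so $\beta_n \to 0$. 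Therefore $\|\sigma^{(n)}\|_2^2 \to 1/N$.

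Finally I would upgrade this norm convergence to convergence of the states through the orthogonal decomposition valid for every density matrix $\rho$: since $\Tr(\rho\,\rho_*) = 1/N = \Tr(\rho_*^2)$, expanding $\Tr((\rho-\rho_*)^2)$ gives $\|\rho\|_2^2 = \|\rho - \rho_*\|_2^2 + 1/N$. Applied to $\sigma^{(n)}$ this shows $\|\sigma^{(n)} - \rho_*\|_2^2 = \|\sigma^{(n)}\|_2^2 - 1/N \to 0$, whence $\sigma^{(n)} \to \rho_*$ in Hilbert--Schmidt norm and thus in any norm by finite dimensionality. The only genuinely delicate point --- and precisely where the classical proof was terse, appealing loosely to $\vec{e}$ being the unique simplex point of minimal norm --- is this last step; I expect it to be the main obstacle, and I would resolve it by making the Pythagorean identity explicit, which both closes the argument and makes transparent why the minimal-purity state $\rho_*$ is singled out.
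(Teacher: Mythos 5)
Your proposal follows essentially the same route as the paper's proof in Appendix B.2: the same multilinear binomial expansion collapsed by Lemma \ref{lem_mstochC}, the same closed form $\sigma^{(n)} = (1-\beta_n)\rho_* + \beta_n\rho^{(n)}$, and the same Hilbert--Schmidt purity estimate. Your explicit Pythagorean identity $\|\sigma^{(n)}-\rho_*\|_2^2 = \|\sigma^{(n)}\|_2^2 - 1/N$ is a welcome tightening of the paper's terser final step (which only appeals to $\rho_*$ being the unique minimal-norm state), but it is a refinement of the same argument rather than a different one.
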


\begin{proof}

Let us derive a general formula for $\sigma^{(n)}$ using $\rho^{(n)}$. By explicit calculation of $\sigma^{(1)}$ one obtains:

\begin{equation}
\label{sigma_1}
\begin{aligned}
& \sigma^{(1)} = \Phi_D[\sigma^{(0)}\otimes \cdots \otimes \sigma^{(0)} ] = \Phi_D\left[\left(\alpha \rho_* + (1 - \alpha) \rho^{(0)} \right) \otimes \cdots \otimes\left(\alpha \rho_* + (1 - \alpha) \rho^{(0)} \right) \right] = \\
& ~~~~~ = \sum_{k = 1}^{m-1} \alpha^k (1 - \alpha)^{m-1-k}   \overbrace{ \Phi_D[{\rho_*}^{\otimes k} \otimes {\rho^{(0)}}^{\otimes(m-1-k)} ]+ \cdots }^{\binom{m-1}{k} \text{ terms with $\rho_*$ appearing $k$ times}} = \\
& ~~~~~ = \sum_{k = 1}^{m-2} \alpha^k (1 - \alpha)^{m-1-k} \binom{m-1}{k}  \rho_* + (1 - \alpha)^{m-1} \rho^{(1)} = \\
& ~~~~~ = \left(1 - (1 - \alpha)^{m-1} \right)\rho_* + (1 - \alpha)^{m-1} \rho^{(1)}
\end{aligned}
\end{equation}

The fourth equality above follows from Lemma \ref{lem_mstochC}.
Repeating the above calculations one finds that

\begin{equation}
\label{sigma_n}
\sigma^{(n)} = \left(1 - (1 - \alpha)^{(m-1)^n} \right)\rho_* + (1 - \alpha)^{(m-1)^n} \rho^{(n)}~.
\end{equation}

Now, we may calculate the $|| \cdot ||_2$ norms of $\sigma^{(n)}$ defined by the Hilbert-Schmidt inner product, $\la\rho, \sigma \ra_{HS} = \Tr[\rho^\dagger \sigma]$.
The squared norms of $\sigma^{(n)}$ reads:

\begin{equation}
\begin{aligned}
&||\sigma^{(n)} ||_2^2 = \\
& = \left(1 - (1 - \alpha)^{(m-1)^n} \right)^2 ||\rho_*||_2^2 + 2 \left(1 - (1 - \alpha)^{(m-1)^n} \right)(1 - \alpha)^{(m-1)^n} \la\rho_*,\rho^{(n)}  \ra_{HS} + (1 - \alpha)^{2 (m-1)^n } ||\rho^{(n)}||_2^2 = \\
& =  \left[\left(1 - (1 - \alpha)^{(m-1)^n} \right)^2 + 2(1 - \alpha)^{(m-1)^n} \left(1 - (1 - \alpha)^{(m-1)^n} \right)  \right] \frac{1}{N} + (1 - \alpha)^{2 (m-1)^n } ||\rho^{(n)}||_2^2 = \\
& =  \left(1 - (1 - \alpha)^{2 (m-1)^n } \right) ||{\rho_*}^2||_2^2 + (1 - \alpha)^{2 (m-1)^n } ||\rho^{(n)}||_2^2 ~.
\end{aligned}
\end{equation}
Since $||\rho^{(n)}||_2^2$ is bounded by $1$ and $(1 - \alpha)^{2 (m-1)^n }$ converges to $0$ as $n \to \infty$, we get
\begin{equation}
\lim_{n \to \infty} ||\sigma^{(n)} ||_2^2 = ||\rho_*||_2^2 = \frac{1}{N}~.
\end{equation}
Finally because each $\sigma^{(n)}$ is an element of $\Omega_N$, which is a closed set and $\rho_*$ is the only density matrix inside $\Omega_N$  with norm equal $1/\sqrt{N}$, the limit of interest is
\begin{equation*}
\lim_{n \to \infty} \sigma^{(n)} = \rho_* = \frac{\id}{N} ~.
\end{equation*}
\end{proof}

For further work, we need a quantum counterpart of reducible $m$-stochastic tensor provided in Definition \ref{channel_reducible_1}.
Let us introduce a short-hand notation: $P_{V}$ denotes projection onto a subspace $V \subset \H$ while $P_{V^\perp}$ a projection onto the complementary subspace, orthogonal to $V$. Then for an $m$-stochastic channel, we write the projections of its dynamical matrix in consecutive subsystems,

\begin{equation}
    D|_{V_1, \H, V_3^\perp, \cdots}
    := (P_{V_1}\otimes \id \otimes P_{V_3^\perp}\otimes\cdots)
    D (P_{V_1}\otimes \id \otimes P_{V_3^\perp}\otimes\cdots)^\dagger~.
\end{equation}

The partial trace of the multi-partite matrix $D$ over a subspace $V$ on $k$-th subsystem will be written as

\begin{equation}
    \Tr_{k_V}[D] = \Tr_k[D|_{\H,\cdots,\underset{k\text{-th place}}{V},\cdots, \H}]
\end{equation}

Note that in Definition \ref{channel_reducible_1} we can equivalently use density matrices supported on subspaces $V$ and $V^\perp$ instead of pure states. Hence we use these formulations interchangeably.

\begin{lem}\label{D_sub_multi}
Let $\Phi_D$ be an $m$-stochastic reducible channel with $m \geq 3$ and let $V$ be a proper subspace defined as above with dim$(V) = k$. Then the following properties hold:
\begin{enumerate}
    \item $k \geq N/2$~,
    \item Channel $\Phi_D$ is reducible with respect to any of subsystems, with the same subspace $V$,
    \begin{equation}
    \begin{aligned}
    &\forall t \in \{1, \cdots, m\}~,~ \forall~ |\psi_t\ra \in V,~ |\psi_1\ra,\cdots, |\psi_m\ra \in V^\perp, \\
    & ~~~~~~~~ \Tr[D(|\phi_1\ra\la\phi_1| \otimes \cdots \otimes |\phi_t\ra\la\phi_t| \otimes\cdots\otimes |\phi_m\ra\la\phi_m|)  ] = 0~,
    \end{aligned}
    \end{equation}
    \item The projection of the matrix $D$ of a form $D|_{V^\perp,\cdots,V^\perp}$
    also gives $m$-stochastic channel $\Phi_{D|_{V^\perp,\cdots,V^\perp}}: \Omega_{N-k}^m\to\Omega_{N-k}$. Moreover, for any $m$-stochastic channel  $\Phi_D$, if there exists a subspace $V \subset \H$, such that a $D|_{V^\perp,\cdots,V^\perp}$ defines an $m$-stochastic channel, acting on $\Omega_{N-k}$, then $\Phi$ is reducible.
\end{enumerate} 
\end{lem}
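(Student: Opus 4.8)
The plan is to mirror the proof of the classical Lemma \ref{sub_multi} step by step, under the dictionary that replaces a sum over an index restricted to $I$ (resp. its complement) by the trace of $D$ against the projector $P_V$ (resp. $P_{V^\perp}$) in the corresponding slot. Two elementary facts do the heavy lifting. First, $m$-stochasticity of $\Phi_D$ is equivalent to the partial-trace conditions $\Tr_r[D]=\id_{N^{m-1}}$ for every $r\in\{1,\dots,m\}$, the exact analogue of $\sum_{i_r}A_{i_1\cdots i_m}=1$. Second, since $D\geq 0$, a product of projectors such as $Q:=P_V^{(t)}\otimes\bigotimes_{s\neq t}P_{V^\perp}^{(s)}$ satisfies $\Tr[D\,Q]=0$ if and only if $QDQ=0$; thus Definition \ref{channel_reducible_1} is equivalent to the vanishing of $\Tr[D(P_V\otimes P_{V^\perp}^{\otimes(m-1)})]$, and the same mechanism upgrades every ``zero-trace'' identity below to a statement about operators. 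I will also repeatedly use that $M\leq M'$ together with $D\geq 0$ gives $\Tr[DM]\leq\Tr[DM']$.

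For statement (i) I would first compute, using $\Tr_2[D]=\id$ and then splitting $\id^{(2)}=P_V^{(2)}+P_{V^\perp}^{(2)}$, the identity $\Tr[D(P_V\otimes P_V\otimes P_{V^\perp}^{\otimes(m-2)})]+\Tr[D(P_V\otimes P_{V^\perp}^{\otimes(m-1)})]=k(N-k)^{m-2}$. The second summand vanishes by reducibility, so the first equals $k(N-k)^{m-2}$, the quantum counterpart of the classical equality $k(N-k)$. Enlarging the projector on the third slot from $P_{V^\perp}$ to $\id$ (legitimate since $D\geq 0$) and applying $\Tr_3[D]=\id$ bounds the same quantity from above by $k^2(N-k)^{m-3}$, exactly as $k^2$ bounds $k(N-k)$ classically. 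Comparing the two gives $(N-k)\leq k$, hence $k\geq N/2$ (with $N>k$ guaranteed because $V$ is proper).

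Statement (ii) follows the same template: from $\Tr_1[D]=\id$ one gets $\Tr[D(P_{V^\perp}^{\otimes m})]=(N-k)^{m-1}$ after discarding the reducible term, and then for each $r$ the expansion of $\id^{(r)}$ inside $\Tr[D(P_{V^\perp}^{(1)}\otimes\cdots\otimes\id^{(r)}\otimes\cdots\otimes P_{V^\perp}^{(m)})]=(N-k)^{m-1}$ (via $\Tr_r[D]=\id$) isolates $\Tr[D(P_{V^\perp}^{(1)}\otimes\cdots\otimes P_V^{(r)}\otimes\cdots\otimes P_{V^\perp}^{(m)})]=0$, which is reducibility with respect to slot $r$. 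For statement (iii), writing $D':=P_{V^\perp}^{\otimes m}\,D\,P_{V^\perp}^{\otimes m}$ and using $\Tr_r[Q^{(r)}DQ^{(r)}]=\Tr_r[Q^{(r)}D]$ for the projector $Q=P_{V^\perp}$, I split $\Tr_r[D]=\id$ into its $V$- and $V^\perp$-parts in slot $r$; the $V$-part, compressed to $V^\perp$ on the remaining slots, is a positive operator whose trace is the slot-$r$ reducibility quantity from (ii) and hence vanishes, leaving $\Tr_r[D']=\id_{(V^\perp)^{\otimes(m-1)}}$, i.e. $m$-stochasticity of the truncation. The converse reuses the two evaluations $\Tr[D(\id^{(1)}\otimes P_{V^\perp}^{\otimes(m-1)})]=(N-k)^{m-1}$ and $\Tr[D\,P_{V^\perp}^{\otimes m}]=\Tr[D']=(N-k)^{m-1}$ (the latter from $m$-stochasticity of $D'$), whose difference forces $\Tr[D(P_V\otimes P_{V^\perp}^{\otimes(m-1)})]=0$.

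I expect the main obstacle to be purely technical bookkeeping rather than conceptual: keeping track of which slots carry $P_V$, which carry $P_{V^\perp}$, and which are left free, and justifying at each stage that a vanishing trace of a positive operator upgrades to the vanishing of that operator — the step that makes ``reducibility with respect to slot $r$'' usable inside partial traces in (iii). The identity $\Tr_r[QDQ]=\Tr_r[QD]$ and the monotonicity $\Tr[DM]\leq\Tr[DM']$ for $M\leq M'$, $D\geq 0$, are where all the non-commutative care is concentrated.
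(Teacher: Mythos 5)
Your proposal is correct and follows essentially the same route as the paper's proof in Appendix B.2: the same counting via the partial-trace (stochasticity) conditions to get $k(N-k)\le k^2$ up to a common factor, the same isolation of the slot-$r$ reducibility term as a difference of two equal quantities, and the same positivity upgrade from vanishing trace to vanishing operator. The only differences are cosmetic — you test against projectors $P_{V^\perp}$ in all spare slots where the paper fixes pure states (hence your extra factors $(N-k)^{m-2}$, $(N-k)^{m-3}$), and you phrase part (iii) as the operator identity $\Tr_r[D']=\id$ rather than checking trace preservation on arbitrary inputs.
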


\begin{proof}
Let $\{|e_1 \ra, \cdots |e_k\ra\}$ be an arbitrary basis on $V$ and $\{|f_1\ra, \cdots |  f_{N-k}\ra\}$ be an arbitrary basis on $V^\perp$.
To prove the statement $(i)$ let us fix any vectors $|\psi_4\ra, \cdots, |\psi_m\ra \in V^\perp$, and consider the following expression,
\begin{equation}\label{eq1_dot}
\begin{aligned}
&\Tr_{1_V, 2_V, 3_{V^\perp}}[D|_{|\psi_4\ra\la\psi_4|\otimes\cdots\otimes |\psi_m\ra\la\psi_m|}]=
\Tr_{1_V, 3_{V^\perp}} [\Tr_2[ [D|_{|\psi_4\ra\la\psi_4|\otimes\cdots\otimes |\psi_m\ra\la\psi_m|}]] = \\
& = \sum_{r = 1}^k\sum_{s = 1}^{N-k} \Tr[D|_{|e_r\ra\la e_r|\otimes ~\cdot~\otimes|f_s\ra\la f_s|\otimes|\psi_4\ra\la\psi_4|\otimes\cdots\otimes |\psi_m\ra\la\psi_m|}] = \sum_{r = 1}^k\sum_{s = 1}^{N-k} 1 = k (N-k)~.
\end{aligned}
\end{equation}
But the same sum can also be bounded from above by 
\begin{equation}\label{eq2_dot}
\begin{aligned}
&\Tr_{1_V, 2_V, 3_{V^\perp}}[D|_{|\psi_4\ra\la\psi_4|\otimes\cdots\otimes |\psi_m\ra\la\psi_m|}] \leq
\Tr_{1_V, 2_V}[\Tr_3[D|_{|\psi_4\ra\la\psi_4|\otimes\cdots\otimes |\psi_m\ra\la\psi_m|}]] = \\
& \sum_{r = 1}^k\sum_{s = 1}^{k} \Tr[D|_{|e_r\ra\la e_r|\otimes|e_s\ra\la e_s|\otimes ~\cdot~\otimes|\psi_4\ra\la\psi_4|\otimes\cdots\otimes |\psi_m\ra\la\psi_m|}] = 
\sum_{r = 1}^k\sum_{s = 1}^{k} 1 = k^2~.
\end{aligned}
\end{equation}
Any blank space $\cdot$  in equations \eqref{eq1_dot} and \eqref{eq2_dot} means that the dynamical matrix is truncated analogically as in Lemma \ref{lem_mstochC} with projections inserted onto every subsystem except the second or the third one.
Hence we arrive at the inequality, $k^2 \geq k(N-k)$, so that $k \geq N/2$.

Now we focus on the statement $(ii)$.
In the following calculations, we omit the last steps from \eqref{eq1_dot} and \eqref{eq2_dot}, i.e. the expansion of the traces.
Notice that
\begin{equation}
\Tr_{1_{V^\perp},\cdots,m_{V^\perp}}[D] = \Tr_{2_{V^\perp},\cdots,m_{V^\perp}}[ \Tr_1[D]]  = \sum_{r_2, \cdots, r_m =1}^{N-k} 1 = (N-k)^{m-1}~,
\end{equation}
where in the second to last step we used $m$-stochasticity of $\Phi_D$ combined with Lemma \ref{lem_mstochC}.
Then for each subsystem labelled by $t$ we write in a similar manner,

\begin{equation}
\Tr_{1_{V^{\perp}},\cdots,m_{V^{\perp}}}[\Tr_{t}[ D]] = \sum_{i_1, \cdots, i_m = 1}^{N-k} 1 = (N-k)^{m-1}~.
\end{equation}
But we also have
\begin{equation}
\begin{aligned}
& \Tr_{1_{V^{\perp}},\cdots,m_{V^{\perp}}}[\Tr_{t}[ D]] = \Tr_{1_{V^{\perp}},\cdots,m_{V^{\perp}}}[\Tr_{t_{V^\perp}}[ D]]  + \Tr_{1_{V^{\perp}},\cdots,m_{V^{\perp}}}[\Tr_{t_V}[ D]] =\\
& = (N-k)^{m-1} +Tr_{1_{V^{\perp}},\cdots,t_{V},\cdots,m_{V^{\perp}}}[ D].
\end{aligned}
\end{equation}
Hence one obtains
\begin{equation*}
\begin{aligned}
& 0 = \Tr_{1_{V^{\perp}},\cdots,t_{V},\cdots,m_{V^{\perp}}}[ D] = \sum_{s_1,\cdots,s_m = 1}^{N-1}\sum_{s_t = 1}^k \Tr[D(|f_{s_1}\ra\la f_{s_1}|\otimes\cdots\otimes|e_{s_t}\ra\la e_{s_t}|\otimes \cdots\otimes|f_{s_m}\ra\la f_{s_m}| )] = \\
&  = \sum_{s_1,\cdots,s_m = 1}^{N-1}\sum_{s_t = 1}^k \Tr\left[|f_{s_1}\ra\la f_{s_1}| D[|f_{s_2}\ra\la f_{s_2}|\otimes\cdots\otimes|e_{s_t}\ra\la e_{s_t}|\otimes |f_{s_m}\ra\la f_{s_m}|]\right]~,
\end{aligned}
\end{equation*}
which is a sum of non negative terms, because $D[|f_{s_2}\ra\la f_{s_2}|\otimes\cdots\otimes |f_{s_m}\ra\la f_{s_m}|]$ is semipositive define density matrix, and so is $|f_{s_1}\ra\la f_{s_1}|$. Therefore, each of those terms is equal to $0$, which by multi-linearity of the channel $\Phi_D$ and by the fact that bases $|e_i\ra$, $|f_i\ra$ can be chosen arbitrarily in each subsystem, proves the statement $(ii)$.

In the statement $(iii)$ the semi positivity of the channel $D|_{V^\perp,\cdots V^\perp}$ follows immediately from semi positivity of  $D$. Thus we are left to show  the $m$-stochasticity of the channel $\Phi{D|_{V^\perp,\cdots,V^\perp}}$ that for any subsystem labeled by $t$ and any density matrices $\rho_1,\cdots,\rho_m$ supported in $V^\perp$ one has
\begin{equation*}
\Tr_{V^\perp}\left[\Tr_{1,\cdots,t-1,t+1\cdots,m}[D|_{V^\perp,\cdots,V^\perp}(\rho_1\otimes\cdots\otimes\underset{t\text{-th place}}{\id_{V^\perp}}\otimes\cdots\otimes\rho_m)]\right] \overset{?}{=} 1
\end{equation*}

Since all density matrices $\rho_l$ are supported in $V^\perp$, we get the following chain of equalities:
\begin{equation}
\begin{aligned}
&\Tr_{V^\perp}\left[ \Tr[D|_{V^\perp,\cdots,V^\perp}(\rho_1\otimes\cdots\otimes\underset{t\text{-th place}}{\id_{V^\perp}}\otimes\cdots\otimes\rho_m)]\right]  = \Tr_{V^\perp}\left[\Tr[D(\rho_1\otimes\cdots\otimes\underset{t\text{-th place}}{\id_{V^\perp}}\otimes\cdots\otimes\rho_m)]\right] = \\
& = \Tr[D(\rho_1\otimes\cdots\otimes\underset{t\text{-th place}}{\id}\otimes\cdots\otimes\rho_m)] - \Tr_V[D(\rho_1\otimes\cdots\otimes\underset{t\text{-th place}}{\id_V}\otimes\cdots\otimes\rho_m)] = 1-0 = 1,
\end{aligned}
\end{equation}
where the second to last step follows from the multi-linearity and the second statement. 

To prove the second part of the statement $(iii)$ let us choose an arbitrary set of states $|\psi_2\ra,\cdots,|\psi_m\ra \in V^\perp$, so that
\begin{equation}
\Tr [ D|_{V^\perp,\cdots,V^\perp}[|\psi_2\ra\la\psi_2|\otimes\cdots\otimes|\psi_m\ra\la\psi_m|]] = \Tr_{V^\perp}[D[|\psi_2\ra\la\psi_2|\otimes\cdots\otimes|\psi_m\ra\la\psi_m|]] =  1~,
\end{equation}
and
\begin{equation}
\Tr [D[|\psi_2\ra\la\psi_2|\otimes\cdots\otimes|\psi_m\ra\la\psi_m|]] =  1.
\end{equation}
Hence for all $|\psi_1\ra \in V$ one has
\begin{equation*}
\Tr [D(|\psi_1\ra\la\psi_1|\otimes|\psi_2\ra\la\psi_2|\otimes\cdots\otimes|\psi_m\ra\la\psi_m|)] = \la\psi_1|D[|\psi_2\ra\la\psi_2|\otimes\cdots\otimes|\psi_m\ra\la\psi_m|]    |\psi_1\ra = 0 ~,
\end{equation*}
which ends the proof.
\end{proof}

Now we are ready to present a proof of Theorem \ref{general_eigenvector_quantum}, which we invoke here for completeness.

\begin{thm}
Let $\Phi_D$ be an $m$-stochastic channel acting on the $N$-dimensional states and $m \geq 3$. Then each subspace $V\subset \H$ such that
\begin{equation*}
\forall~ |\phi_1\ra \in V~,~ |\psi_2\ra,\cdots, |\phi_m\ra \in V^\perp ~~\Tr[D(|\phi_1\ra\la\phi_1| \otimes |\phi_2\ra\la\phi_2| \otimes\cdots\otimes |\phi_m\ra\la\phi_m|)  ] = 0 ~,
\end{equation*}
corresponds to a single density matrix being a generalized eigenvector of $\Phi_D$ and vice versa. Moreover, each such eigenvector is a maximally mixed state on the subspace $V^{\perp}$ and the corresponding eigenvalue is $1$.
\end{thm}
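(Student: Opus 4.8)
The plan is to mirror, line by line, the classical proof of Theorem \ref{m_stoch_tensor_finale}, replacing probability vectors by density matrices, the index set $I$ of vanishing coordinates by the kernel $V=\ker\rho$ of the eigenvector, and the three classical ingredients by their quantum counterparts already established above: Lemma \ref{lem_mstochC} (invariance of the maximally mixed state), Theorem \ref{D_multisto_fixed} (convergence of the iterated map to $\rho_*$), and Lemma \ref{D_sub_multi} (structural properties of reducible channels). The argument splits into a forward implication (each eigenvector determines an admissible subspace), a reverse implication (each admissible subspace produces the claimed eigenvector), and a uniqueness statement giving the bijection.

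For the forward direction I would take a generalized eigenvector $\rho$, i.e.\ $\Phi_D[\rho\otimes\cdots\otimes\rho]=\lambda\rho$. If $\rho$ is full rank it lies in the interior of $\Omega_N$, so Theorem \ref{D_multisto_fixed} forces $\rho=\rho_*=\id/N$ and $\lambda=1$, which is the case $V=\{0\}$. Otherwise set $V=\ker\rho$, so that $\rho$ is supported on $V^\perp$, and test the eigenvalue equation against an arbitrary $|\phi_1\rangle\in V$:
\[
0=\lambda\langle\phi_1|\rho|\phi_1\rangle=\langle\phi_1|\Phi_D[\rho\otimes\cdots\otimes\rho]|\phi_1\rangle=\Tr[D(|\phi_1\rangle\langle\phi_1|\otimes\rho^\top\otimes\cdots\otimes\rho^\top)].
\]
Expanding each $\rho^\top$ in its spectral decomposition turns the right-hand side into a sum of terms of the form $\Tr[D(|\phi_1\rangle\langle\phi_1|\otimes P_2\otimes\cdots\otimes P_m)]$, each non-negative since $D\geq0$ is traced against a tensor product of projectors. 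Hence every term vanishes individually, which is exactly the reducibility condition of the statement and certifies that $\Phi_D$ is reducible with respect to $V$.

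For the reverse direction I would start from a subspace $V$ satisfying the stated vanishing condition, so $\Phi_D$ is reducible with respect to $V$ in the sense of Definition \ref{channel_reducible_1}. Lemma \ref{D_sub_multi}(iii) then guarantees that the truncation $D|_{V^\perp,\cdots,V^\perp}$ is an $m$-stochastic channel on $\Omega_{N-k}$; Lemma \ref{lem_mstochC} shows its maximally mixed state $\id_{V^\perp}/(N-k)$ is a fixed point, and Theorem \ref{D_multisto_fixed}, applied to this truncated channel, shows it is the \emph{only} full-rank fixed point inside $\Omega_{N-k}$. Finally Lemma \ref{D_sub_multi}(ii) (reducibility with respect to every subsystem, with the same $V$) promotes this state, viewed inside $\Omega_N$, to a genuine eigenvector of the full $\Phi_D$ with eigenvalue $1$. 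Combined with the uniqueness from Theorem \ref{D_multisto_fixed} and the fact that $V=\ker\rho$ is recovered from $\rho$, this yields the claimed bijection and the maximally mixed form of each eigenvector.

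The main obstacle I expect is the bookkeeping in the forward direction: one must verify that the restriction of $\rho$ to its support $V^\perp$ is a genuinely full-rank eigenvector of the truncated channel, so that Theorem \ref{D_multisto_fixed} applies and pins the form down, and one must treat the transpose with care, since the spectral decomposition of $\rho^\top$ is supported on the complex conjugate of $\operatorname{supp}\rho$ rather than on $V^\perp$ itself. Relating the derived vanishing condition back to the subspace $V^\perp$ appearing in the statement is the one nontrivial point; it can be handled by working in a real eigenbasis or by observing that the reducibility condition is invariant under the relevant conjugation. Everything else is a routine transcription of the classical argument into the Hilbert-Schmidt setting.
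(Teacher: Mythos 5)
Your proposal follows essentially the same route as the paper's own proof in Appendix B.2: the full-rank case is handled by the convergence result (Theorem \ref{D_multisto_fixed}), the rank-deficient case by testing the eigenvalue equation on $\ker\rho$ and using positivity of $D$ to force the reducibility condition, and the converse plus the maximally mixed form by combining Lemma \ref{lem_mstochC} with parts (ii) and (iii) of Lemma \ref{D_sub_multi} applied to the truncated channel $D|_{V^\perp,\cdots,V^\perp}$. Your remark about the transpose shifting the support of $\rho^\top$ to the conjugate subspace is a point the paper silently glosses over, so flagging and resolving it is a small improvement rather than a deviation.
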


\begin{proof}
We demonstrate first that each eigenvector $\rho$ corresponds to a subspace $V$ described in the theorem and then show that each subspace $V$ corresponds to an eigenvector $\rho$. Along the way, we also prove the second statement of the theorem. 

Let $\rho$ be an eigenvector of $D$ to the eigenvalue $\lambda$. If all eigenvalues of $\rho$ are greater than zero, then by  Theorem \ref{D_multisto_fixed} one has $\rho = \rho_*$ and $\lambda = 1$, since $\rho$ belongs to the interior of $\Omega_N$ so the theorem is satisfied with $V = 0$.
Otherwise, there exist a subspace $V$ spanned by eigenvectors $|e_i\ra$ corresponding to zero eigenvalues of $\rho$, hence $P_{V^\perp} \rho P_{V^\perp} = \rho$. For each such $|e_i\ra$ one has
\begin{equation}\label{eqD_thm_proof1}
\begin{aligned}
0 & = \lambda \la e_{i_1}|\rho|e_{i_1}\ra=  \la e_{i_1} \Phi_D[\rho\otimes\cdots\otimes\rho] e_{i_1}\ra =\\
& =\sum_{i_2, \cdots i_m = 1}^{N-k} \lambda_{i_2}\cdots\lambda_{i_m}    \Tr[D(| e_{i_1}\ra\la e_{i_1}|\otimes |f_{i_2}\ra\la f_{i_2}|\otimes \cdots \otimes|f_{i_m}\ra\la f_{i_m}| )] \geq \\
& \geq \sum_{i_2, \cdots i_m = 1}^{N-k} \lambda_{\text{min}}^{m-2} \Tr[D(| e_{i_1}\ra\la e_{i_1}|\otimes |f_{i_2}\ra\la f_{i_2}|\otimes \cdots \otimes|f_{i_m}\ra\la f_{i_m}| )] = \\
& = \lambda_{\text{min}}^{m-2} \Tr_{1_V , 2_{V^\perp},\cdots,m_{V^\perp}}[D]~,
\end{aligned}
\end{equation}
where each $\lambda_{i_r}$ and $|f_{i_r}\ra$ are a $i_r$-th positive eigenvalues and the corresponding eigenvector of $\rho$ and $\lambda_{\text{min}}$ is the smallest positive eigenvalue of $\rho$. Because $D$ is semipositive and its trace on $V\otimes V^\perp \otimes \cdots\otimes V^\perp \subset \H\otimes\cdots\otimes\H$ is equal to zero, the dynamical matrix $D$ must be identically zero on this subspace, which proves the thesis in one direction.

By Lemma \ref{D_sub_multi} the dynamical matrix $D$ can be truncated to $D_{V^\perp,\cdots,V^\perp}$, which defines an $m$-stochastic channel. On the subspace $V^\perp$, $\rho$ is an eigenvector of $\Phi_{D_{V^\perp,\cdots,V^\perp}}$ with no eigenvalues equal to $0$. Therefore, by Theorem \ref{D_multisto_fixed} applied to the channel $\Phi_{D_{V^\perp,\cdots,V^\perp}}$, the state $\rho|_{V^\perp}$ must be a totally mixed state on subspace $V^\perp$ and the generalized eigenvalue corresponding to $\rho$ is equal $1$.

To prove the implication in the opposite direction we use the third point of Lemma \ref{D_sub_multi}. Because a channel $\Phi_{D_{V^\perp,\cdots,V^\perp}}$ is $m$-stochastic, by Lemma \ref{lem_mstochC} it also has a generalized eigenvector $\rho := \rho_*$ on the subspace $V^\perp$. 
Moreover, since $\Phi_D$ is reducible we know the second point of Lemma \ref{D_sub_multi} also holds, hence
\begin{equation}
\begin{aligned}
& \Phi_D[\rho\otimes\cdots\otimes\rho] = P_{V^\perp}\Phi_D[\rho\otimes\cdots\otimes\rho] P_{V^\perp} = \Tr_{2,\cdots, m}[D (P_{V^\perp}\otimes\rho^\top\otimes\cdots\otimes\rho^\top) ] = \\
& = \Tr_{2,\cdots,m }[D|_{V^{\perp}\cdots V^\perp}(P_{V^\perp}\otimes\rho^\top\otimes\cdots\otimes\rho^\top) ]  =  \Phi_{D|_{V^{\perp}\cdots V^\perp}}[\rho\otimes\cdots\otimes\rho] = \rho~,
\end{aligned}
\end{equation}

so $\rho$ is also an eigenvector of $D$.
\end{proof}

\subsection{Coherification of a tristochastic tensor of dimension $2$}\label{app:qutib_gen_coch}

In this Appendix we propose a way to construct a coherification of an arbitrary two-dimensional tristochastic tensor represented by a cube of size 2 
\begin{equation}
\label{eqAx}
A(x) = \left(\begin{matrix}
x & 1-x\\
1-x & x\\
\end{matrix}\right.
\left|\begin{matrix}
1-x & x\\
x & 1-x\\
\end{matrix}\right)~,
\end{equation}
with $x \in [0,1]$.
Notice, that the fixed diagonal values of the dynamical matrix $D$ imply that all of the Kraus operators describing the operation $\Phi_x$ have the form:
\begin{equation}\label{2x2x2_Kraus}
K_i = \left(\begin{matrix}
a_i & b_i & c_i & d_i \\
e_i & f_i & g_i & h_i \\
\end{matrix}\right)~,
\end{equation}
with $||a||^2 = ||f||^2 = ||g||^2 = ||d||^2 = x$ and $||e||^2 = ||b||^2 = ||c||^2 = ||h||^2 = 1-x$.

Moreover, the condition $\sum_i K_i^\dagger K_i = \id$ leads to the following constrains
\begin{equation}\label{2x2x2_orto}
 \left(\begin{matrix}
||a||^2 + ||e||^2 & \la a|b \ra + \la e|f \ra & \la a|c \ra + \la e|g \ra & \la a|d \ra + \la e|h \ra \\
\la b|a \ra + \la f|e \ra & ||b||^2 + ||f||^2 & \la b|c \ra + \la f|g \ra & \la b|d \ra + \la f|h \ra \\
\la c|a \ra + \la g|e \ra & \la c|b \ra + \la g|f \ra & ||c||^2 + ||g||^2 & \la c|d \ra + \la g|h \ra \\
\la d|a \ra + \la h|e \ra & \la d|b \ra + \la h|f \ra & \la d|c \ra + \la h|g \ra & ||d||^2 + ||h||^2 \\
\end{matrix}\right) = 
\left(\begin{matrix}
1 & 0 & 0 & 0 \\
0 & 1 & 0 & 0 \\
0 & 0 & 1 & 0 \\
0 & 0 & 0 & 1 \\
\end{matrix}\right)
\end{equation}

These conditions may be satisfied using only two Kraus operators. 
Assume that the following pairs of vectors are orthogonal, $|a\ra \perp |d\ra$, $|e\ra \perp |h\ra$, $|b\ra \perp |c\ra$, $|f\ra \perp |g\ra$, so the only relations left to satisfy are:
\begin{equation*}
\begin{aligned}
& \la a|b \ra = - \la e|f \ra~, &&  \la a|c \ra = - \la e|g \ra~, \\
& \la d|b \ra = - \la h|f \ra~, &&  \la d|c \ra = - \la h|g \ra~. \\
\end{aligned}
\end{equation*}

Then we can define four ortonormal bases $\{|\tilde{a}\ra, |\tilde{d}\ra\}$, $\{|\tilde{e}\ra, |\tilde{h}\ra\}$, $\{|\tilde{b}\ra, |\tilde{c}\ra\}$, $\{|\tilde{f}\ra, |\tilde{g}\ra\}$, by rescaling the above pairs of vectors. Let $U$ be a unitary matrix changing the basis $\{|\tilde{a}\ra, |\tilde{d}\ra\}$ into a basis $\{|\tilde{e}\ra, |\tilde{h}\ra\}$. Then if the bases $\{|\tilde{b}\ra, |\tilde{c}\ra\}$ and $\{|\tilde{f}\ra, |\tilde{g}\ra\}$ are connected by $-U$, the condition  \eqref{2x2x2_orto} is satisfied. Such dependency between bases $\{|\tilde{a}\ra, |\tilde{d}\ra\}$, $\{|\tilde{e}\ra, |\tilde{h}\ra\}$, $\{|\tilde{b}\ra, |\tilde{c}\ra\}$, $\{|\tilde{f}\ra, |\tilde{g}\ra\}$ is our second assumption.

Using these assumptions one can calculate the norm $2$ coherence \cite{Korzekwa_coherifying}:

\begin{equation}
\begin{aligned}
&C_2(\Phi_x) =  \sum_{ijkl}|(\rho_\Phi)^{i\;k\;}_{\;j\;l}|^2 - \sum_{ik}|(\rho_\Phi)^{i\;k\;}_{\;i\;k}|^2 = \frac{1}{4^2}  \sum_{ijkl}|(J_\Phi)^{i\;k\;}_{\;j\;l}|^2  - \frac{1}{4^2} \sum_{ik}|(J_\Phi)^{i\;k\;}_{\;i\;k}|^2=\\
& = \frac{1}{16} 2\Big\{ |\la a| b\ra|^2 + |\la a| c\ra|^2 + |\la a| e\ra|^2 + |\la a| f\ra|^2 + |\la a| g\ra|^2 + |\la a| h\ra|^2 + \\
& ~~~~~~~+ |\la d| b\ra|^2 + |\la d| c\ra|^2 + |\la d| e\ra|^2 + |\la d| f\ra|^2 + |\la d| g\ra|^2 + |\la d| h\ra|^2 + \\
& ~~~~~~~+ |\la b|e \ra|^2 + |\la b|f \ra|^2 + |\la b|g \ra|^2 + |\la b|h \ra|^2 + |\la c|e \ra|^2 + |\la c|f \ra|^2 + |\la c|g \ra|^2 + |\la c|h \ra|^2  + \\
& ~~~~~~~+ |\la e|f \ra|^2 + |\la e|g \ra|^2 + |\la h|f \ra|^2 + |\la h|g \ra|^2 \Big\}~.
\end{aligned}
\end{equation}
Note, that in the first two lines we have rescaled projections of consecutive vectors onto a basis $\{|\tilde{a}\ra, |\tilde{d}\ra\}$, 
in the second to last line we have rescaled projections onto basis vectors $|\tilde{b}\ra$, and $|\tilde{c}\ra$, and in the last line rescaled projections onto basis vectors $|\tilde{e}\ra$, and $|\tilde{h}\ra$. Hence the formula for the coherence of the map resulting from the classical tensor $A(x)$ defined in eq. \eqref{eqAx} reads
\begin{equation}
C_2(\Phi_x) = \frac{1}{8} \left\{x(4(1-x)+ 2x)+ (1-x)(2(1-x)+2x) + 2(1-x) x \right\} = \frac{1}{4}(1 + 2x - 2x^2) ~.
\end{equation}

Notice that there cannot exist a channel $\Omega_2 \otimes \Omega_2 \to \Omega_2$ described by only one Kraus operator, because it would imply that such channel is a unitary between vector spaces of different dimensions. Therefore, described coherifications use a minimal number of Kraus operators. The squared norms of each Kraus operator rescaled by $1/4$ are eigenvalues of $\rho_\Phi$ and each of them is equal to:

\begin{equation*}
\begin{aligned}
& \frac{1}{4} ||K_i ||^2 = \frac{1}{16} \left[|a_i|^2 + |b_i|^2 + |c_i|^2 + |d_i|^2 + |e_i|^2 + |f_i|^2 + |g_i|^2 + |h_i|^2\right] =  \\
& = \frac{1}{4}\left[|\la i| a \ra|^2 + |\la i| d \ra|^2  + |\la i| e \ra|^2 + |\la i| h \ra|^2+|\la i| b \ra|^2 + |\la i| c \ra|^2+|\la i| f \ra|^2 + |\la i| g \ra|^2\right] = \\
& = \frac{1}{4} [x + (1-x) + x + (1-x)] = \frac{1}{2}~,
\end{aligned}
\end{equation*}
where we used the fact that $\{|a\ra$, $|d\ra\}$, $\{|e\ra$, $|h\ra\}$, $\cdots$ are rescaled basis vectors. Hence the entropic coherence \cite{Korzekwa_coherifying} of the map $\Phi_x$ is equal to

\begin{equation*}
C_S(\Phi_x) = S(\text{diag}(\rho_{\Phi})) - S(\rho_\Phi) = -x\ln\left(\frac{x}{4}\right) - (1-x) \ln\left(\frac{1-x}{4}\right) + \ln(2)~,
\end{equation*}
which for $x = 1/2$ goves the maximal value $C_S^{max} = 4 \ln(2)$.

\subsection{Entangling power and gate typicality of cocherifications of permutation tensor of dimension $2$}\label{app:ep_gt}

In this Section we study the properties of the unitary matrix $U$: $\H \otimes \H \to \H \otimes \H$ which we used to construct a coherification of permutation tensor:
\begin{equation*}
D[\rho_1, \rho_2] = \Tr_2[U(\rho_1 \otimes \rho_2)U^\dagger] 
\end{equation*}
We focus on two quantities describing the action of two-qubit channels: entangling power and gate typicality. 
Following \cite{ep_Zanardi} we define entangling power of a bipartitie gate $U$:
\begin{equation}
    e_p(U) = \frac{N+1}{N-1}\overline{\mathcal{E}(U|\psi_A\ra|\psi_B\ra)}^{\psi_A \psi_B}~.
\end{equation}

As a normalized linear entropy $\mathcal{E}(|\psi\ra) = 1 - \Tr_A[\Tr_B(|\psi\ra\la\psi|)^2]$ of output averaged over all possible input product states with Haar measure.
It can be calculated by the entanglement entropies of gates $U$ and $US$, where $S$ is the \textit{SWAP} operator: $S|\psi_A\ra|\psi_B\ra = |\psi_B\ra|\psi_A\ra$ \cite{ep_Zanardi},

\begin{equation*}
e_p(U) = \frac{1}{E(S)}\left(E(U) + E(SU) - E(S) \right)~.
\end{equation*}

Entangling power $e_p(U) \in [0,1]$ tells us how entangled is on average the output of the gate $U$ if the input is separable. If one performs the partial trace over a single subsystem, as it is required to obtain the convolution channel $\Phi_D$, then the larger the entangling power, the more mixed the output of the channel $\Phi_D$ is.

The complementary quantity to entangling power is gate typicality $g_t(U)$ \cite{JMZL_entanglement_measures}

\begin{equation*}
g_t(U) = \frac{1}{2 E(S)}\left(E(U) - E(US) + E(S) \right)~.
\end{equation*}

The gate typicality, $g_t(U) \in [0,1]$, tells us how much on average the systems are "interchanged". For example for local rotations $g_t(U_A \otimes U_B) = 0$ and for swap $g_t(S) = 1$. Hence, if one performs a partial trace over the second subsystem after $U$, to obtain convolution channel $\Phi_D$, the gate typicality tells us, which of the input state is more relevant to the output of $D$. 

For the convolution of two qubits $U = U_4$  given in \eqref{two_q_channel}, one obtains by direct calculation:

\begin{equation}
e_p(U_4) = \frac{2}{3} ~,~~~ g_t(U_4) = \frac{1}{6}\left(3 - \cos\theta \right).
\end{equation}

Therefore the entangling power of 2-qubit channels $U_4$ is maximal for any values of parameters $\alpha$, $\theta$, $\phi$. The gate typicality depends only on the parameter $\theta$ and attains all possible values $\left[\frac{1}{3},\frac{2}{3}\right]$ for a given entangling power. 

\begin{figure}[h]
    \centering
        \includegraphics[height=3in]{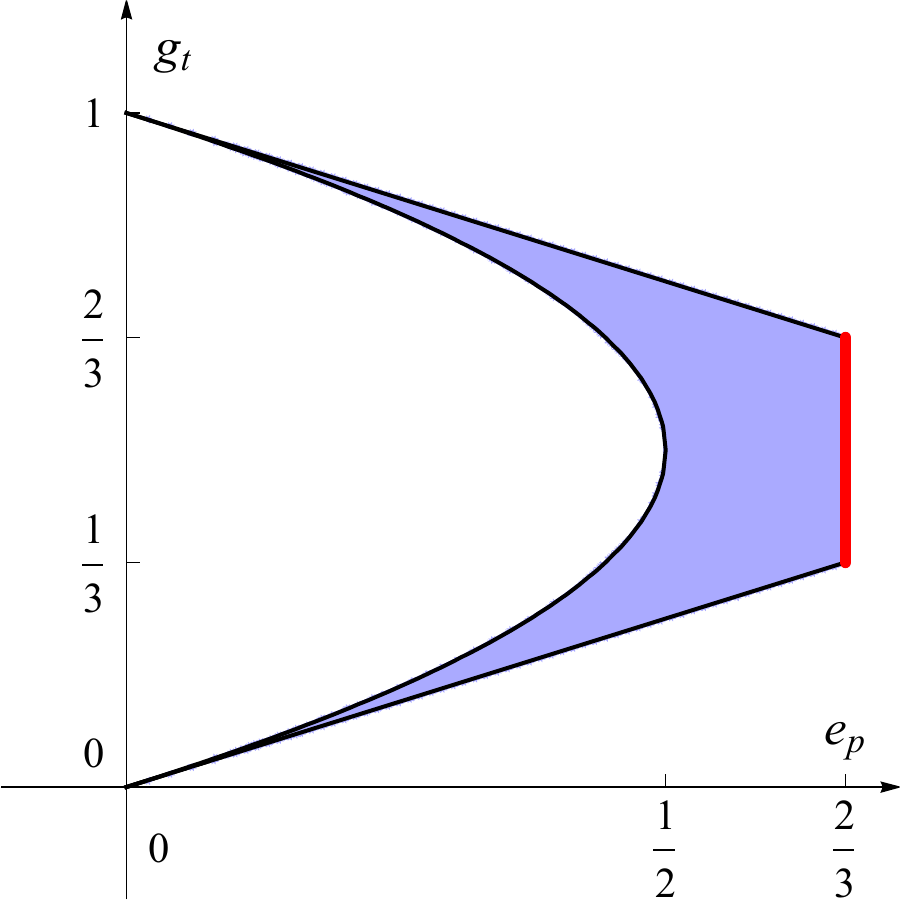}
     \centering
     \caption{\label{fig:ep_gf} Projection of the set of two-qubit unitary gates $U \in U(4)$ into the plane: entangling power $e_p$ versus gate typicality $g_t$,
     limited by black and red borders. The gates $U_4$ defined in \eqref{two_q_channel}, corresponding to quantum convolution, occupy the red bold border with the highest possible entanglement power $e_p = 2/3$ for two-qubit gates.}
\end{figure}

\section{Explicit forms of convolution of quantum states}\label{app:explicit_form}

The aim of this appendix is to explicitly provide the formulas for convolution between two quantum states.
First let us show the dynamical matrix of the optimal coherification of bit permutation tensor \eqref{bigchoi}, using a convenient parametrization from \eqref{two_q_channel_krauss}:

\begin{equation}
\label{bigchoi_2}
D_T = \left(
\begin{array}{cccccccc}
 1 & 0 & 0 & 0 & 0 & e^{-i \alpha} \cos{\frac{\theta}{2}} & e^{-i \alpha} \sin{\frac{\theta}{2}} & 0 \\
 0 & 0 & 0 & 0 & 0 & 0 & 0 & 0 \\
 0 & 0 & 0 & 0 & 0 & 0 & 0 & 0 \\
 0 & 0 & 0 & 1  & 0 & e^{-i \alpha}e^{-i \phi} \sin{\frac{\theta}{2}} & -e^{-i \alpha}e^{-i \phi} \cos{\frac{\theta}{2}} & 0 \\
 0 & 0 & 0 & 0 & 0 & 0 & 0 & 0 \\
 e^{i \alpha} \cos{\frac{\theta}{2}} & 0 & 0 & e^{i \alpha}e^{i \phi} \sin{\frac{\theta}{2}} & 0 & 1 & 0 & 0 \\
 e^{i \alpha} \sin{\frac{\theta}{2}} & 0 & 0 & -e^{i \alpha}e^{i \phi} \cos{\frac{\theta}{2}} & 0 & 0 & 1 & 0 \\
 0 & 0 & 0 & 0 & 0 & 0 & 0 & 0 \\
\end{array}
\right)~.
\end{equation}

As we argued in Section \ref{sec:Qconvolution}, this channel is easier to study as a composition of unitary channel $U_4(\alpha, \theta,\phi)$ described in \eqref{two_q_channel}, followed by a partial trace over the second subsystem. 
Within this class of unitary maps, at least a couple of well-known bipartite gates are hidden. 

For instance, for $\alpha = \phi = 0$, $\theta = \frac{\pi}{2}$,  the gate typicality is equal $g_t = \frac{1}{3}$ and

\begin{equation*}
U_4\left(0,0,\frac{\pi}{2}\right) = 
DCNOT,
\end{equation*}

whereas for $\alpha = 0$, $\phi = \pi$, $\theta = 0$, the  gate typicality is equal $g_t = \frac{2}{3}$ and the resulting gate is

\begin{equation*}
U_4\left(0,\pi,0\right) = 
 S \cdot CNOT = CNOT \cdot DCNOT
\end{equation*}

To obtain the final expression for the convolution it is sufficient to plug matrix $D$ from \eqref{bigchoi_2} to formula \eqref{quantum_convolution}.

For larger systems the convolution, in the sense of optimal coherification (Definition \ref{def_coch}) of permutation tensor, is also easier to study and implement in the form of a unitary channel followed by a partial trace. This form was already presented in the Theorem \ref{lem_channel_generation}, together with an example of the qutrit channel.


\vspace{1 cm}
\bibliographystyle{plain}

\end{document}